\newtheorem{theorem}{Theorem}
\newtheorem{lemma}{Lemma}
\newtheorem{definition}{Definition}
\newtheorem{proposition}{Proposition} 
\newtheorem{assumption}{Assumption}
\algnewcommand\algorithmicparfor{\textbf{parfor}}
\algnewcommand\algorithmicpardo{\textbf{do}}
\algnewcommand\algorithmicendparfor{\textbf{end\ parfor}}
\begin{document}

% \setstretch{1}
\linespread{1}

	\bibliographystyle{IEEEtran} 
	\title{
% Privacy-Enhanced Wireless Personalized Federated Learning: Quantization-Aided Differential Privacy and Fairness-Driven Scheduling 
Enhancing Convergence, Privacy and Fairness for Wireless Personalized Federated Learning: Quantization-Assisted Min-Max Fair Scheduling
}

\author{Xiyu~Zhao,
Qimei~Cui,~\IEEEmembership{Senior~Member,~IEEE}, Ziqiang Du, Wei~Ni, Weicai~Li,~\IEEEmembership{Graduate~Student~Member,~IEEE},\\ Xi~Yu,~\IEEEmembership{Graduate~Student~Member,~IEEE}, Ji Zhang, 
Xiaofeng Tao,~\IEEEmembership{Senior Member,~IEEE},\\ and
Ping Zhang,~\IEEEmembership{Fellow,~IEEE}
 %\\and H. Vincent Poor,~\IEEEmembership{Life Fellow,~IEEE}  
		% \thanks{X. Zhao is with the Beijing University of Posts and Telecommunications, Beijing 100876, China. }
        \thanks{Manuscript received 28 October 2024; revised 18 December 2024; accepted 22 April 2025. 
        This work was supported by the National Key Research and Development Program of China under Grant No. 2020YFB1806804, and the Beijing Natural Science Foundation Program under Grand No.L232002.
        % The work was supported by the Joint funds for Regional Innovation and Development of the National Natural Science Foundation of China (Grant No. U21A20449) and the National Key Research and Development Program of China (Grant No. 2020YFB1806804).
        }
  \thanks{X. Zhao, Q. Cui, W. Li, X. Yu, X. Tao, and P. Zhang are with the School of Information and Communication Engineering, Beijing University of Posts and Telecommunications, Beijing 100876, China. 
  X. Zhao is also with the School of Computing, Macquarie University, Sydney, NSW 2109, Australia.
  Q. Cui, X. Tao, and P. Zhang are also with the Department of Broadband Communication, Peng Cheng Laboratory, Shenzhen 518055, China (e-mail: \{zxy, cuiqimei, liweicai, yusy, taoxf, pzhang\}@bupt.edu.cn).}	
  \thanks{W.~Ni is with the School of Computing, Macquarie University, Sydney, NSW 2109, Australia. (e-mail: wei.ni@mq.edu.au).}
  \thanks{Z.~Du and J.~Zhang are with China Telecom, Sichuan Branch (15328856888@189.cn, zhangji@sctel.com.cn).}
% \thanks{Y.~Xin is with the College of Big Data and Information Engineering, Guizhou University, Guiyang, China (e-mail: mec.xxu22@gzu.edu.cn).}
  % \thanks{Q. Z. Sheng is with the School of Computing, Macquarie University, Sydney, NSW 2109, Australia (e-mail: michael.sheng@mq.edu.au).}  
\thanks{ \textit{Corresponding authors: Q.~Cui, J.~Zhang} (e-mail: cuiqimei@bupt.edu.cn, zhangji@sctel.com.cn)}}
 
	\maketitle
	\begin{abstract}
 Personalized federated learning (PFL) offers a solution to balancing personalization and generalization by conducting federated learning (FL) to guide personalized learning (PL). Little attention has been given to wireless PFL (WPFL), where privacy concerns arise. Performance fairness of PL models is another challenge resulting from communication bottlenecks in WPFL. 
This paper exploits quantization errors to enhance the privacy of WPFL and proposes a novel quantization-assisted Gaussian differential privacy (DP) mechanism. 
We analyze the convergence upper bounds of individual PL models by considering the impact of the mechanism (i.e., quantization errors and Gaussian DP noises) and imperfect communication channels on the FL of WPFL. 
By minimizing the maximum of the bounds, we design an optimal transmission scheduling strategy that yields min-max fairness for WPFL with OFDMA interfaces.
This is achieved by revealing the nested structure of this problem to decouple it into subproblems solved sequentially for the client selection, channel allocation, and power control, and for the learning rates and PL-FL weighting coefficients.
Experiments validate our analysis and demonstrate that our approach substantially outperforms alternative scheduling strategies by $87.08\%$, $16.21\%$, and $38.37\%$ in accuracy, the maximum test loss of participating clients, and fairness (Jain's index), respectively.
	\end{abstract}
	\begin{IEEEkeywords}
	Personalized federated learning, differential privacy, quantization, min-max fairness, scheduling.
	\end{IEEEkeywords}

	\section{Introduction}
  Personalized federated learning (PFL) has been recently proposed to account for both generalization and personalization. It can strike a balance between personalized models and the global model, e.g., via a global-regularized multi-task framework~\cite{li2021ditto}. {Only several studies~\cite{sami2023over,mestoukirdi2023user,zhao2023ensemble,you2023hierarchical} have attempted to integrate PFL under a wireless setting, compared to significant efforts on wireless federated learning (FL). Moreover, existing works on PFL, e.g.,~\cite{li2021ditto,li2019fedmd,fallah2020personalized,wei2023personalized,you2022semi,t2020personalized,li2020federated,huang2021personalized,luo2022adapt,zhang2023fedala,zhang2023federated}, have focused on the accuracy or performance distribution fairness under the assumption of ideal (wired) communication environments.}

 %  	\begin{figure}[!t]
	% \centering
 %        \includegraphics[width=0.45\textwidth]{pic/diagram.pdf}
	% \caption{The diagram of DP-Ditto: In each round, every client trains its local model and its PL model based on its local dataset and the FL global model broadcast by the server in the last round. Then, the clients perturb and upload their FL local models to the server, and the server aggregates the perturbed FL local models into the FL global model and broadcasts the global model.}
	% \label{fig:PFL_model}
	% \end{figure}

A challenge arising from wireless PFL (WPFL) is the communication bottleneck, especially when dynamic, noisy, and resource-constrained wireless channels are considered. The channels of geographically dispersed clients can differ significantly and change frequently. 
While more participating clients and, hence, larger overall training datasets are conducive to the convergence of WPFL, they could congest wireless resources and hinder FL model uploading~\cite{cui2024overview}. 
{While wireless FL (WFL) focuses primarily on optimizing the performance of the global model across all clients~\cite{chen2020joint,9798757,9815289,10064038}, WPFL emphasizes the performance of each individual personalized model. In WPFL, the challenge arises from how heterogeneous wireless communication conditions affect the performance of different personalized models. Therefore, designing WPFL scheduling and parameter adjustment methods to efficiently handle resource constraints and imperfect channels becomes more challenging compared to traditional WFL, which optimizes a single global model shared across all clients.}

{Another challenge is that WPFL is prone to privacy leakage due to the incorporation of FL under the wireless setting. Differential privacy (DP)~\cite{abadi2016deep} can be applied to protect the privacy of WPFL. While the majority of the existing studies, e.g.,~\cite{wei2020federated,zhao2020local,truex2020ldp,yuan2023amplitude,liu2024differentially,chen2022feddual}, have straight-forwardly added artificial noise to implement DP in FL/PFL, {several pioneering studies e.g.,~\cite{lang2023joint,lyu2024secure,wang2024p2cefl}}, have explored privacy protection through quantization since quantization inherently brings errors and can help perturb FL local models. 
Stochastic quantization with random quantization lattices~\cite{lang2023joint} or random mappings~\cite{lyu2024secure,wang2024p2cefl} has been considered for mathematical tractability, which is unfortunately less compatible with practical wireless systems. 
{
Specifically, random mapping requires additional random mapping functions~\cite{lyu2024secure,wang2024p2cefl}. For stochastic dithering quantization, synchronized random seeds for the generation of dithering and scaling factors are required during decoding~\cite{lang2023joint}.
These processes introduce complexities that can be difficult to manage within the real-time constraints of communication networks.} 
% {These methods introduce complexities that are difficult to manage within the real-time constraints of communication networks, which is incompatible with practical wireless systems.}
Moreover, fairness is critical to WPFL and can be difficult to achieve. It is affected by scheduling strategies in wireless channels with limited bandwidths. 
Neither of~\cite{lang2023joint} and~\cite{lyu2024secure} has given thought to fairness. }

This paper presents a novel quantization-assisted Gaussian DP mechanism and transmission scheduling strategy for WPFL with orthogonal frequency division multiple access (OFDMA) interfaces, where quantization errors are exploited to enhance the privacy of WPFL while fairness is strengthened among personalized models through scheduling. Specifically, we analyze the convergence upper bound of WPFL in the presence of errors caused by quantization, DP, and imperfect communication channels. Based on the convergence upper bound, the transmission scheduling strategy is designed to achieve the min-max fairness of WPFL by jointly optimizing the client selection, channel allocation, power control, and the weighting coefficients between personalized learning (PL) and FL models, adapting to the channel conditions, as well as the privacy requirement of WPFL.

The contributions of this paper are summarized as follows:
\begin{itemize}
     \item We propose the exploitation of quantization errors to enhance the privacy of WPFL and develop the new quantization-assisted Gaussian mechanism. We analyze the cumulative privacy loss of the mechanism. 
    \item 
     A convergence upper bound of WPFL is derived, characterizing the impact of the new quantization-assisted Gaussian mechanism and the imperfect channel conditions on the convergence of WPFL.
     \item 
     While the impact of the PFL learning rates and PL-FL weighting coefficients on the PL model convergence is intricate, a new min-max problem is formulated to enhance the convergence of wireless {PFL} and maintain fairness 
    by retaining the consistency of the convergence rates among the clients and minimizing the maximum convergence bias of all clients. 
    \item
    A new scheduling strategy is developed to solve the min-max problem by revealing the nested structure of the problem and decoupling the problem into two subproblems solved sequentially for the client selection, channel allocation, and power control, and for the learning rate and PL-FL weighting coefficients.
\end{itemize}

Extensive experiments validate our convergence analysis of the WPFL under the new quantization-assisted Gaussian mechanism.
Three image classification tasks are performed using a deep neural network (DNN), multi-class linear regression (MLR), and convolutional neural network (CNN) on the Federated MNIST, Federated FMNIST, and Federated CIFAR10 datasets. 
Under the CNN model, our approach substantially outperforms its alternative scheduling schemes, i.e., round-robin, random selection, and non-adjustment, by at least $87.08\%$, $16.21\%$, and $38.37\%$ in accuracy, the maximum test loss of participating clients, and fairness (measured by Jain's index), respectively.
Under DNN and MLR models, while our approach slightly outperforms the alternatives in fairness, it is at least
$52.26\%$ and $15.99\%$ better in accuracy and the maximum test loss, respectively.

% Our approach substantially outperforms its benchmarks FedAMP~\cite{huang2021personalized}, pFedMe~\cite{t2020personalized}, APPLE~\cite{luo2022adapt}, and FedALA~\cite{zhang2023fedala}, {by at least $11.30\%$ and $38.46\%$ in accuracy and maximum test loss of all participating clients, respectively. (measured by Jain's index). }

The rest of this paper is structured as follows. Section II reviews the related works. Section III outlines the system and threat models. 
Section~IV elaborates on the new quantization-assisted Gaussian mechanism and analyzes its privacy budget.
In Section~V, the convergence upper bound of WPFL is established under the new mechanism. In Section~VI, we develop a min-max fair scheduling strategy to accelerate the convergence {in a fair fashion}. Experimental results are presented in Section~VII. Conclusions are drawn in Section~VIII.

\textit{Notation:} $\parallel\cdot\parallel$ denotes the $L_2$-norm of a vector or matrix; $|\cdot|$ stands for cardinality; $\nabla(\cdot)$ takes gradient; $\circ$ takes the element-wise product of two vectors or matrices.

\section{{Related Work}}     

\subsection{Personalization}
PFL has been explored to combat statistical heterogeneity among participants through transfer learning (TL) \cite{li2019fedmd}, meta-learning \cite{fallah2020personalized,wei2023personalized,you2022semi}, and multitask learning (MTL) \cite{t2020personalized,li2020federated,huang2021personalized,luo2022adapt,zhang2023fedala,zhang2023federated}. 
TL-based FL enhances personalization by diminishing domain discrepancy of the global and local models~\cite{tan2022towards}. 
FedMD \cite{li2019fedmd} is an FL structure grounded in TL and knowledge distillation (KD), enabling clients to formulate autonomous models utilizing their private data. 

% Meta-learning can be applied to FL to optimize the global model for fast personalization. 
Meta-learning finds utility in enhancing the global model for rapid personalization.
In~\cite{fallah2020personalized}, a 
% variant of FedAvg called Per-FedAvg built on top of 
variation of FedAvg, named Per-FedAvg, was introduced, leveraging
the Model-Agnostic Meta-Learning (MAML). 
% The goal was to learn a good initial global model that performs well on a new heterogeneous task after updating it within just a few gradient descent steps.
It acquired a proficient initial global model that is effective on a novel heterogeneous task and can be achieved through only a few gradient descent steps.
You \textit{et al.} \cite{you2022semi} proposed Semi-Synchronous Personalized {FederatedAveraging (PerFedS$^2$)} based on MAML.
In~\cite{wei2023personalized}, a privacy budget allocation scheme based on R\'{e}nyi DP composition theory was designed to address information leakage arising from two-stage gradient descent of meta-learning-based PFL.

MTL trains a model to simultaneously execute several related tasks. 
% By treating each FL client as a task in MTL, there is the potential to capture the relationships among the clients exhibited by their heterogeneous local data.
In~\cite{t2020personalized}, pFedMe employing Moreau envelopes as the regularized loss functions for clients was recommended to disentangle the optimization of personalized models from learning the global model.
The global model aggregates the local models updated based on the personalized models. Each client's personalized model maintains a bounded distance from the global model.
In \cite{li2020federated}, 
% FedProx was designed by introducing a proximal term to the local subproblem. As a result, the dissimilarity was captured between the global FL model and local models to facilitate adjusting the impact of local updates. 
FedProx was formulated by incorporating a proximal term into the local subproblem. Contrast was delineated between the global and local models to ease the influence of local updates.
In \cite{zhang2023federated}, a federated multi-task learning (FMTL) framework was developed, where the server broadcasts a set of global models aggregated based on the local models of different clusters of clients. Each client selects one of the global models for local model updating.

Huang \textit{et al.} \cite{huang2021personalized} integrated PFL with supplementary terms and employed a federated attentive message passing (FedAMP) strategy to mitigate the impact of diverse data.
A protocol named APPLE \cite{luo2022adapt} was proposed to improve the personalized model of each client based on the others' models. Clients obtain the personalized models locally by aggregating the core models of other clients downloaded from the server. The aggregation weights and the core models are locally learned from the personalized model by adding a proximal term to the local objectives. Instead of overwriting the old local model with the downloaded global model, FedALA \cite{zhang2023fedala} aggregates the downloaded global model and the old local model for local model initialization. 

Some recent studies~\cite{sami2023over,mestoukirdi2023user,zhao2023ensemble,you2023hierarchical} have started integrating PFL in wireless networks. In \cite{sami2023over}, over-the-air clustered FL was designed to enable spectrum sharing across different clusters by employing a coordinated precoder design. 
In \cite{mestoukirdi2023user}, user-centric aggregation was designed, where the server aggregates personalized models based on collaboration coefficients heuristically determined at each round. K-means clustering was applied to cluster users based on their similarity and serve each group of similar users with one personalized model. 
Ensemble FL~\cite{zhao2023ensemble} was proposed by integrating intra-cluster FL models via model ensemble. {Clusters were formed} to improve data distribution similarity and expected energy consumption using a coalition formation game solved by a Nash-stable algorithm.
In \cite{you2023hierarchical}, three-layer FL was adopted, where edge servers aggregate local updates in multiple clusters, and a cloud server implements global aggregation. Scheduling and bandwidth allocation were optimized to balance training loss minimization and round latency minimization.

However, these existing studies~\cite{li2019fedmd,fallah2020personalized,wei2023personalized,you2022semi,t2020personalized,li2020federated,huang2021personalized,luo2022adapt,zhang2023fedala,zhang2023federated,sami2023over,mestoukirdi2023user,zhao2023ensemble,you2023hierarchical} have focused primarily on model accuracy. None has taken fairness among the PL models of different participants.

\subsection{Privacy}
Privacy has been increasingly valued in FL studies~\cite{wei2020federated,zhao2020local,truex2020ldp,yuan2023amplitude,liu2024differentially,chen2022feddual} have explored ways to integrate privacy techniques into FL to provide a demonstrable assurance of safeguarding privacy.
However, little to no consideration has been given to the personalization of learning models and their fairness under imperfect communications and privacy techniques.
In \cite{wei2020federated}, a DP-based framework was 
% proposed to prevent information leakage by injecting noise to protect the privacy of the local model parameters.  
suggested to avert privacy leakage by introducing noise to obfuscate the local model parameters.
In \cite{zhao2020local}, three local DP (LDP) techniques 
% were developed to preserve privacy in data analysis tasks. The LDP mechanisms were integrated into FL to predict traffic status, alleviate privacy threats, and reduce communication overhead in crowd-sourcing applications.
were devised to uphold the privacy of FL, and diminish communication overhead in crowd-sourcing scenarios.
% The authors of \cite{truex2020ldp} 
% % proposed FL with LDP, where LDP-based perturbation was performed upon model updating and sharing, according to the local privacy budget. 
% suggested FL with LDP, wherein LDP was applied during model uploading, adhering to individual privacy budgets.
Liu \textit{et al.} \cite{liu2024differentially} proposed a transceiver protocol to maximize the convergence rate under privacy constraints in a MIMO-based DP FL system, where a server performs over-the-air model aggregation and parallel private information extraction from the uploaded
local gradients with a DP mechanism.

In \cite{yuan2023amplitude}, DP noises were adaptively added to local model parameters to preserve user privacy during FL. The amplitude of DP noises was adjustable to preserve privacy and encourage convergence. 
FedDual \cite{chen2022feddual} was designed to add DP noises locally and aggregate asynchronously via a gossip protocol. Noise-cutting was adopted
to alleviate the impact of the DP noise on the global model. 
% Hu \textit{et al.} \cite{hu2020personalized} proposed privacy-preserving PFL using the Gaussian mechanism, which provides a privacy guarantee by adding Gaussian noise to the uploaded local updates. 
In \cite{liu2022privacy}, the Gaussian mechanism was considered in a mean-regularized MTL framework, and the accuracy was analyzed for single-round FL using a Bayesian framework.
In \cite{okegbile2023differentially}, differentially private federated MTL was designed for human digital twin systems with computationally efficient blockchain-enabled validation.

Some studies~\cite{lang2023joint,lyu2024secure,wang2024p2cefl} have utilized stochastic quantization in support of DP. In~\cite{lang2023joint}, devices utilize vector quantization based on random lattices to compress their noise-perturbed local models, achieving a predefined privacy level by adding noise and exploiting quantization errors. 
In \cite{lyu2024secure}, a secure and efficient FL framework was proposed by adding a stochastic quantization module at the client to quantize the local gradients for global aggregation. A new metric was designed to analyze the privacy and a trade-off between communication overhead, convergence rate, and privacy concerning the quantization interval.
{In~\cite{wang2024p2cefl}, an FL algorithm preserving privacy and efficiency of communication (P2CEFL) was proposed, where a subtractive dithering approach was employed to reduce communication overhead under DP guarantee.}

% {Some studies~\cite{lang2023joint,lyu2024secure} have started to utilize stochastic quantization to implement DP. In~\cite{lang2023joint}, devices utilize vector quantization based on random lattices to compress their noise-perturbed local models, achieving a predefined privacy level by adding noise and exploiting quantization errors. 
% In \cite{lyu2024secure}, a secure and efficient FL framework was proposed by adding a stochastic quantization module at the client to quantize the original local gradients for global aggregation. A new metric was designed to analyze the privacy performance and a trade-off between communication overhead, convergence rate, and privacy protection concerning the quantization interval.
% % {These studies~\cite{lang2023joint,lyu2024secure} designed stochastic quantization and are incompatible with practical wireless systems.}
% }

However, none of the above works~\cite{wei2020federated,zhao2020local,truex2020ldp,yuan2023amplitude,chen2022feddual,liu2024differentially,hu2020personalized,lang2023joint,lyu2024secure,liu2022privacy,wei2023personalized,wang2024p2cefl} have considered fairness among the participants in PFL.
%especially in the presence of statistical heterogeneity.

\subsection{Fairness}

Some existing studies, e.g., \cite{li2020fair,hu2022federated,li2021ditto}, have attempted to improve performance distribution fairness, i.e., by 
%reducing the variance of model accuracy across clients. 
mitigating the variability in model accuracy among different clients.
Yet, none has taken user privacy into account.
In \cite{li2020fair}, $q$-FFL was proposed to achieve a more uniform accuracy distribution across clients. A parameter $q$ was used to re-weight the aggregation loss by assigning bigger weights to clients undergoing more significant losses.
In \cite{hu2022federated}, FedMGDA+ 
% was proposed to improve the model’s robustness while maintaining good-intent fairness. 
was suggested to enhance model robustness while upholding fairness with positive intentions.
A multi-objective problem 
% was constructed to minimize the loss functions of all clients, and solved using Pareto-stationary solutions to find a common descent direction for all selected clients.
was structured to diminish the loss functions across all clients, tackled by employing Pareto-steady resolutions to pinpoint a collective descent direction suitable for all chosen clients.
Li \textit{et al.} \cite{li2021ditto} designed a scalable federated MTL framework, Ditto, which learns personalized and global models in a global-regularized framework. 
% A regularization term was added to make the personalized models close to the optimal global model.
Regularization was introduced to bring the personalized models in proximity to the optimal global model.
The optimal weighting coefficient of Ditto was designed in terms of fairness and robustness.
%In a meta-learning-based PFL framework, a privacy budget allocation scheme based on R\'{e}nyi DP composition theory was designed considering the challenge of information leakage due to two-stage gradient descent \cite{wei2023personalized}.
Unfortunately, these studies~\cite{li2020fair,hu2022federated,li2021ditto} have overlooked privacy risks or failed to address the influence of DP and imperfect communications on fairness.

\section{System Model and Problem Statement}
In this section, we present the PFL system, channel model, threat model, and the preliminary of DP.
The PFL system consists of a server and $N$ clients. $\mathcal{N}$ denotes the set of clients. $\mathcal{D}_n$ denotes the local dataset at client $n \in \mathcal{N}$. $\mathcal{D}$ collects all data samples. $\left|\mathcal{D}\right|={\sum}_{n=1}^{N}\left|\mathcal{D}_{n}\right|$.
% with $\left|\cdot\right|$ standing for cardinality. 
% To adapt to the heterogeneity under federated settings, we consider 
The PFL has both global and personalized objectives for FL and PL, respectively. 

\subsection{PFL Model}

{
At every communication round~$t$, the server selects a subset of clients $\mathcal{N}_{t} \subset \mathcal{N}$, and quantizes and sends the latest global model to the clients.  
Upon the receipt of the noisy global model, i.e., through imperfect downlink channels, each client $n$, $\forall n \in \mathcal{N}_{t}$, executes local FL training, and updates its FL local model. {Each client $n$, $\forall n \in \mathcal{N}$, executes PL training, and updates its PL model.}
% The FL local learning rate and the PL learning rate are $\eta_{\mathrm{F},n}^{t}$ and $\eta_{\mathrm{P},n}^{t}$, respectively. 
After clipping, DP perturbation, and quantization, client $n$ uploads its local models to the server. Based on the received local models  
 %$\hat{\boldsymbol{\omega}}_{n}^{t+1}$ 
from client $n$, the global model
 %$\overset{\sim}{\boldsymbol{\omega}}_{\mathrm{L}}^{t+1}$
 is obtained by aggregation at the server. }

\subsubsection{FL}
As for FL, the global objective of FL is to learn an FL global model with the minimum global training loss, i.e., 
{
\begin{equation} \small
    \label{GlobalObj}
    \underset{\boldsymbol{\omega}}{\min}\, F(\boldsymbol{\omega})={\sum}_{n=1}^{N}p_nF_n(\boldsymbol{\omega})  \,,
\end{equation}}%
where $\boldsymbol{\omega}\in\mathbb{R}^{ |\boldsymbol{\omega}|}$ is the model parameter with $|\boldsymbol{\omega}|$ elements; $F(\cdot)$ is the global loss function, 
% as given by
% \begin{equation}
%     \label{g_loss}
%     F(\boldsymbol{\omega})={\sum}_{n=1}^{N}p_nF_n(\boldsymbol{\omega}) \,,
% \end{equation}
and $F_{n}(\cdot)$ is the local loss function of client $n \in \mathcal{N}$; $p_{n}\triangleq\frac{\left|\mathcal{D}_{n}\right|}{\left|\mathcal{D}\right|}$ is the aggregation coefficient for client $n$, with ${\sum}_{n=1}^{N}p_n=1$. 
For illustration convenience, we assume the size of each client's local dataset is the same, i.e., $p_n=\frac{1}{N}$.

According to 
% (\ref{objective_p})--
(\ref{GlobalObj}), on each communication round, an FL local model, denoted by $\boldsymbol{u}_n^t$, is trained at every selected client $n$, followed by clipping, DP perturbation, and quantization, before the client uploads the FL local model to the server.

\textbf{Clipping}: 
The FL local models are clipped as
{
 	\begin{equation} \small
    \boldsymbol{u}_{n}^{t}=\boldsymbol{u}_{n}^{t}\Big/\max\big(1,{\| \boldsymbol{u}_{n}^{t}\|}\big/{C}\big)  \,,
    \label{clipping} 
	\end{equation}}%
 where $C$ is the pre-determined clipping threshold ensuring that the local model parameter $\parallel \boldsymbol{u}_{n}^{t}\parallel \leq C$~\cite{yuan2023amplitude}.

 \textbf{DP perturbation:}
 % The DP noise is added after a client clips its FL local model.
  Let $\mathbf{z}_n^t$ denote the independent and identically distributed (i.i.d.) Gaussian noise added by client $n$ to its local model $\boldsymbol{u}_{n}^{t}$ at the $t$-th communication round. Each element in $\mathbf{z}_n^t$ follows $\mathbb{N}(0,\sigma_\mathrm{DP}^2)$.

  \begin{definition}[$(\epsilon,\delta)$-DP]
A privacy preserving mechanism $\mathcal{M}:\mathcal{X}\rightarrow \mathcal{R}$ is $(\epsilon,\delta)$-DP if, for any two adjacent datasets $\mathcal{X}_{0}$, $\mathcal{X}_{1}\in \mathcal{X}$ and any subset of outputs $S\subseteq \mathcal{R}$, it holds that
\begin{equation} \small
    \label{DefDP}
    \Pr\left[\mathcal{M}\left(\mathcal{X}_{0}\right)\in\mathcal{S}\right]\leq e^{\epsilon}\Pr\left[\mathcal{M}\left(\mathcal{X}_{1}\right)\in\mathcal{S}\right]+\delta  \,,
\end{equation}
where $\epsilon>0$ specifies the difference beyond which the outputs concerning $\mathcal{X}_{0}$ and $\mathcal{X}_{1}$ can be differentiated, and $\delta \in[0,1]$ is the probability with which
the ratio between the probabilities of $\mathcal{X}_{0}$ and $\mathcal{X}_{1}$ is no smaller then $e^\epsilon$.
\end{definition}
\begin{definition}[Max Divergence]
The Max Divergence, also known as the $\infty$-th order of $R\acute{e}nyi$ divergence, between two random variables $Y$ and $Z$ taking values from the same sample space $\mathcal{Y}$ is defined as
{\begin{equation}  \small
D_{\infty}(Y||Z)=\max_{y\in \mathcal{Y}}\left[\ln\frac{\Pr\left[Y=y\right]}{\Pr\left[Z=y\right]}\right].
    \end{equation}}%
The $\delta$-Approximate Max Divergence 
% between $Y$ and $Z$ 
is defined as
{\begin{equation} \small
D_{\infty}^{\delta}(Y||Z)=\max_{y\in \mathcal{Y}}\left[\ln\frac{\Pr\left[Y=y\right]-\delta}{\Pr\left[Z=y\right]}\right].
    \end{equation}}%
The randomized mechanism $\mathcal{M}:\mathit{\mathcal{D\rightarrow R}}$ satisfies $\epsilon$-DP if $D_{\infty}\left[\mathcal{M}\left({\mathcal{X}_{0}}\right)||\mathcal{M}\left({\mathcal{X}_{1}}\right)\right]\leq \epsilon$ for all measurable sets $\mathcal{Y}\subseteq\mathcal{R}$ and any two adjacent datasets ${\mathcal{X}_{0}},{\mathcal{X}_{1}}\in\mathcal{D}$. 
Moreover, $\mathcal{M}$ satisfies $(\epsilon,\delta)$-DP if 
% for all measurable sets $\mathcal{Y}\subseteq\mathcal{R}$ and for any two adjacent databases ${\mathcal{X}_{0}},{\mathcal{X}_{1}}\in\mathcal{D}$, 
$D_{\infty}^{\delta}\left[\mathcal{M}\left({\mathcal{X}_{0}}\right)||\mathcal{M}\left({\mathcal{X}_{1}}\right)\right]\leq \epsilon$\cite{dwork2014algorithmic}.
\end{definition}

\textbf{Quantization}:
The clients quantize and transmit their FL local models to the server for FL global aggregation. 
The server quantizes and broadcasts the FL global model. 
Suppose that each element of the FL local and global models is quantized into $R$ bits. 
%, with $1$ sign bit and $R-1$ integer bits.
%The ranges of the fixed-point number for the local models and the global models are $[-C+3\sigma_{\mathrm{DP}},C+3\sigma_{\mathrm{DP}}]$ and $[-\Lambda_{\mathrm{G}},\Lambda_{\mathrm{G}}]$, with $C+3\sigma_{\mathrm{DP}}=C+3\sigma_{\mathrm{DP}}$ and $\Lambda_{\mathrm{G}}=C$. 
% The floating-point numbers of the local models and the global models are constrained within $[-C+3\sigma_{\mathrm{DP}},C+3\sigma_{\mathrm{DP}}]$ and $[-\Lambda_{\mathrm{G}},\Lambda_{\mathrm{G}}]$, \textcolor{blue}{with $C+3\sigma_{\mathrm{DP}}=C+3\sigma_{\mathrm{DP}}$ and $\Lambda_{\mathrm{G}}=C$}, and then mapped to corresponding fixd-point numbers.
We set the quantization range of the FL local models to $[-C-3\sigma_{\mathrm{DP}},C+3\sigma_{\mathrm{DP}}]$, capturing 99.7\% of the local models perturbed by the Gaussian mechanism.
With no perturbation, the quantization range of the FL global models is $[-C,C]$.
% \textcolor{blue}{This is because the DP noise is only added for model uploading, and about 99.7\% of values drawn from a Gaussian distribution are within three standard deviations from the mean.}
Then, the respective quantization intervals of the FL local and global models are given by
	\begin{align} 
    &\triangle_{\mathrm{L}}=\frac{2(C+3\sigma_{\mathrm{DP}})}{2^{R}-1}; \,
    \triangle_{\mathrm{G}}=\frac{2C}{2^{R}-1} .
    \label{LSB} 
	\end{align}
The respective maximum quantization errors of the FL local and global models are given by
 	\begin{align} 
    & E_{\mathrm{L}}^{\mathrm{max}}=\frac{\triangle_{\mathrm{L}}}{2}\triangleq\beta_{\mathrm{L}}(C+3\sigma_{\mathrm{DP}}); \,
    E_{\mathrm{G}}^{\mathrm{max}}=\frac{\triangle_{\mathrm{L}}}{2}\triangleq\beta_{\mathrm{G}}C \,, \label{Quantization_error} 
	\end{align}
where, for conciseness, $\beta_{\mathrm{L}}\triangleq\frac{1}{2^{R}-1}$ and $\beta_{\mathrm{G}}\triangleq\frac{1}{2^{R}-1}$. 
%We define $\mathcal{Q}_{\mathrm{L}}$ and $\mathcal{Q}_{\mathrm{G}}$ to collect the respective quantization levels of the FL local and global models, respectively.

Let $\mathcal{Q}(\cdot)$ denote the multi-dimensional quantization with every element rounded towards the closest quantization level.
The clipped, perturbed, and quantized FL local model is 
\begin{align}
    \label{multi_quan}
    \tilde{\boldsymbol{u}}_{n}^{t}=\mathcal{Q}({\boldsymbol{u}}_{n}^{t}+{\boldsymbol{z}}_{n}^{t}), 
    % \\
    % \tilde{\boldsymbol{\omega}}_{\mathrm{G}}^{t}&=Q(\tilde{\boldsymbol{\omega}}_{\mathrm{L}}^{t}),
\end{align}
which 
% $\boldsymbol{u}_{n}^{t}$ is the local model of client $n$ in round $t$, and 
% $\tilde{\boldsymbol{u}}_{n}^{t}$ is the noisy local model
is 
uploaded by client $n$ in the $t$-th round.

\subsubsection{PL}
A PL model $\boldsymbol{\varpi}_{n}$ is trained locally at client $n$ concerning the FL global model. 
The training of the FL global model and that of the PL models are synchronized on the basis of FL rounds. Client $n$ updates its PL model $\boldsymbol{\varpi}_{n}^{t}$ based on the FL global model $\boldsymbol{\omega}^t$ updated at the $t$-th round. 
For the sake of model generalization, we encourage the PL model to be close to the optimal FL global model, i.e.,

\vspace{-\baselineskip}
	\begin{subequations} \small
        \label{objective_p}
	\begin{align}
     \underset{\boldsymbol{\varpi}_{n}}{\min} \quad f_{n}(\boldsymbol{\varpi}_{n};\boldsymbol{\omega}^{\ast})\!&=\!\!\big(1\!\!-\!\!\frac{\lambda}{2}\big)\!F_{n}(\boldsymbol{\varpi}_{n})\!\!+\!\!\frac{\lambda}{2}\!\parallel\!\boldsymbol{\varpi}_{n}\!\!-\!\!\boldsymbol{\omega}^{\ast}\!\parallel^{2}, \label{fn} \\
     \label{fn_1}
     \textrm{s.t.}  \,\,\boldsymbol{\omega}^{\ast}&=\underset{\boldsymbol{\omega}}{\arg\min}\, \frac{1}{N} {\sum}_{n=1}^{N} F_{n}\left(\boldsymbol{\omega}\right) \,, 
	\end{align}
	\end{subequations}
 where $f_n(\cdot)$ is the loss function of the PL model at client $n$; $\lambda \in [0,2]$ is a PL-FL {weighting coefficient} that controls {the trade-off between the FL and PL models.} When $\lambda=0$, PFL trains a {PL} model for each client based on its local datasets. When $\lambda=2$, there is no personalization.

\subsection{Communication Model}
% The quantized parameters of the FL local and global models are modulated into $N_{\boldsymbol{\omega}}$ number of {$M_{\boldsymbol{\omega}}$-ary} QAM symbols (i.e., $R=N_{\boldsymbol{\omega}}\log_2 M_{\boldsymbol{\omega}}$).
% % \textcolor{blue}{where $N_{\boldsymbol{\omega}}$ is a pre-determined constant}. 
{The quantized parameters of the FL local and global models are modulated into {$M_{\boldsymbol{\omega}}$-ary} QAM symbols.}
For a model with {$|\boldsymbol{\omega}|$ elements,} the minimum uplink data rate (in bits/s) is 
\begin{equation}
    \label{max_data_rate}
    r_{\mathrm{min}}={|\boldsymbol{\omega}|R}/{\tau_{\mathrm{max}}} \,,
\end{equation}
where $\tau_{\mathrm{max}}$ is the maximum transmission delay (in seconds). 

\subsubsection{Channelization}
The server connects $N$ clients wirelessly over $K$ orthogonal subchannels. $\mathcal{K}=\{1,\ldots,K\}$ denotes the set of subchannels. In each round, at most $K$ clients are selected for local model uploading. Let $\mathbf{p}^{t}=\{P_{1}^{t},\ldots,P_{N}^{t}\}\in \mathbb{R}^{N}$ and $\mathbf{\boldsymbol{c}}^{t}=\{\mathbf{c}_{1}^{t},\ldots,\mathbf{c}_{N}^{t}\}\in \mathbb{R}^{N \times K}$ collect the transmit powers of all clients and their selected subchannels in the $t$-th round. $P_{n}^{t}=0$ if client $n$ selects no subchannel in the round. $\mathbf{c}_{n}^{t}=\{c_{n,1}^{t},\ldots,c_{n,K}^{t}\}$ collects the channel selection indicators, with $c_{n,k}^{t}=1$ if subchannel $k$ is selected for client $n$, and $c_{n,k}^{t}=0$, otherwise. 

Suppose that client $n$ uploads its clipped, perturbed and quantized local model, $\tilde{\boldsymbol{u}}_{n}^{t}$, through subchannel $k$ in the $t$-th round. The uplink data rate is given by
\begin{equation}
    \label{data_rate}
    r_{n,k}^{t}=B\log_{2}(1+\gamma_{n,k,\mathrm{L}}^{t}) \,,
\end{equation}
where $B$ is the bandwidth of each subchannel, and $\gamma_{n,k,\mathrm{L}}^{t}$ is the receive signal-to-noise ratio (SNR) at the server from client $n$ in subchannel $k$ during the $t$-th round, as given by
\begin{equation}
    \label{SNR}
\gamma_{n,k,\mathrm{L}}^{t}={P_{n}^{t}}\left|h_{n,k,\mathrm{L}}^{t}\right|^{2}\Big/{\sigma_{0}^{2}} \,,
\end{equation}
 where $h_{n,k,\mathrm{L}}^{t}$ is the channel of client $n$ in subchannel $k$, and ${\sigma_{0}^{2}}$ is the variance of the additive white Gaussian noise (AWGN).
 %Let $\mathbf{e}^{t}=\{e_{1}^{t},\ldots,e_{N}^{t}\}$ collect the symbol error rate (SER) of each client, as given by

 \subsubsection{Received Models}
 {Let {$e_{n,k,\mathrm{L}}^{t}$} denote the bit error rate (BER) of client $n$ in subchannel $k$~\cite{cho2002general}:
 \begin{equation} 
    \label{SER}
  \!\!\!  \scalebox{0.98}{$
e_{n,k,\mathrm{L}}^{t}\!\!=\!\!\frac{2\sqrt{M_{\boldsymbol{\omega}}}\!-\!2}{\sqrt{M_{\boldsymbol{\omega}}}\log_2\!\!\sqrt{M_{\boldsymbol{\omega}}}}Q\!\!\left(\!\!\sqrt{\frac{3\gamma_{n,k,\mathrm{L}}^{t}\!\log_2\!M_{\boldsymbol{\omega}}}{M_{\boldsymbol{\omega}}-1}}\!\right) \!,$}
\end{equation}}
where $Q(x)=\frac{1}{\sqrt{2\pi}}\int_{x}^{\infty}e^{-\frac{x^{2}}{2}}dx$ is the $Q$ function.
%Note that $\sum_{k=1}^{K}c_{n,k}^{t} \leq 1$, i.e., $0\leq e_{n}^{t}\leq1$. 

Then, for client $n$, the error probability of each element of its FL local model is given by
{
 \begin{equation} \small
    \label{element_error_pr}
    \rho_{n,\mathrm{L}}^{t}={\sum}_{k=1}^{K}c_{n,k}^{t}\left(1-(1-e_{n,k,\mathrm{L}}^{t})^{R}\right) \,.
\end{equation}}
Similarly, we obtain the error probability $\rho_{n,\mathrm{G}}^{t}$ of each element of the FL global model received at client $n$ in the $t$-th round.

In the $t$-th round, we use 
% $\hat{\boldsymbol{\omega}}_{n}^{t}$ denote the received local model of client $n$ at the server in the $t$-th communication round, and 
$\mathbf{s}_{n}^{t}=\{s_{n,i}^{t},\forall i=1,\ldots,{|\boldsymbol{\omega}|}\}\in\mathbb{R}^{|\boldsymbol{\omega}|}$ to denote the error indicator vector for the received local model $\hat{\boldsymbol{\omega}}_{n}^{t}$.  {$s_{n,i}^{t}=0$ if the $i$-th element of $\hat{\boldsymbol{\omega}}_{n}^{t}$ is error-free; otherwise, $s_{n,i}^{t}=1$.}
At the server, the received FL local model of client $n$ and the aggregated FL global model are given by
{\small
    \begin{align} 
        \label{received_local_model}
    \hat{\boldsymbol{\omega}}_{n}^{t}&= \mathbf{s}_{n}^{t}\circ\boldsymbol{\hat{u}}_{n}^{t}+\left(1-\mathbf{s}_{n}^{t}\right)\circ\tilde{\boldsymbol{u}}_{n}^{t}  \,;\\
    \label{aggregated_glb}
    \tilde{\boldsymbol{\omega}}_{\mathrm{L}}^{t}&=\frac{1}{|\mathcal{N}_{t}|} {\sum}_{n\in\mathcal{N}_{t}}\hat{\boldsymbol{\omega}}_{n}^{t} \,,
\end{align}}%
where  
% $\mathcal{N}_{t}$ is the set collecting the selected clients at round $t$, $\tilde{\boldsymbol{u}}_{n}^{t}$ is the local model of client $n$ at the communication round $t$ after noise addition and quantization, and 
$\boldsymbol{\hat{u}}_{n}^{t}$ is the erroneous version of $\tilde{\boldsymbol{u}}_{n}^{t}$ resulting from imperfect, noisy wireless channels.
% {$|\mathcal{N}_{t}|$ is the number of participating clients in the FL.}
Considering the errors caused by DP noise, quantization, and transmission in imperfect channels, $\tilde{\boldsymbol{u}}_{n}^{t}$ and ${\boldsymbol{\hat{u}}}_{n}^{t}$ are given by
{
% {\small
% \begin{align} 
\begin{equation} \small
    \label{noisy_local_model}
    \tilde{\boldsymbol{u}}_{n}^{t}=\boldsymbol{u}_{n}^{t}+\mathbf{z}_{n}^{t}+\mathbf{E}_{n,\mathrm{L}}^{t} \,;\quad
    % \\
    % \label{error_local_model} 
\boldsymbol{\hat{u}}_{n}^{t}=\boldsymbol{u}_{n}^{t}+\boldsymbol{\zeta}_{n,\mathrm{L}}^{t} \,,
% \end{align}}%
\end{equation}}%
where $\mathbf{E}_{n,\mathrm{L}}^{t}=\{{E}_{n,i,\mathrm{L}}^{t}, i=1,\ldots,{|\boldsymbol{\omega}|}\}\in \mathbb{R}^{|\boldsymbol{\omega}|}$ is the quantization error vector of ${\boldsymbol{u}}_{n}^{t}$. $|{E}_{n,i,\mathrm{L}}^{t}| \leq E_{\mathrm{L}}^{\mathrm{max}}$, $i=1,\ldots,{|\boldsymbol{\omega}|}$. $\boldsymbol{\zeta}_{n,\mathrm{L}}^{t}=\{{\zeta}_{n,i,\mathrm{L}}^{t}, i=1,\ldots,{|\boldsymbol{\omega}|}\} \in \mathbb{R}^{|\boldsymbol{\omega}|}$ is the error between $\boldsymbol{\hat{u}}_{n}^{t} $ and $\boldsymbol{u}_{n}^{t}$ caused by DP noise, quantization, and transmission errors. $|{\zeta}_{n,i,\mathrm{L}}^{t}|\leq |{u}_{n,i}^{t}|+C+3\sigma_{\mathrm{DP}}$, $i=1,\ldots,{|\boldsymbol{\omega}|}$.

% \subsubsection{...}
In the $(t+1)$-th round, at client $n$, the received FL global model is given by
\begin{equation}
    \label{error_glb}
    \hat{\boldsymbol{\omega}}_{n,\mathrm{G}}^{t+1}=\mathbf{s}_{n,\mathrm{\mathrm{G}}}^{t+1}\circ\hat{\boldsymbol{\omega}}_{n,\mathrm{L}}^{t+1}+(\mathbf{1}_{|\boldsymbol{\omega}|}-\mathbf{s}_{n,\mathrm{G}}^{t+1})\circ\tilde{\boldsymbol{\omega}}_{\mathrm{G}}^{t} \,,
\end{equation}
where $\mathbf{s}_{n,\mathrm{G}}^{t+1}=\{s_{n,i,\mathrm{G}}^{t+1},\forall i=1,\ldots,{|\boldsymbol{\omega}|}\}\in\mathbb{R}^{|\boldsymbol{\omega}|}$ is the error indicator vector for the transmission of the FL global model.
% for ${\boldsymbol{\omega}}_{\mathrm{G}}^{t+1}$. 
{$s_{n,i,\mathrm{G}}^{t+1}=0$ if the $i$-th element of $\mathbf{s}_{n,\mathrm{G}}^{t+1}$ is received error-free; otherwise, $s_{n,i,\mathrm{G}}^{t+1}=1$. }
Moreover, $\tilde{\boldsymbol{\omega}}_{\mathrm{G}}^{t}$ is the global model after quantization. 
Let {$\hat{\boldsymbol{\omega}}_{\mathrm{L}}^{t+1}$} be the erroneous version of $\tilde{\boldsymbol{\omega}}_{\mathrm{G}}^{t}$. Then,
% \begin{subequations}
% \begin{align}
\begin{equation}
    \label{noisy_glb}
{\tilde{\boldsymbol{\omega}}_{\mathrm{G}}^{t}=\tilde{\boldsymbol{\omega}}_{\mathrm{L}}^{t}+\mathbf{E}_{\mathrm{G}}^{t} \,;\quad
    % \\
    % \label{error_glb_model}
\hat{\boldsymbol{\omega}}_{n,\mathrm{L}}^{t+1}=\tilde{\boldsymbol{\omega}}_{\mathrm{L}}^{t}+\boldsymbol{\zeta}_{n,\mathrm{G}}^{t+1} \,,}
% \end{align}
% \end{subequations}
\end{equation}%
where $\mathbf{E}_{\mathrm{G}}^{t}=\{{E}_{i,\mathrm{G}}^{t},\forall i=1,\ldots,{|\boldsymbol{\omega}|}\}\in \mathbb{R}^{|\boldsymbol{\omega}|}$ is the quantization error vector of $\tilde{\boldsymbol{\omega}}_{\mathrm{L}}^{t}$, with $|{E}_{i,\mathrm{G}}^{t}| \leq E_{\mathrm{G}}^{\mathrm{max}}$, $\forall i$; $\boldsymbol{\zeta}_{n,\mathrm{G}}^{t+1}=\{{\zeta}_{n,i,\mathrm{G}}^{t+1},\forall i=1,\ldots,{|\boldsymbol{\omega}|}\}\in \mathbb{R}^{|\boldsymbol{\omega}|}$ is the error between $\tilde{\boldsymbol{\omega}}_{\mathrm{L}}^{t}$ and $\hat{\boldsymbol{\omega}}_{n,\mathrm{L}}^{t+1}$ caused by downlink quantization and transmission errors, with $|{\zeta}_{n,i,\mathrm{G}}^{t+1}|\leq|{\omega}_{i,\mathrm{L}}^{t}| +C, \,\forall i$. 

\subsubsection{Local Model Update}
The FL local model and the PL model of client $n$ are updated at the $(t+1)$-th communication round based on the received FL global model, as given by
\begin{subequations} \small
\begin{align}
    \label{update_lc_model}
   \boldsymbol{u}_{n}^{t+1}&=\hat{\boldsymbol{\omega}}_{n,\mathrm{G}}^{t+1}-\eta_{\mathrm{F},n}^{t+1}\nabla F_{n}\Big(\hat{\boldsymbol{\omega}}_{n,\mathrm{G}}^{t\!+\!1}\Big) \,;\\
   \tilde{\boldsymbol{\varpi}}_{n}^{t+1}&\!\!=\!\!\tilde{\boldsymbol{\varpi}}_{n}^{t}\!\!-\!\!\eta_{\mathrm{P},n}^{t\!+\!1}\!\!\left[\!\Big(\!1\!\!-\!\!\frac{\lambda_{n}^{t\!+\!1}}{2}\!\Big)\!\nabla F_{n}(\tilde{\boldsymbol{\varpi}}_{n}^{t})\!\!+\!\!\lambda_{n}^{t+1}\!\Big(\!\tilde{\boldsymbol{\varpi}}_{n}^{t}\!\!-\!\hat{\boldsymbol{\omega}}_{n,\mathrm{G}}^{t+1}\!\Big)\!\right] .
   \label{update_pl_model}
\end{align}
\end{subequations}
where $\eta_{\mathrm{F},n}^{t+1}$ and $\eta_{\mathrm{P},n}^{t+1}$ are the FL local learning rate and the PL learning rate of client $n$ at the $(t+1)$-th round, respectively. 

Define $\boldsymbol{u}_{n}^{\ast}\in\mathbb{R}^{|\boldsymbol{\omega}|}$ as the optimal {FL} local model of client $n$, and $\boldsymbol{\varpi}_{n}^{\ast}\in\mathbb{R}^{|\boldsymbol{\omega}|}$ as the optimal PL model, i.e.,
\begin{equation}\small
% \begin{subequations}\small
% \begin{align}
{\boldsymbol{u}_{n}^{\ast}=\underset{\boldsymbol{u}_{n}}{\arg\min}\,F_{n}(\boldsymbol{u}_{n})\,;\quad
% \label{Optimal_fmodel} \\
\boldsymbol{\varpi}_{n}^{\ast}=\underset{\boldsymbol{\varpi}_{n}}{\arg\min}\,f_{n}(\boldsymbol{\varpi}_{n};\boldsymbol{\omega}^{\ast}). }
\label{optimal_pmodel} 
% \end{align}
% \end{subequations}
\end{equation}

\subsection{Threat Model and Problem Statement}

The server may be honest-but-curious, and attempt to recover the training datasets of the clients or infer their private features based on the FL local models uploaded by the clients \cite{nasr2019comprehensive}. There may also be external attackers who intend to breach the privacy of the clients. Although the clients train their FL local models locally, the local models are shared with the server and can be analyzed to potentially compromise their privacy under inference attacks during learning \cite{nasr2019comprehensive} and model-inversion attacks during testing \cite{fredrikson2015model}. 

{In this paper, we wish to exploit the inherent privacy-preserving capability of quantization to enhance the privacy of WPFL.
We also wish to optimize the scheduling policy and power control for the transmissions of the local models to facilitate the convergence of WPFL with privacy enhancement while maintaining performance fairness between the PL models of the participating clients.}

\section{Privacy Analysis of WPFL}
% \footnote{DP mechanism with parameters $\epsilon$ and $\delta$ has developed into a strong standard for privacy guarantees in data processing systems. It provides a rigorous framework for privacy guarantees under various adversarial attacks. Here, $\epsilon > 0$ is the distinguishable bound of all outputs on neighboring datasets $d$, $d'$ in a database, and $\delta$ represents the probability of the event that the ratio of the probabilities for two adjacent datasets $d$, $d'$ cannot be bounded by $\epsilon$ after adding a privacy-preserving mechanism.}
% To preserve data privacy from the uploaded local models, a Gaussian mechanism can be used to guarantee $(\epsilon,\delta)$-DP by adding artificial Gaussian noises \cite{dwork2014algorithmic}. 
% In this section, we delineate the new quantization-assisted Gaussian mechanism, and the procedure of the WPFL under this proposed DP mechanism. 
This section analyzes the impact of quantization on the privacy of WPFL, establishes the new quantization-assisted Gaussian mechanism, and analyzes its privacy budget.
% In this paper, 
This starts with the following proposition.
\begin{proposition}[Quantization-Assisted Gaussian Mechanism]
    \label{joint_DP}
    For a local model $\boldsymbol{u}_{n}(\cdot)$ and its local dataset $\mathcal{D}_n$, the quantization-assisted Gaussian mechanism is defined as
    \begin{equation}
    \label{DP_mechanism}
        \mathcal{M}_\mathrm{Q}(\boldsymbol{u}_{n}(\cdot),\mathcal{D}_n)=\mathcal{Q}(\boldsymbol{u}_{n}(\mathcal{D}_{n})+\boldsymbol{z}_{n}),
    \end{equation}
    where $\mathbf{z}_n$ is the Gaussian noise added by client $n$ to its local models before quantization, and its elements follow $\mathbb{N}(0,\sigma_\mathrm{DP}^2)$.
\end{proposition}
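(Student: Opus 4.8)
\emph{Proof idea.} The plan is to certify that $\mathcal{M}_{\mathrm{Q}}$ is a genuine $(\epsilon,\delta)$-DP mechanism by reducing it to the ordinary Gaussian mechanism composed with a deterministic quantizer, and then to sharpen the guarantee so as to expose the extra privacy contributed by the rounding error. First I would record the additive decomposition already latent in \eqref{multi_quan} and \eqref{noisy_local_model}: for any local dataset $\mathcal{D}_n$,
\begin{equation*}
\mathcal{M}_{\mathrm{Q}}(\boldsymbol{u}_{n}(\cdot),\mathcal{D}_n)=\boldsymbol{u}_{n}(\mathcal{D}_n)+\boldsymbol{z}_n+\mathbf{E}_{n,\mathrm{L}},
\end{equation*}
where $\mathbf{E}_{n,\mathrm{L}}=\mathcal{Q}(\boldsymbol{u}_{n}(\mathcal{D}_n)+\boldsymbol{z}_n)-(\boldsymbol{u}_{n}(\mathcal{D}_n)+\boldsymbol{z}_n)$ is the rounding error, which is deterministic given the noisy model and has entries of magnitude at most $E_{\mathrm{L}}^{\mathrm{max}}=\beta_{\mathrm{L}}(C+3\sigma_{\mathrm{DP}})$ by the construction of the quantizer. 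This identity exhibits $\mathcal{M}_{\mathrm{Q}}=\mathcal{Q}\circ\mathcal{G}$, with $\mathcal{G}(\boldsymbol{u}_{n}(\cdot),\mathcal{D}_n)=\boldsymbol{u}_{n}(\mathcal{D}_n)+\boldsymbol{z}_n$ the classical Gaussian mechanism applied to the clipped local model of \eqref{clipping}.

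Next I would invoke two standard ingredients. First, after clipping, the $L_2$-sensitivity $\Delta s$ of the local map $\mathcal{D}_n\mapsto\boldsymbol{u}_{n}(\mathcal{D}_n)$ is finite and controlled through $C$, so $\mathcal{G}$ satisfies $(\epsilon,\delta)$-DP in the sense of Definition~1 once $\sigma_{\mathrm{DP}}$ is chosen large enough relative to $\Delta s$; this follows by bounding the $\delta$-approximate max divergence of Definition~2 between the two Gaussians centred at the outputs on adjacent datasets. Second, $\mathcal{Q}$ is a fixed measurable map that never touches the data, so by the post-processing immunity of DP the composition $\mathcal{Q}\circ\mathcal{G}$ inherits the same $(\epsilon,\delta)$ guarantee. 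Together these establish that $\mathcal{M}_{\mathrm{Q}}$ is a valid privacy mechanism, which is the property the rest of the section relies on.

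To justify the name ``quantization-assisted,'' I would then analyse $\mathcal{M}_{\mathrm{Q}}$ directly at the level of its discrete outputs rather than treating $\mathcal{Q}$ as a black box. For a lattice point $\boldsymbol{y}$ with quantization cell $\mathrm{cell}(\boldsymbol{y})$ of side $\triangle_{\mathrm{L}}$, one has $\Pr[\mathcal{M}_{\mathrm{Q}}=\boldsymbol{y}]=\int_{\mathrm{cell}(\boldsymbol{y})}\phi_{\sigma_{\mathrm{DP}}}(\boldsymbol{v}-\boldsymbol{u}_{n}(\mathcal{D}_n))\,d\boldsymbol{v}$, with $\phi_{\sigma_{\mathrm{DP}}}$ the isotropic Gaussian density. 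I would bound the ratio of these cell masses for two adjacent datasets by the worst-case pointwise density ratio over the cell, and argue that averaging the Gaussian likelihood ratio over a cell can only contract the privacy loss, so that the attainable $(\epsilon,\delta)$ pair is no worse --- and, on the tail event that defines $\delta$, strictly better --- than that of $\mathcal{G}$ alone. Equivalently, the bounded term $\mathbf{E}_{n,\mathrm{L}}$ in the decomposition acts as an additional perturbation that can be ``spent'' to relax $\sigma_{\mathrm{DP}}$ at a fixed privacy target, which is exactly what the budget analysis in the sequel quantifies.

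The principal obstacle is this last step: the privacy loss of the discretized mechanism depends on how $\boldsymbol{u}_{n}(\mathcal{D}_n)$ is aligned with the quantization grid, so the bound must hold uniformly over all grid offsets and over all pairs of adjacent datasets within the $\Delta s$-ball. Controlling the cell-integrated Gaussian ratio in the tail region --- where the $3\sigma_{\mathrm{DP}}$-truncated quantization range $[-C-3\sigma_{\mathrm{DP}},C+3\sigma_{\mathrm{DP}}]$ meets the Gaussian tail responsible for $\delta$ --- while simultaneously respecting the additive clipping bound $C$ and the $\pm E_{\mathrm{L}}^{\mathrm{max}}$ rounding bound, and doing so in a way that keeps the cumulative loss over the $T$ rounds from deteriorating, is the delicate part on which the subsequent composition analysis of the privacy budget rests.
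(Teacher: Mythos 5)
The statement you were asked to prove is, strictly speaking, a definition: Proposition~1 only introduces the map $\mathcal{M}_{\mathrm{Q}}=\mathcal{Q}\circ\mathcal{G}$, and the paper attaches no proof to it; the substantive privacy claim is deferred to Theorem~1 and its proof in Appendix~A. Read against that proof, your plan is essentially the paper's route with one extra (harmless) preliminary and two missing ingredients. Your post-processing argument --- that $\mathcal{Q}$ is a data-independent measurable map, so $\mathcal{Q}\circ\mathcal{G}$ inherits whatever $(\epsilon,\delta)$ the clipped Gaussian mechanism enjoys --- is correct and suffices to certify that $\mathcal{M}_{\mathrm{Q}}$ is a valid DP mechanism, which the paper never states explicitly. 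Your ``direct'' analysis is then exactly what Appendix~A does: it writes $\Pr[\mathcal{M}_{\mathrm{Q}}=\chi]$ as the Gaussian mass of the quantization cell, i.e.\ $Q\bigl(\frac{\chi-E_{\mathrm{L}}^{\max}-u(\cdot)}{\sigma_{\mathrm{DP}}}\bigr)-Q\bigl(\frac{\chi+E_{\mathrm{L}}^{\max}-u(\cdot)}{\sigma_{\mathrm{DP}}}\bigr)$, and bounds the ratio over adjacent datasets.

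The obstacle you flag --- uniformity over grid alignment and over all adjacent datasets --- is precisely the step you leave unresolved and the paper resolves, albeit crudely: using only $|u(\cdot)|\le C$ it lower-bounds every interior cell mass by the worst case (a cell centred at distance $2C+3\sigma_{\mathrm{DP}}$ from the mean) and upper-bounds it by the best case (a cell centred at the mean), which yields the $\psi_1$ and $\psi$ terms of \eqref{delta_Q} without any per-offset argument. Two further items in the paper's proof are absent from your sketch and do change the final constants: (i) the boundary quantization levels $\chi_1,\chi_{|\mathcal{Q}_{\mathrm{L}}|}$ are saturating cells with one-sided (half-infinite) mass, producing the separate $\psi',\psi_1'$ branch of the $\max$ in \eqref{delta_Q}; and (ii) the mini-batch sampling rate $q$ enters the divergence through the mixture $p=(1-q)\Pr[\mathcal{M}_{\mathrm{Q}}(\mathcal{X}_1)=\chi]+q\Pr[\mathcal{M}_{\mathrm{Q}}(\mathcal{X}_0)=\chi]$, i.e.\ privacy amplification by subsampling, which your decomposition does not accommodate. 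Finally, your claim that cell-averaging makes the loss ``strictly better'' on the $\delta$-event is plausible but is not something the paper proves in that generality; it only establishes the specific $(\epsilon_{\mathrm{Q}},\delta_{\mathrm{Q}})$ pair of Theorem~1 and composes it over $T_0$ rounds by the basic composition theorem, as you anticipate.
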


By evaluating the probability distribution of $\mathcal{M}_\mathrm{Q}$ for any quantization level and its Max Divergence, we establish the upper bound of the privacy loss (i.e., $\epsilon$ and $\delta$) for the quantization-assisted Gaussian mechanism in \textbf{Theorem \ref{privacy_budget}}. 
% This is achieved by first analyzing the probability distribution of $\mathcal{M}_\mathrm{Q}$ for any quantization level, followed by obtaining the Max Divergence and utilizing Composition Theorem~\cite[Thm 3.16]{dwork2014algorithmic}. 
{ 
\begin{theorem}
\label{privacy_budget}
Given the privacy budget $\epsilon_\mathrm{Q}$, the quantization-assisted Gaussian mechanism $\mathcal{M}_\mathrm{Q}$ (\ref{DP_mechanism}) satisfies $(\epsilon_\mathrm{Q},\delta_\mathrm{Q})$-DP: 
    \begin{align}
        \label{delta_Q}
        \delta_\mathrm{Q}&=T_0\cdot\max \big\{\psi-\psi_{1}e^{\frac{\epsilon_\mathrm{Q}}{T_0}},\,\psi'-\psi_{1}'e^{\frac{\epsilon_\mathrm{Q}}{T_0}}\big\},
        % \epsilon_\mathrm{Q}=\ln\frac{}{},\\
        \end{align}
    where, for conciseness,
        \begin{subequations}\small
        \begin{align}
        \label{psi}
        \psi&=(1-q)\psi_{1}+q\left(1-2Q\left(\frac{E_{\mathrm{L}}^{\mathrm{max}}}{\sigma_{\mathrm{DP}}}\right)\right) ;
        \\
\label{psi1}\psi_{1}&\!=\!Q\left(\!\frac{2C\!+\!3\sigma_{\mathrm{DP}}\!-\!E_{\mathrm{L}}^{\mathrm{max}}}{\sigma_{\mathrm{DP}}}\right)\!-\!Q\left(\frac{2C\!+\!3\sigma_{\mathrm{DP}}\!+\!E_{\mathrm{L}}^{\mathrm{max}}}{\sigma_{\mathrm{DP}}}\right);
\\
      \label{psi'}
        \psi'&=(1-q)\psi_{1}'+qQ\left(\frac{3\sigma_{\mathrm{DP}}-E_{\mathrm{L}}^{\mathrm{max}}}{\sigma_{\mathrm{DP}}}\right);
        \\
\label{psi1'}
\psi_{1}'&=Q\left(\frac{2C+3\sigma_{\mathrm{DP}}-E_{\mathrm{L}}^{\mathrm{max}}}{\sigma_{\mathrm{DP}}}\right),
    \end{align}
    \end{subequations}
    where $q$ is the mini-batch sampling rate, and $T_0$ is the maximum number of rounds in which each client can upload its FL local model due to privacy concerns.
    % , {and $E$ is the maximum quantization error.} %$E_{\mathrm{L}}^{\mathrm{max}}$ for brevity.}
    % (due to its privacy concern), 
    % and $\Phi(z)=\frac{1}{\sqrt{2\pi}}\int_{-\infty}^{z}e^{-\frac{z^{2}}{2}}dz$ is the cumulative distribution function (CDF) of a standard normal distribution with zero mean and standard deviation of 1.
    % By letting $\delta_\mathrm{Q}=0$, $\mathcal{M}_\mathrm{Q}$ satisfies $T_0\ln\frac{\psi}{\psi_{1}}$-DP.
\end{theorem}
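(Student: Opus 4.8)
The plan is to bound the $\delta$-Approximate Max Divergence of the quantized Gaussian output \eqref{DP_mechanism} over all pairs of adjacent local datasets, and then lift the resulting single-round guarantee to the at-most-$T_0$ rounds in which a client uploads, via composition of differential privacy. First I would reduce everything to one dimension: the noise vector is i.i.d.\ $\mathbb{N}(0,\sigma_{\mathrm{DP}}^2)$ across coordinates and $\mathcal{Q}(\cdot)$ rounds each coordinate independently, so the law of $\mathcal{M}_{\mathrm{Q}}$ factorizes and so does the density ratio between two adjacent datasets. Using the clipping \eqref{clipping}, the clipped local model lies in the $L_2$-ball of radius $C$, so the worst case concentrates the whole model discrepancy in a single coordinate; the problem becomes that of a scalar in $[-C,C]$ perturbed by $\mathbb{N}(0,\sigma_{\mathrm{DP}}^2)$ and rounded to the nearest of the $2^R$ levels spanning $[-C-3\sigma_{\mathrm{DP}},\,C+3\sigma_{\mathrm{DP}}]$ with spacing $\triangle_{\mathrm{L}}=2E_{\mathrm{L}}^{\mathrm{max}}$ from \eqref{LSB}--\eqref{Quantization_error}.

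Next I would write the output mass at each level in closed form. For an interior level centred at $c$, a perturbed value $u$ is rounded there with probability $Q\!\big(\tfrac{c-u-E_{\mathrm{L}}^{\mathrm{max}}}{\sigma_{\mathrm{DP}}}\big)-Q\!\big(\tfrac{c-u+E_{\mathrm{L}}^{\mathrm{max}}}{\sigma_{\mathrm{DP}}}\big)$, whereas each edge level absorbs the overflow mass and carries probability $Q\!\big(\tfrac{c-u-E_{\mathrm{L}}^{\mathrm{max}}}{\sigma_{\mathrm{DP}}}\big)$ (up to reflection). Incorporating mini-batch subsampling at rate $q$, the output is a two-component mixture: with probability $1-q$ the differing record is not drawn, the two adjacent datasets yield the same coupled scalar, and the level mass equals the reference term $\psi_{1}$ (respectively $\psi_{1}'$ at an edge level); with probability $q$ the record is present and the scalar may move within the clip/quantization geometry, producing the second summand of $\psi$ (respectively $\psi'$). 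This reproduces the structure of \eqref{psi}--\eqref{psi1'}.

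The crux is the maximization over levels and adjacent pairs. By the $(\epsilon,\delta)$-DP characterization \eqref{DefDP} and the (approximate) Max Divergence, an admissible per-round $\delta_{0}$ for budget $\epsilon_{\mathrm{Q}}/T_0$ is $\max_{S}\big(\Pr[\mathcal{M}_{\mathrm{Q}}(\mathcal{D}_0)\!\in\!S]-e^{\epsilon_{\mathrm{Q}}/T_0}\Pr[\mathcal{M}_{\mathrm{Q}}(\mathcal{D}_1)\!\in\!S]\big)$. Using monotonicity of Gaussian interval-mass ratios, one argues this supremum is governed by a single worst level; the worst adjacent pair pushes the two clipped scalars to opposite extremes $\pm C$ aligned with a level sitting at the edge of the lattice near $\pm(C+3\sigma_{\mathrm{DP}})$, so the record-absent branch sees the separation $2C+3\sigma_{\mathrm{DP}}$ — exactly the $Q$-arguments in $\psi_{1}$ and $\psi_{1}'$ — while the record-present branch occupies the level centre or the extreme $\pm C$, giving $1-2Q(E_{\mathrm{L}}^{\mathrm{max}}/\sigma_{\mathrm{DP}})$ at an interior level and $Q((3\sigma_{\mathrm{DP}}-E_{\mathrm{L}}^{\mathrm{max}})/\sigma_{\mathrm{DP}})$ at an edge level. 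The two admissible geometries (interior bin versus saturating edge bin) furnish the two candidates, hence the $\max\{\cdot,\cdot\}$ of $\psi-\psi_{1}e^{\epsilon_{\mathrm{Q}}/T_0}$ and $\psi'-\psi_{1}'e^{\epsilon_{\mathrm{Q}}/T_0}$.

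Finally, basic (sequential) composition of differential privacy over the at-most-$T_0$ uploads — each being $(\epsilon_{\mathrm{Q}}/T_0,\delta_{0})$-DP with $\delta_{0}=\max\{\psi-\psi_{1}e^{\epsilon_{\mathrm{Q}}/T_0},\,\psi'-\psi_{1}'e^{\epsilon_{\mathrm{Q}}/T_0}\}$ — gives $(\epsilon_{\mathrm{Q}},T_0\delta_{0})$-DP, which is precisely \eqref{delta_Q}. The main obstacle, I expect, is the third step: rigorously showing that the stated expressions upper-bound (indeed attain) the approximate Max Divergence needs a careful case split between interior and saturating levels, monotonicity arguments for ratios of Gaussian interval masses, a check that no other level contributes appreciably to $\delta_{0}$, and correct bookkeeping of the clipping range $[-C,C]$ against the wider quantization lattice $[-C-3\sigma_{\mathrm{DP}},C+3\sigma_{\mathrm{DP}}]$ — which is where the $+3\sigma_{\mathrm{DP}}$ offsets come from.
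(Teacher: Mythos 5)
Your proposal follows essentially the same route as the paper's proof: reduce to a single coordinate, write each quantization level's mass as a difference of $Q$-functions (interior) or a single $Q$-function (saturating edge), model subsampling as the $(1-q,q)$ mixture, bound the record-absent branch below by $\psi_{1}$ (resp.\ $\psi_{1}'$) and the record-present branch above by $1-2Q(E_{\mathrm{L}}^{\mathrm{max}}/\sigma_{\mathrm{DP}})$ (resp.\ $Q((3\sigma_{\mathrm{DP}}-E_{\mathrm{L}}^{\mathrm{max}})/\sigma_{\mathrm{DP}})$) using the clipping range against the wider lattice, and finish with basic composition over $T_0$ rounds. The only cosmetic difference is that the paper sidesteps your ``single worst level'' monotonicity argument by bounding the numerator and denominator of the divergence uniformly over all levels, which suffices for the stated upper bound.
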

\begin{proof}
    See \textbf{Appendix \ref{privacy_budget_proof}}.
\end{proof}

When $\delta_\mathrm{Q}=0$, $\mathcal{M}_\mathrm{Q}$ satisfies $T_0\max\{\ln\frac{\psi}{\psi_{1}},\ln\frac{\psi'}{\psi_{1}'}\}$-DP, which can be readily proved by substituting $\delta_\mathrm{Q}=0$ in \eqref{delta_Q} and obtaining $\epsilon_{\mathrm{Q}}=T_0\max\{\ln\frac{\psi}{\psi_{1}},\ln\frac{\psi'}{\psi_{1}'}\}$.

% According to \textbf{Theorem \ref{privacy_budget}}, 
By plugging (\ref{psi}) and (\ref{psi'}) into (\ref{delta_Q}), we have 
\begin{subequations}\small
\label{delta_Q0}
\begin{align}
    % \label{delta_Q1}
    \delta_\mathrm{Q}&=T_0\left[\psi_{1}\left(1-q-e^{\frac{\epsilon_\mathrm{Q}}{T_0}}\right)+q\left(1-2Q\left(\frac{E_{\mathrm{L}}^{\mathrm{max}}}{\sigma_{\mathrm{DP}}}\right)\right)\right] \nonumber\\
    \normalsize\text{or  }
    % \label{delta_Q1'}
\delta_\mathrm{Q}&=T_0\left[\psi_{1}'\left(1\!\!-\!\!q\!\!-\!\!e^{\frac{\epsilon_\mathrm{Q}}{T_0}}\right)\!\!+\!\!qQ\left(\frac{3\sigma_{\mathrm{DP}}-E_{\mathrm{L}}^{\mathrm{max}}}{\sigma_{\mathrm{DP}}}\right)\right]. \nonumber
\end{align}
\end{subequations}
With (\ref{psi1}) and (\ref{psi1'}), $\delta_\mathrm{Q}$ decreases with the increase of $\sigma_{\mathrm{DP}}$, because both $1-2Q(\frac{E_{\mathrm{L}}^{\mathrm{max}}}{\sigma_{\mathrm{DP}}})$ and $Q(\frac{3\sigma_{\mathrm{DP}}-E_{\mathrm{L}}^{\mathrm{max}}}{\sigma_{\mathrm{DP}}})$ decrease with the increase of $\sigma_{\mathrm{DP}}$ {(with $\psi_{1}$ and $\psi_{1}'$ close to zero). 
{
% Based on \textbf{Theorem \ref{privacy_budget}}, 
% $\delta_\mathrm{Q}$ decreases with the increase of $\sigma_{\mathrm{DP}}$. Therefore, the 
{The privacy budget $\epsilon_\mathrm{Q}$, the number of quantization bits $R$, and the clipping threshold $C$ are determined based on the specific requirements of the applications and the structure of models.
For applications where model performance is critical (e.g., autonomous vehicles or industrial systems), $R$ should be chosen for small communication overhead while maintaining an acceptable level of accuracy. The requirement of privacy can be relaxed. 
For applications where privacy protection is prioritized (e.g., healthcare or financial systems), $\epsilon_\mathrm{Q}$ and $\delta_\mathrm{Q}$ should be small.
$C$ can be chosen adaptively based on the model type to ensure accuracy.}
Given the maximum number of uploading rounds for each client $T_{0}$, the clipping threshold $C$, and the number of quantization bits $R$, $\sigma_{\mathrm{DP}}$ can be obtained through a one-dimensional search to satisfy the required privacy budget (i.e., $\epsilon_\mathrm{Q}$ and $\delta_\mathrm{Q}$).
}
}
\section{Convergence Analysis of WPFL}

 \begin{algorithm}[t]
	\caption{{WPFL with Privacy Protection}}
        \label{algorithm}
	\begin{algorithmic}[1]
	\Require $T_0$, $t=1$, $\{t_n=0\}_{n\in\mathcal{N}}$, $\lambda_n^0$, $\boldsymbol{\omega}^0$, $\{\mathbf{\boldsymbol{\varpi}}_{n}^0\}_{n\in\mathcal{N}}$, $N$, $\eta_{\mathrm{F},n}^{0}$, $\eta_{\mathrm{P},n}^{0}$, $\sigma_\mathrm{DP}$.
	\Ensure $\boldsymbol{\omega}^T$, $\{\mathbf{\boldsymbol{\varpi}}_{n}^T\}_{n\in\mathcal{N}}$.
	\While{$\{n|t_n\leq T_0,n\in \cal{N}\}\neq \emptyset$}
%    \STATE/ / oal training process for the global model;
        % \State $//$ \textit{Local training process for the FL model}
        \For {$n \in \mathcal{N}_{t}$}
                \State $//$ \textit{Local training process for the FL model}
        \State {Receive the FL global model $\hat{\boldsymbol{\omega}}_{n,\mathrm{G}}^{t}$;}
	\State Update the FL local model $\boldsymbol{u}_{n}^{t}$ by (\ref{update_lc_model}); 
        % \STATE $\boldsymbol{u}_{n}^{t}=\boldsymbol{u}_{n}^{t}-\eta_{\mathrm{F},n}^{t}\nabla F_{n}(\boldsymbol{u}_{n}^{t})$.
        \State Clip the local model {by \eqref{clipping};}
        % \State $\boldsymbol{u}_{n}^{t}=\boldsymbol{u}_{n}^{t}\big/\max\big(1,{\parallel \boldsymbol{u}_{n}^{t}\parallel}/{C}\big)$;
        \State Perturb, quantize, and upload the FL local models:
        \State $\tilde{\boldsymbol{u}}_{n}^{t}=\boldsymbol{u}_{n}^{t}+\mathbf{z}_{n}^{t}+\mathbf{E}_{n,\mathrm{L}}^{t}$;

        \State $//$ \textit{Local training process for the PL model}
        \State Update the PL model $\boldsymbol{\varpi}_{n}^{t}$ by (\ref{update_pl_model});
        % \STATE $\boldsymbol{\varpi}_{n}^{t}=\boldsymbol{\varpi}_{n}^{t-1}-\eta_{\mathrm{P},n}^{t}(\left(1-\frac{\lambda_{n}^{t}}{2}\right)\nabla F_{n}(\boldsymbol{\varpi}_{n}^{t-1})+\lambda_{n}^{t}(\boldsymbol{\varpi}_{n}^{t-1}-\hat{\boldsymbol{\omega}}_{n,\mathrm{G}}^{t}))$;
        {\State $t_n\leftarrow t_n+1$;}
        \EndFor
        \State $//$ \textit{FL model aggregation}
        \State Update the FL global model $\tilde{\boldsymbol{\omega}}_{\mathrm{L}}^{t}$ based on the received local models $\hat{\boldsymbol{\omega}}_{n}^{t} (\forall n \in \mathcal{N}_{t})$:
$\tilde{\boldsymbol{\omega}}_{\mathrm{L}}^{t}=\frac{1}{|\mathcal{N}_{t}|}{\sum}_{n\in\mathcal{N}_{t}}\hat{\boldsymbol{\omega}}_{n}^{t}$;
        \State Quantize the FL global model:
        $\tilde{\boldsymbol{\omega}}_{\mathrm{G}}^{t}=\tilde{\boldsymbol{\omega}}_{\mathrm{L}}^{t}+\mathbf{E}_{\mathrm{G}}^{t}$;
        % \STATE Receive the downloaded global model $\hat{\boldsymbol{\omega}}_{n,\mathrm{G}}^{t+1}$ by each client $n$;
        \State {$t\leftarrow t+1$}.
        \EndWhile
	\end{algorithmic}
	\end{algorithm}

 	\begin{figure}[!t]
	\centering
        \includegraphics[width=0.5\textwidth]{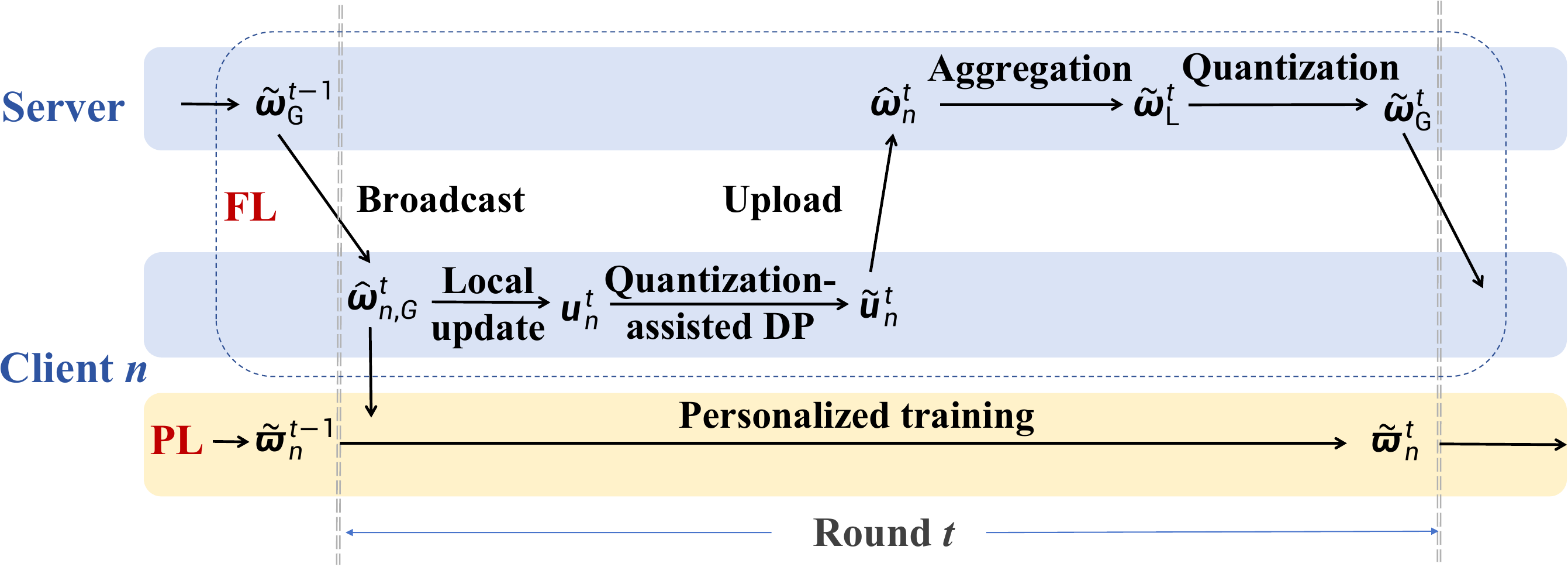}
	\caption{The timeline of WPFL in the $t$-th round.}
	\label{fig:timeline}
	\end{figure}

% \subsection{Modulation and Channel Model}

% \subsection{Convergence Analysis of Wireless PFL}

% This section analyzes the convergence of the WPFL under the following assumptions.
{This section analyzes the convergence of WPFL. 
% We start by presenting the key assumptions. 
Specifically, we first establish the convergence upper bound of the FL global model with the consideration of DP (see Section~V-A), followed by analyzing the per-round convergence of PL (see Section~V-B). Finally, the overall convergence of PL is attained (see Section~V-C). This starts with the following assumptions.}
\begin{assumption}
\label{assumption1}
$\forall n \in \mathcal{N}$, 
\begin{itemize}
    \item 
The local loss function of each client $n$, i.e., $F_{n}\left(\cdot\right)$, is $\mu$-strongly convex \cite{karimi2016linear} and $L$-smooth \cite{o2006metric}, i.e., $F\left(\boldsymbol{\omega}\right)-F\left(\boldsymbol{\omega}^{\ast}\right)\leq\frac{1}{2\mu}\parallel\nabla F\left(\boldsymbol{\omega}\right)\parallel^{2}$ and $\parallel\nabla F\left(\boldsymbol{\omega}\right)-\nabla F\left(\boldsymbol{\omega}'\right)\parallel\leq L\parallel\boldsymbol{\omega}-\boldsymbol{\omega}'\parallel$. Here, $\mu$ and $L$ are constants~\cite{wei2020federated};
%\item 
% $F\left(\boldsymbol{\omega}^{0}\right)-F\left(\boldsymbol{\omega}^{\ast}\right)=\Psi_{1}$, and $\parallel\boldsymbol{\varpi}_{n}^{0}-\boldsymbol{\varpi}_{n}^{\ast}\parallel^{2}=\Psi_{2}$;
% \item 
% The global learning rate $\eta_{\mathrm{F},n}^{t}\leq\frac{2}{L}$, and $\mu>\frac{2-2C+3\sigma_{\mathrm{DP}}_{n}^{t+1}}{2-\lambda_{n}^{t+1}}$;
\item The expectation of stochastic gradients is uniformly bounded at each client $n$ and each round $t$, i.e.,  
$\mathbb{E}\left[\parallel\nabla F_{n}(\boldsymbol{\omega}^{t})\parallel^{2}\right]\leq G_{0}^{2}$ \cite{li2021ditto};
\item The $L_2$-norm between the optimal FL local models and the optimal FL global model is bounded, i.e.,
$\| \boldsymbol{u}_{n}^{\ast}-\boldsymbol{\omega}^{\ast}\|\leq M$, where $M$ is a constant \cite{li2021ditto}.

\end{itemize}
\end{assumption}

%\subsection{Convergence Analysis}
\subsection{Convergence of FL}
The convergence upper bound of the FL global model with DP is established under \textbf{Assumption \ref{assumption1}}, as follows.

\begin{lemma}
\label{Lemma0_0}
Given the FL learning rate $\eta_{\mathrm{F},n}^{t}$ of client $n$, under \textbf{Assumption \ref{assumption1}}, the expected difference between the FL global model $\tilde{\boldsymbol{\omega}}_{\mathrm{L}}^{t}$ aggregated at the server and the optimal FL global model $\boldsymbol{\omega}^{\ast}$ 
% after updating the personalized model 
at the $t$-th round is upper-bounded by
       %\begin{equation} 
       {\small\begin{align} 
        \label{Glb_Up_OneStep}
\mathbb{E}&\left[\parallel\tilde{\boldsymbol{\omega}}_{\mathrm{L}}^{t}-\boldsymbol{\omega}^{\ast}\parallel^{2}\right]\!\!\leq\!\! \left(1\!+\!\frac{1}{\phi_{1}}\!+\!\frac{1}{\phi_{2}}\right) \left[\Theta_{\mathrm{L}}^{t}\!+\!|\boldsymbol{\omega}|\!\left(\sigma_{\mathrm{DP}}^{2}\!+\!\left(E_{\mathrm{L}}^{\mathrm{max}}\right)^{2}\right)\!\right] \nonumber \\
&{+\!\frac{1}{|\mathcal{N}_{t}|}\!\!\underset{n\in\mathcal{N}_{t}}{\sum} \!\!\!\left[\!\left(1\!\!+\!\!\phi_{2}\right)\!\!+\!\!\left(1\!\!+\!\!\phi_{1}\right)\!\!L^{2}\!\!\left(\!\eta_{\mathrm{F},n}^{t}\!\right)^{2}\!\!\!-\!\!u\eta_{\mathrm{F},n}^{t}\!\right]\!\!\mathbb{E}\!\left[\!\parallel\!\!\hat{\boldsymbol{\omega}}_{n,\mathrm{G}}^{t}\!\!-\!\!\boldsymbol{\omega}^{\ast}\!\!\parallel^{2}\right]\!\!,}
        \end{align}}%
	%\end{equation}
 where $\phi_{1}>0$ and $\phi_{2}>0$ can be any positive constants, and $\Theta_{\mathrm{L}}^{t}=\frac{2C^{2}+\left(2-\beta_{\mathrm{L}}^{2}\right)|\boldsymbol{\omega}|(C+3\sigma_{\mathrm{DP}})^{2}-|\boldsymbol{\omega}|\sigma_{\mathrm{DP}}^{2}}{|\mathcal{N}_{t}|}{\sum}_{n\in\mathcal{N}_{t}}\rho_{n,\mathrm{L}}^{t}$.
\end{lemma}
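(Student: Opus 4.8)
The plan is to expand $\mathbb{E}\big[\parallel\tilde{\boldsymbol{\omega}}_{\mathrm{L}}^{t}-\boldsymbol{\omega}^{\ast}\parallel^{2}\big]$ by inserting the aggregation rule \eqref{aggregated_glb} together with the error decompositions \eqref{received_local_model} and \eqref{noisy_local_model}, and then control each resulting term by repeated use of the inequality $\parallel \mathbf{a}+\mathbf{b}\parallel^{2}\le (1+\phi)\parallel\mathbf{a}\parallel^{2}+(1+\tfrac{1}{\phi})\parallel\mathbf{b}\parallel^{2}$ for a free parameter $\phi>0$. First I would write $\tilde{\boldsymbol{\omega}}_{\mathrm{L}}^{t}-\boldsymbol{\omega}^{\ast}=\frac{1}{|\mathcal{N}_t|}\sum_{n\in\mathcal{N}_t}(\hat{\boldsymbol{\omega}}_{n}^{t}-\boldsymbol{\omega}^{\ast})$, and split $\hat{\boldsymbol{\omega}}_{n}^{t}-\boldsymbol{\omega}^{\ast}$ into a ``clean'' part $\boldsymbol{u}_{n}^{t}-\boldsymbol{\omega}^{\ast}$ plus the DP noise $\mathbf{z}_n^t$, the quantization error $\mathbf{E}_{n,\mathrm{L}}^{t}$, and the channel-induced error $\mathbf{s}_n^t\circ(\boldsymbol{\hat u}_n^t-\tilde{\boldsymbol{u}}_n^t)$. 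The two free constants $\phi_1,\phi_2$ in the statement will be used to peel off, respectively, the contribution of the quantization/DP perturbation and the contribution of the transmission errors, leaving the $(1+\tfrac{1}{\phi_1}+\tfrac{1}{\phi_2})$ prefactor on the perturbation/error terms.

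Next I would handle the ``clean'' term $\boldsymbol{u}_{n}^{t}-\boldsymbol{\omega}^{\ast}$. Using the local update \eqref{update_lc_model}, $\boldsymbol{u}_{n}^{t}=\hat{\boldsymbol{\omega}}_{n,\mathrm{G}}^{t}-\eta_{\mathrm{F},n}^{t}\nabla F_n(\hat{\boldsymbol{\omega}}_{n,\mathrm{G}}^{t})$, so
\[
\parallel \boldsymbol{u}_n^t-\boldsymbol{\omega}^{\ast}\parallel^{2}
=\parallel \hat{\boldsymbol{\omega}}_{n,\mathrm{G}}^{t}-\boldsymbol{\omega}^{\ast}\parallel^{2}
-2\eta_{\mathrm{F},n}^{t}\langle \nabla F_n(\hat{\boldsymbol{\omega}}_{n,\mathrm{G}}^{t}),\hat{\boldsymbol{\omega}}_{n,\mathrm{G}}^{t}-\boldsymbol{\omega}^{\ast}\rangle
+(\eta_{\mathrm{F},n}^{t})^{2}\parallel\nabla F_n(\hat{\boldsymbol{\omega}}_{n,\mathrm{G}}^{t})\parallel^{2}.
\]
By $\mu$-strong convexity the inner-product term is bounded below by $\mu\parallel\hat{\boldsymbol{\omega}}_{n,\mathrm{G}}^{t}-\boldsymbol{\omega}^{\ast}\parallel^{2}$ (using $\nabla F_n(\boldsymbol{\omega}^\ast)=0$ and co-coercivity/strong monotonicity), and by $L$-smoothness $\parallel\nabla F_n(\hat{\boldsymbol{\omega}}_{n,\mathrm{G}}^{t})\parallel\le L\parallel\hat{\boldsymbol{\omega}}_{n,\mathrm{G}}^{t}-\boldsymbol{\omega}^{\ast}\parallel$; this yields the factor $\big(1+L^2(\eta_{\mathrm{F},n}^t)^2-\mu\eta_{\mathrm{F},n}^t\big)\mathbb{E}\big[\parallel\hat{\boldsymbol{\omega}}_{n,\mathrm{G}}^{t}-\boldsymbol{\omega}^{\ast}\parallel^{2}\big]$ appearing in the bound (after the $(1+\phi_1)$ from the splitting is absorbed). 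For the perturbation terms I would use $\mathbb{E}\parallel\mathbf{z}_n^t\parallel^2=|\boldsymbol{\omega}|\sigma_{\mathrm{DP}}^2$, the element-wise bound $|E^t_{n,i,\mathrm{L}}|\le E_{\mathrm{L}}^{\max}$ giving $\parallel\mathbf{E}_{n,\mathrm{L}}^t\parallel^2\le |\boldsymbol{\omega}|(E_{\mathrm{L}}^{\max})^2$, and independence of $\mathbf{z}_n^t$ to kill cross terms; the channel-error term is bounded via the error-indicator $\mathbf{s}_n^t$ together with $|\zeta_{n,i,\mathrm{L}}^t|\le |u_{n,i}^t|+C+3\sigma_{\mathrm{DP}}$, taking expectation over $\mathbf{s}_n^t$ to bring in $\rho_{n,\mathrm{L}}^t$ and produce $\Theta_{\mathrm{L}}^t$ after collecting the constants $2C^2$, $(2-\beta_{\mathrm{L}}^2)|\boldsymbol{\omega}|(C+3\sigma_{\mathrm{DP}})^2$, and $-|\boldsymbol{\omega}|\sigma_{\mathrm{DP}}^2$.

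The main obstacle I anticipate is bookkeeping the channel-error term carefully enough to reproduce the precise constant $\Theta_{\mathrm{L}}^t$: one must account for the fact that $\hat{\boldsymbol{\omega}}_n^t=\mathbf{s}_n^t\circ\boldsymbol{\hat u}_n^t+(1-\mathbf{s}_n^t)\circ\tilde{\boldsymbol{u}}_n^t$ mixes the erroneous and error-free versions per coordinate, so the per-coordinate error is $s_{n,i}^t(\hat u_{n,i}^t-\tilde u_{n,i}^t)$ whose magnitude involves both the clipping bound on $\boldsymbol{u}_n^t$ and the quantization/DP range; taking $\mathbb{E}[s_{n,i}^t]=\rho_{n,\mathrm{L}}^t$ and summing over $i$ has to be reconciled with the separate $|\boldsymbol{\omega}|(\sigma_{\mathrm{DP}}^2+(E_{\mathrm{L}}^{\max})^2)$ term so that the DP-noise variance is not double-counted — this is the source of the $-|\boldsymbol{\omega}|\sigma_{\mathrm{DP}}^2$ correction inside $\Theta_{\mathrm{L}}^t$. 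Once the error term is pinned down, the remaining steps are routine applications of Jensen/Cauchy–Schwarz (to pull $\frac{1}{|\mathcal{N}_t|}\sum$ outside the square) and the strong-convexity/smoothness estimates above, after which collecting terms gives \eqref{Glb_Up_OneStep}.
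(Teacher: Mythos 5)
Your plan is correct and follows essentially the same route as the paper's proof in Appendix~B: the same decomposition of $\hat{\boldsymbol{\omega}}_{n}^{t}-\boldsymbol{\omega}^{\ast}$ into the local-update term plus an aggregate perturbation, the same Young-type splitting with the two free parameters $\phi_1,\phi_2$ producing the $(1+\tfrac{1}{\phi_1}+\tfrac{1}{\phi_2})$ prefactor, the same strong-convexity/$L$-smoothness treatment yielding $(1+\phi_2)+(1+\phi_1)L^2(\eta_{\mathrm{F},n}^t)^2-\mu\eta_{\mathrm{F},n}^t$, and the same $\rho_{n,\mathrm{L}}^t$-weighted bookkeeping of the channel, DP, and quantization errors that produces $\Theta_{\mathrm{L}}^t$. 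The one execution detail to get right is the grouping of the perturbation: the paper writes it as $\mathbf{s}_{n}^{t}\circ\boldsymbol{\zeta}_{n,\mathrm{L}}^{t}+(\mathbf{1}-\mathbf{s}_{n}^{t})\circ(\mathbf{z}_{n}^{t}+\mathbf{E}_{n,\mathrm{L}}^{t})$, whose two pieces have disjoint coordinate supports so the cross term vanishes identically (via $s(1-s)=0$, not independence); with your grouping $\mathbf{z}_{n}^{t}+\mathbf{E}_{n,\mathrm{L}}^{t}+\mathbf{s}_{n}^{t}\circ(\hat{\boldsymbol{u}}_{n}^{t}-\tilde{\boldsymbol{u}}_{n}^{t})$ the cross term does not cancel and you would need to regroup (the two forms are algebraically equal) to recover the exact constant in $\Theta_{\mathrm{L}}^t$.
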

\begin{proof}
    See \textbf{Appendix \ref{Lemma0_0_proof}}.
\end{proof}

\begin{theorem}
    \label{theorem1}
    Given the FL local learning rate $\eta_{\mathrm{F},n}^{t}$ of client $n$, under \textbf{Assumption \ref{assumption1}}, the expected difference between the FL global model $\hat{\boldsymbol{\omega}}_{n,\mathrm{G}}^{t+1}$ received by client $n$
at the $(t+1)$-th round and the optimal FL global model $\boldsymbol{\omega}^{\ast}$ is upper-bounded:
{
\begin{subequations} \small
    \begin{align}
        \label{down_glb_di}
        \mathbb{E}&\left[\parallel\!\!\hat{\boldsymbol{\omega}}_{\mathrm{n,G}}^{t+1}\!\!-\!\!\boldsymbol{\omega}^{\ast}\!\!\parallel^{2}\right]\!\! \leq \!\!\frac{1}{|\mathcal{N}_{t}|}\!\!{\sum}_{n\in\mathcal{N}_{t}} \!\!\varepsilon_{\mathrm{F},n}^{t}\mathbb{E}\left[\parallel\!\!\hat{\boldsymbol{\omega}}_{n,\mathrm{G}}^{t}\!\!-\!\!\boldsymbol{\omega}^{\ast}\!\!\parallel^{2}\right] \!\!+\!\!\Gamma_{t+1}\\
        &\leq \!\!\left(\varepsilon_{\mathrm{F}}^{\max}\right)^{t\!+\!1}\mathbb{E}\!\left[\parallel\hat{\boldsymbol{\omega}}_{\mathrm{n,G}}^{0}\!\!-\!\!\boldsymbol{\omega}\parallel^{2}\right]\!\!+\!\!\frac{(\varepsilon_{\mathrm{F}}^{\max})^{t\!+\!1}\!-\!1}{\varepsilon_{\mathrm{F}}^{\max}-1}\Gamma^{\max},
    \label{down_glb_overall}
        % \nonumber\\
        % &\!\!+\!\!h_1(\rho_{n,\mathrm{G}}^{t+1})\Theta_{\mathrm{L}}^{t}\!+\!\Gamma_{0}\rho_{n,\mathrm{G}}^{t+1}+\Gamma_{1} 
    \end{align}
\end{subequations}}%
where, for brevity, we define
\begin{subequations} \small
    \label{eq_con_FL}
    \begin{align}
    \label{Gamma}
    \Gamma_{t+1}&\triangleq h_{1}(\rho_{n,\mathrm{G}}^{t+1})\Theta_{\mathrm{L}}^{t}+\Gamma_{0}\rho_{n,\mathrm{G}}^{t+1}+\Gamma_{1},
    \\
        \label{vareps_F}
\varepsilon_{\mathrm{F},n}^t\!&\triangleq\!\!\left(\!1\!\!+\!\!\varphi_{1}\right)\!\!\left(\!\left(1\!\!+\!\!\phi_{2}\right)\!\!+\!\!\left(1\!\!+\!\!\phi_{1}\right)\!L^{2}\!\left(\eta_{\mathrm{F},n}^{t}\right)^{2}\!\!-\!\!\mu\eta_{\mathrm{F},n}^{t}\!\right)\!,\\
\label{h1}
        {h_1}(\rho_{n,\mathrm{G}}^{t+1})&\!\triangleq\!2\!\left(\!1\!\!+\!\!\frac{1}{\varphi_{1}}\!\right)\!\left(\!1\!\!+\!\!\varphi_{2}\!\right)\!\rho_{n,\mathrm{G}}^{t+1}\!\!+\!\!\left(\!1\!\!+\!\!\varphi_{1}\!\right)\!\left(\!1\!\!+\!\!\frac{1}{\phi_{1}}\!\!+\!\!\frac{1}{\phi_{2}}\!\right) \!,
 \\
 \begin{split}
\Gamma_0&\triangleq\left(1+\frac{1}{\varphi_{1}}\right)\left(2\left(1+\frac{1}{\varphi_{2}}\right)C^{2}+2|\boldsymbol{\omega}|\left(1+\varphi_{2}\right)\cdot\right.\\
&\left.\left(\sigma_{\mathrm{DP}}^{2}\!\!+\!\!\left(E_{\mathrm{L}}^{\mathrm{max}}\right)^{2}\right)\!\!+\!\!2|\boldsymbol{\omega}|(C^2\!\!-\!\!\left(E_{\mathrm{L}}^{\mathrm{max}}\right)^{2})\right),
 \end{split}
     \\
     \begin{split}
\Gamma_{1}&\triangleq|\boldsymbol{\omega}|\left(1+\varphi_{1}\right)\left(1+\frac{1}{\phi_{1}}+\frac{1}{\phi_{2}}\right)\left(\sigma_{\mathrm{DP}}^{2}+\left(E_{\mathrm{L}}^{\mathrm{max}}\right)^{2}\right)\\
&+2|\boldsymbol{\omega}|\left(1+\frac{1}{\varphi_{1}}\right)\left(E_{\mathrm{G}}^{\mathrm{max}}\right)^{2},
     \end{split}
    \end{align}
\end{subequations}
where {$\varepsilon_{\mathrm{F}}^{\max}\in (0,1)$} and  $\Gamma^{\max}$ are the maxima of $\varepsilon_{\mathrm{F},n}^{t}$ and $\Gamma_{t+1}$, respectively; $\varphi_{1}$ and $\varphi_{2}$ are any positive constants.
\end{theorem}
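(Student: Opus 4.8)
The plan is to chain \textbf{Lemma~\ref{Lemma0_0}} with the downlink transmission model \eqref{error_glb}--\eqref{noisy_glb} to obtain the one-step recursion \eqref{down_glb_di}, and then unroll that recursion to obtain the telescoped bound \eqref{down_glb_overall}. First I would start from the received FL global model at client~$n$ in round $t+1$, namely $\hat{\boldsymbol{\omega}}_{n,\mathrm{G}}^{t+1}=\mathbf{s}_{n,\mathrm{G}}^{t+1}\circ\hat{\boldsymbol{\omega}}_{n,\mathrm{L}}^{t+1}+(\mathbf{1}_{|\boldsymbol{\omega}|}-\mathbf{s}_{n,\mathrm{G}}^{t+1})\circ\tilde{\boldsymbol{\omega}}_{\mathrm{G}}^{t}$, and bound $\mathbb{E}[\|\hat{\boldsymbol{\omega}}_{n,\mathrm{G}}^{t+1}-\boldsymbol{\omega}^{\ast}\|^{2}]$ by splitting it over the coordinates that are in error (indicated by $\mathbf{s}_{n,\mathrm{G}}^{t+1}$, whose entries are Bernoulli with mean $\rho_{n,\mathrm{G}}^{t+1}$) and those that are error-free. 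On the error-free coordinates one sees $\tilde{\boldsymbol{\omega}}_{\mathrm{G}}^{t}=\tilde{\boldsymbol{\omega}}_{\mathrm{L}}^{t}+\mathbf{E}_{\mathrm{G}}^{t}$, so I would apply Young's inequality with parameter $\varphi_{1}$ to separate $\|\tilde{\boldsymbol{\omega}}_{\mathrm{L}}^{t}-\boldsymbol{\omega}^{\ast}\|^{2}$ from the downlink quantization error term, whose per-coordinate magnitude is at most $E_{\mathrm{G}}^{\mathrm{max}}$ (hence contributes at most $|\boldsymbol{\omega}|(E_{\mathrm{G}}^{\mathrm{max}})^{2}$, the source of the last term in $\Gamma_{1}$). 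On the erroneous coordinates, $\hat{\boldsymbol{\omega}}_{n,\mathrm{L}}^{t+1}=\tilde{\boldsymbol{\omega}}_{\mathrm{L}}^{t}+\boldsymbol{\zeta}_{n,\mathrm{G}}^{t+1}$ with $|\zeta_{n,i,\mathrm{G}}^{t+1}|\le|\omega_{i,\mathrm{L}}^{t}|+C$; applying Young's inequality with parameter $\varphi_{2}$ and bounding $\|\boldsymbol{\omega}_{\mathrm{L}}^{t}\|^{2}\le C^{2}$ (post-clipping / aggregation of clipped models) produces the $2(1+\tfrac{1}{\varphi_{1}})(1+\tfrac{1}{\varphi_{2}})C^{2}\rho_{n,\mathrm{G}}^{t+1}$ contribution. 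Taking expectation over the Bernoulli error indicators turns these into $\rho_{n,\mathrm{G}}^{t+1}$-weighted terms, which is exactly how $h_1(\rho_{n,\mathrm{G}}^{t+1})$ and $\Gamma_0\rho_{n,\mathrm{G}}^{t+1}$ arise.

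Next I would substitute the bound on $\mathbb{E}[\|\tilde{\boldsymbol{\omega}}_{\mathrm{L}}^{t}-\boldsymbol{\omega}^{\ast}\|^{2}]$ from \textbf{Lemma~\ref{Lemma0_0}}. That lemma already carries the factor $(1+\tfrac{1}{\phi_{1}}+\tfrac{1}{\phi_{2}})$ in front of $\Theta_{\mathrm{L}}^{t}+|\boldsymbol{\omega}|(\sigma_{\mathrm{DP}}^{2}+(E_{\mathrm{L}}^{\mathrm{max}})^{2})$ and the coefficient $(1+\phi_{2})+(1+\phi_{1})L^{2}(\eta_{\mathrm{F},n}^{t})^{2}-\mu\eta_{\mathrm{F},n}^{t}$ multiplying $\mathbb{E}[\|\hat{\boldsymbol{\omega}}_{n,\mathrm{G}}^{t}-\boldsymbol{\omega}^{\ast}\|^{2}]$; multiplying through by the outer $(1+\varphi_{1})$ from the Young step gives precisely $\varepsilon_{\mathrm{F},n}^{t}$ as the coefficient of the "previous iterate" term and assembles the constant/noise pieces into $h_1(\rho_{n,\mathrm{G}}^{t+1})\Theta_{\mathrm{L}}^{t}+\Gamma_0\rho_{n,\mathrm{G}}^{t+1}+\Gamma_1=\Gamma_{t+1}$. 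Collecting the per-client terms under $\frac{1}{|\mathcal{N}_t|}\sum_{n\in\mathcal{N}_t}$ yields \eqref{down_glb_di}. For \eqref{down_glb_overall}, I would bound $\varepsilon_{\mathrm{F},n}^{t}\le\varepsilon_{\mathrm{F}}^{\max}$ and $\Gamma_{t+1}\le\Gamma^{\max}$ uniformly, so the recursion becomes $a_{t+1}\le\varepsilon_{\mathrm{F}}^{\max}a_{t}+\Gamma^{\max}$ with $a_t\triangleq\mathbb{E}[\|\hat{\boldsymbol{\omega}}_{n,\mathrm{G}}^{t}-\boldsymbol{\omega}^{\ast}\|^{2}]$ (the average over $\mathcal{N}_t$ of the per-client quantities can be replaced by a single scalar sequence after maximization), and the standard geometric-series unrolling $a_{t+1}\le(\varepsilon_{\mathrm{F}}^{\max})^{t+1}a_0+\sum_{j=0}^{t}(\varepsilon_{\mathrm{F}}^{\max})^{j}\Gamma^{\max}=(\varepsilon_{\mathrm{F}}^{\max})^{t+1}a_0+\tfrac{(\varepsilon_{\mathrm{F}}^{\max})^{t+1}-1}{\varepsilon_{\mathrm{F}}^{\max}-1}\Gamma^{\max}$ gives the claim, using the learning-rate choice that guarantees $\varepsilon_{\mathrm{F}}^{\max}\in(0,1)$ so the geometric series converges.

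The main obstacle I anticipate is the careful coordinate-wise bookkeeping when taking expectations over the error indicators $\mathbf{s}_{n,\mathrm{G}}^{t+1}$ jointly with the cross terms: one must verify that cross terms of the form $\langle \mathbf{s}_{n,\mathrm{G}}^{t+1}\circ(\cdot),\,(\mathbf{1}-\mathbf{s}_{n,\mathrm{G}}^{t+1})\circ(\cdot)\rangle$ vanish (they do, since $\mathbf{s}\circ(\mathbf{1}-\mathbf{s})=\mathbf{0}$ pointwise), and that the Young's-inequality splittings are arranged so that the $\boldsymbol{\zeta}$ error bounds, which depend on the (random) magnitudes $|\omega_{i,\mathrm{L}}^{t}|$, can be replaced by the deterministic clipping bound before the expectation is taken. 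A secondary subtlety is the interchange between the outer average over the random client set $\mathcal{N}_t$ and the maximization that produces $\varepsilon_{\mathrm{F}}^{\max}$ and $\Gamma^{\max}$; I would handle this by first establishing \eqref{down_glb_di} for the averaged quantity and only then invoking the uniform bounds, so that no independence assumption on $\mathcal{N}_t$ is needed. Everything else — the strong convexity/$L$-smoothness estimate, the bounded-gradient and bounded-$\|\boldsymbol{u}_n^\ast-\boldsymbol{\omega}^\ast\|$ assumptions, and the Gaussian-noise second-moment $|\boldsymbol{\omega}|\sigma_{\mathrm{DP}}^{2}$ — is routine and already packaged inside \textbf{Lemma~\ref{Lemma0_0}}.
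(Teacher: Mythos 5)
Your proposal follows essentially the same route as the paper's proof: decompose $\hat{\boldsymbol{\omega}}_{n,\mathrm{G}}^{t+1}-\boldsymbol{\omega}^{\ast}$ into the downlink error term $\mathbf{s}_{n,\mathrm{G}}^{t+1}\circ\boldsymbol{\zeta}_{n,\mathrm{G}}^{t+1}+(\mathbf{1}_{|\boldsymbol{\omega}|}-\mathbf{s}_{n,\mathrm{G}}^{t+1})\circ\mathbf{E}_{\mathrm{G}}^{t}$ plus $\tilde{\boldsymbol{\omega}}_{\mathrm{L}}^{t}-\boldsymbol{\omega}^{\ast}$, apply Young's inequality with $\varphi_{1}$, bound the error term coordinate-wise via the Bernoulli indicators (with vanishing cross terms), invoke \textbf{Lemma~\ref{Lemma0_0}} for the aggregated-model term so that the outer $(1+\varphi_{1})$ produces $\varepsilon_{\mathrm{F},n}^{t}$, and unroll the linear recursion geometrically. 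The only imprecise phrase is "bounding $\|\boldsymbol{\omega}_{\mathrm{L}}^{t}\|^{2}\leq C^{2}$": the aggregated model $\tilde{\boldsymbol{\omega}}_{\mathrm{L}}^{t}$ is \emph{not} itself bounded by $C$ (it carries DP noise, quantization, and transmission errors), which is exactly why the $\varphi_{2}$-Young split into $(1+\tfrac{1}{\varphi_{2}})C^{2}+(1+\varphi_{2})\bigl(\Theta_{\mathrm{L}}^{t}+|\boldsymbol{\omega}|(\sigma_{\mathrm{DP}}^{2}+(E_{\mathrm{L}}^{\mathrm{max}})^{2})\bigr)$ is needed --- you do invoke that split and correctly attribute $h_{1}$ and $\Gamma_{0}$ to it, so the argument goes through.
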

\begin{proof}
    See \textbf{Appendix \ref{theorem1_proof}}.
\end{proof}

    According to \textbf{Theorem \ref{theorem1}}, the right-hand side (RHS) of (\ref{down_glb_di}) increases with $\rho_{n,\mathrm{G}}^{t+1}$, $\Theta_{\mathrm{L}}^{t}$, $\sigma_{\mathrm{DP}}$, and $E_{\mathrm{L}}^{\mathrm{max}}$.
    In other words, the DP noise, imperfect wireless channel, and quantization degrade the convergence of the FL global model.

\subsection{Per-Round Convergence of PL}
Next, we analyze the convergence of the PL models under \textbf{Assumption \ref{assumption1}}, as follows. 

\begin{theorem}
    \label{theorem_PL_con}
    Given the PL learning rate $\eta_{\mathrm{P},n}^{t+1}$, the FL local learning rate $\eta_{\mathrm{F},n}^{t}$, and the weighting coefficient $\lambda_{n}^{t+1}$ of client $n$, under \textbf{Assumption \ref{assumption1}}, the expected difference between the PL model $\tilde{\boldsymbol{\varpi}}_{n}^{t+1}$ at the $(t+1)$-th round and the optimal PL model $\boldsymbol{\varpi}_n^{\ast}$ 
% after updating the personalized model 
is upper-bounded by
{
\begin{align}
\label{PL_con}
% \begin{aligned}
    \mathbb{E}&\left[\parallel\tilde{\boldsymbol{\varpi}}_{n}^{t+1}\!\!-\!\!\boldsymbol{\varpi}_{n}^{\ast}\parallel^{2}\right]\leq \varepsilon_{\mathrm{P},n}^{t+1}\mathbb{E}\left[\parallel\tilde{\boldsymbol{\varpi}}_{n}^{t}\!\!-\!\!\boldsymbol{\varpi}_{n}^{\ast}\parallel^{2}\right] \!\!+\!\! \Phi_n^{t+1}
,
% \end{aligned}
\end{align}}
where, for brevity, we define  
\begin{subequations}\small
\begin{align}
\label{varepsilon_P}
\varepsilon_{\mathrm{P},n}^{t+1}&\triangleq1\!\!-\!\!\eta_{\mathrm{P},n}^{t+1}\left(\left(1\!\!-\!\!{\lambda_{n}^{t+1}}\big/{2}\right)\mu\!\!+\!\!\lambda_{n}^{t+1}\right)\!\!+\!\!\left(\eta_{\mathrm{P},n}^{t+1}\right)^{2};\\
% \label{h0}
% {h_0}\!&\left(\eta_{\mathrm{F},n}^{t},\eta_{\mathrm{P},n}^{t+1},\lambda_{n}^{t+1}\right)\!=\!\!\Psi_{n}^{t+1}\varepsilon_{\mathrm{F},n}^{t},\\
\label{Psi_n}
\Psi_{n}^{t+1}&\triangleq\left(\left(\eta_{\mathrm{P},n}^{t+1}\right)^{2}+1\right)\left(\lambda_{n}^{t+1}\right)^{2}+{\left(\eta_{\mathrm{P},n}^{t+1}\right)^{3}}\big/{\lambda_{n}^{t+1}};
\\
     \begin{split}  
\Phi_n^{t+1}&\triangleq\left(1\!+\!\left(\lambda_{n}^{t+1}\right)^{3}\right)\!\left(\eta_{\mathrm{P},n}^{t+1}\right)^{2}\!\!G_n^{t+1} \!+ \!\!\Psi_{n}^{t+1}\!\Big(\Gamma_{t+1}\!\\
&{+\frac{(G_{0}^{2}+M\mu)^{2}}{|\mathcal{N}_{t}|\mu^{2}}\!\!{\sum}_{n\in\mathcal{N}_{t}} \varepsilon_{\mathrm{F},n}^t\Big)} ;
     \end{split}
     \label{Phi}\\
\label{G_n}
G_n^{t+1}&\triangleq\left(\left(1-{\lambda_{n}^{t+1}}\big/{2}\right)G_{0}+\lambda_{n}^{t+1}({G_{0}}\big/{\mu}+M)\right)^{2}.
    \end{align}
    \label{eq30}
\end{subequations}
\end{theorem}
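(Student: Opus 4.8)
The plan is to unroll the PL update \eqref{update_pl_model}, recenter it at the optimal PL model $\boldsymbol{\varpi}_n^\ast$, and expose the error of the received FL global model as an additive perturbation, then control the resulting terms with strong convexity, the bounded-gradient part of \textbf{Assumption~\ref{assumption1}}, and a few Young's inequalities whose weights are tuned to reproduce $\varepsilon_{\mathrm{P},n}^{t+1}$ and $\Psi_n^{t+1}$ exactly. First I would use the first-order optimality condition for \eqref{objective_p}: since $\boldsymbol{\varpi}_n^\ast$ minimizes $f_n(\cdot;\boldsymbol{\omega}^\ast)$, we have $(1-\tfrac{\lambda_n^{t+1}}{2})\nabla F_n(\boldsymbol{\varpi}_n^\ast)+\lambda_n^{t+1}(\boldsymbol{\varpi}_n^\ast-\boldsymbol{\omega}^\ast)=\mathbf{0}$, i.e. $\nabla f_n(\boldsymbol{\varpi}_n^\ast;\boldsymbol{\omega}^\ast)=\mathbf{0}$. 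Subtracting $\boldsymbol{\varpi}_n^\ast$ from \eqref{update_pl_model} and adding and subtracting $\boldsymbol{\omega}^\ast$ inside the proximal term, I would rewrite the one-step error as $\tilde{\boldsymbol{\varpi}}_n^{t+1}-\boldsymbol{\varpi}_n^\ast=(\tilde{\boldsymbol{\varpi}}_n^{t}-\boldsymbol{\varpi}_n^\ast)-\eta_{\mathrm{P},n}^{t+1}\left[\nabla f_n(\tilde{\boldsymbol{\varpi}}_n^{t};\boldsymbol{\omega}^\ast)-\nabla f_n(\boldsymbol{\varpi}_n^\ast;\boldsymbol{\omega}^\ast)\right]-\eta_{\mathrm{P},n}^{t+1}\lambda_n^{t+1}(\boldsymbol{\omega}^\ast-\hat{\boldsymbol{\omega}}_{n,\mathrm{G}}^{t+1})$, i.e. an exact gradient step on the $\mu_n'\triangleq(1-\tfrac{\lambda_n^{t+1}}{2})\mu+\lambda_n^{t+1}$-strongly convex surrogate $f_n(\cdot;\boldsymbol{\omega}^\ast)$ plus a perturbation $\eta_{\mathrm{P},n}^{t+1}\lambda_n^{t+1}(\hat{\boldsymbol{\omega}}_{n,\mathrm{G}}^{t+1}-\boldsymbol{\omega}^\ast)$ that isolates the downlink, quantization, and DP error of the received FL global model.

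Next I would square both sides and expand into three pieces: the clean-step square $\|(\tilde{\boldsymbol{\varpi}}_n^{t}-\boldsymbol{\varpi}_n^\ast)-\eta_{\mathrm{P},n}^{t+1}(\nabla f_n(\tilde{\boldsymbol{\varpi}}_n^{t};\boldsymbol{\omega}^\ast)-\nabla f_n(\boldsymbol{\varpi}_n^\ast;\boldsymbol{\omega}^\ast))\|^2$, the perturbation square $(\eta_{\mathrm{P},n}^{t+1})^2(\lambda_n^{t+1})^2\|\hat{\boldsymbol{\omega}}_{n,\mathrm{G}}^{t+1}-\boldsymbol{\omega}^\ast\|^2$, and their cross term. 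For the clean-step square, the inner product is bounded below by $\mu_n'\|\tilde{\boldsymbol{\varpi}}_n^{t}-\boldsymbol{\varpi}_n^\ast\|^2$ via strong monotonicity of $\nabla f_n(\cdot;\boldsymbol{\omega}^\ast)$, while the squared gradient is bounded by $G_n^{t+1}$ of \eqref{G_n}: using $\|\nabla F_n\|\le G_0$, $\|\boldsymbol{u}_n^\ast-\boldsymbol{\omega}^\ast\|\le M$, and $\mu$-strong convexity (which gives $\|\tilde{\boldsymbol{\varpi}}_n^{t}-\boldsymbol{\omega}^\ast\|\le G_0/\mu+M$ and the same for $\boldsymbol{\varpi}_n^\ast$), one gets $\|\nabla f_n(\cdot;\boldsymbol{\omega}^\ast)\|\le(1-\tfrac{\lambda_n^{t+1}}{2})G_0+\lambda_n^{t+1}(G_0/\mu+M)$. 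The cross term and the perturbation square are then split with Young's inequalities whose weights (in particular one producing the $(\eta_{\mathrm{P},n}^{t+1})^3/\lambda_n^{t+1}$ contribution) are chosen so that every $\|\tilde{\boldsymbol{\varpi}}_n^{t}-\boldsymbol{\varpi}_n^\ast\|^2$ term collapses to $\varepsilon_{\mathrm{P},n}^{t+1}$ of \eqref{varepsilon_P}, the residual factor multiplying $\|\hat{\boldsymbol{\omega}}_{n,\mathrm{G}}^{t+1}-\boldsymbol{\omega}^\ast\|^2$ becomes $\Psi_n^{t+1}$ of \eqref{Psi_n}, and the bounded-gradient residue becomes $(1+(\lambda_n^{t+1})^3)(\eta_{\mathrm{P},n}^{t+1})^2G_n^{t+1}$.

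Finally I would take expectations and invoke \textbf{Theorem~\ref{theorem1}}: \eqref{down_glb_di} gives $\mathbb{E}\|\hat{\boldsymbol{\omega}}_{n,\mathrm{G}}^{t+1}-\boldsymbol{\omega}^\ast\|^2\le\Gamma_{t+1}+\tfrac{1}{|\mathcal{N}_t|}\sum_{n\in\mathcal{N}_t}\varepsilon_{\mathrm{F},n}^{t}\,\mathbb{E}\|\hat{\boldsymbol{\omega}}_{n,\mathrm{G}}^{t}-\boldsymbol{\omega}^\ast\|^2$, and bounding $\mathbb{E}\|\hat{\boldsymbol{\omega}}_{n,\mathrm{G}}^{t}-\boldsymbol{\omega}^\ast\|^2\le(G_0^2+M\mu)^2/\mu^2$ (again by strong convexity and \textbf{Assumption~\ref{assumption1}}) turns $\Psi_n^{t+1}\mathbb{E}\|\hat{\boldsymbol{\omega}}_{n,\mathrm{G}}^{t+1}-\boldsymbol{\omega}^\ast\|^2$ into $\Psi_n^{t+1}\big(\Gamma_{t+1}+\tfrac{(G_0^2+M\mu)^2}{|\mathcal{N}_t|\mu^2}\sum_{n\in\mathcal{N}_t}\varepsilon_{\mathrm{F},n}^{t}\big)$, so that collecting all terms yields \eqref{PL_con} with $\Phi_n^{t+1}$ as in \eqref{Phi}. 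The main obstacle is the constant-chasing in the middle step: the Young's-inequality weights must be picked so the linear-in-$\eta_{\mathrm{P},n}^{t+1}$ coefficient reduces to exactly $\mu_n'$ rather than $2\mu_n'$ after re-absorbing a small positive multiple of $\|\tilde{\boldsymbol{\varpi}}_n^{t}-\boldsymbol{\varpi}_n^\ast\|^2$ from the cross term, while the $\hat{\boldsymbol{\omega}}_{n,\mathrm{G}}^{t+1}$-dependent residue is kept in precisely the form $\Psi_n^{t+1}$ so it can be fed verbatim into \textbf{Theorem~\ref{theorem1}}; a secondary subtlety is that $\boldsymbol{\varpi}_n^\ast$ is anchored at the ideal $\boldsymbol{\omega}^\ast$ whereas the update uses the erroneous $\hat{\boldsymbol{\omega}}_{n,\mathrm{G}}^{t+1}$, so the channel, quantization, and DP error has to be routed entirely through the $\lambda_n^{t+1}$-weighted perturbation term and nowhere else.
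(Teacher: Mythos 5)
Your proposal is correct and follows essentially the same route as the paper's Appendix~\ref{theorem_PL_con_proof}: expand the one-step squared error, split $g_n(\tilde{\boldsymbol{\varpi}}_{n}^{t};\hat{\boldsymbol{\omega}}_{n,\mathrm{G}}^{t+1})$ into the ideal gradient at $\boldsymbol{\omega}^{\ast}$ plus the $\lambda_{n}^{t+1}$-weighted error $\hat{\boldsymbol{\omega}}_{n,\mathrm{G}}^{t+1}-\boldsymbol{\omega}^{\ast}$, use strong convexity and the $G_0$, $M$ bounds to obtain $\varepsilon_{\mathrm{P},n}^{t+1}$ and $G_n^{t+1}$, absorb the cross terms via Cauchy--Schwarz/Young into $\Psi_n^{t+1}$, and close with \textbf{Theorem~\ref{theorem1}} and $\mathbb{E}[\parallel\hat{\boldsymbol{\omega}}_{n,\mathrm{G}}^{t}-\boldsymbol{\omega}^{\ast}\parallel^{2}]\leq(G_0^2/\mu+M)^2$. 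The only cosmetic difference is that you invoke strong monotonicity of $\nabla f_n(\cdot;\boldsymbol{\omega}^{\ast})$ at the optimality condition $\nabla f_n(\boldsymbol{\varpi}_n^{\ast};\boldsymbol{\omega}^{\ast})=\mathbf{0}$, whereas the paper routes through function values and drops $f_n(\boldsymbol{\varpi}_n^{\ast};\cdot)-f_n(\tilde{\boldsymbol{\varpi}}_n^{t};\cdot)\leq 0$; both yield the stated contraction factor.
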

\begin{proof}
    See \textbf{Appendix \ref{theorem_PL_con_proof}}.
\end{proof}

It is revealed in \textbf{Theorem \ref{theorem_PL_con}} that the convergence of PL is degraded by the DP noise (i.e., $\sigma_{\mathrm{DP}}$), imperfect wireless channel (i.e., $\rho_{n,\mathrm{G}}^{t+1}$ and $\Theta_{\mathrm{L}}^{t}$), and quantization errors (i.e., $E_{\mathrm{L}}^{\mathrm{max}}$ and $E_{\mathrm{G}}^{\mathrm{max}}$).
{ 
Based on \textbf{Theorems \ref{privacy_budget}} and \textbf{\ref{theorem_PL_con}}, the DP noise (i.e., $\sigma_{\mathrm{DP}}$), while ensuring the privacy budget, compromises the convergence of PL. The effect of the DP noise on the convergence of PL (i.e., $\Phi_n^{t+1}$, $\forall n \in \mathcal{N}$) depends on the uplink and downlink channel conditions of the selected clients (i.e., $\rho_{n,\mathrm{L}}^{t}$ and $\rho_{n,\mathrm{G}}^{t+1}$), and can vary among clients. This also degrades the min-max fairness of WPFL.}
% {Based on \textbf{Theorem \ref{privacy_budget}}, 
% $\delta_\mathrm{Q}$ decreases with the increase of $\sigma_{\mathrm{DP}}$. Therefore, the privacy budget $\epsilon_\mathrm{Q}$, the number of quantization bits $R$, and the clipping threshold $C$ are determined based on the specific requirements of the applications and the structure of models.
% For applications where model performance is critical (e.g., autonomous vehicles or industrial systems), $R$ should be chosen for small communication overhead while maintaining an acceptable level of accuracy. The requirement of privacy can be relaxed. 
% For applications where privacy protection is prioritized (e.g., healthcare or financial systems), $\epsilon_\mathrm{Q}$ and $\delta_\mathrm{Q}$ should be small.
% $C$ should be chosen adaptively based on the model type to ensure accuracy. }

{
The impact of $\lambda_{n}^{t+1}$ and $\eta_{\mathrm{P},n}^{t+1}$ on the convergence of PL is intricate. 
    {
    In particular, $\varepsilon_{\mathrm{P},n}^{t+1}$ decreases monotonically with $\eta_{\mathrm{P},n}^{t+1}\in(0,1)$ when $\mu\geq 2$, and first decreases and then increases when $\mu < 2$. 
    Moreover, $\varepsilon_{\mathrm{P},n}^{t+1}$ increases with $\lambda_{n}^{t+1}\in (0,2)$ when $\mu>2$, decreases when $\mu<2$, and is unaffected by $\lambda_{n}^{t+1}$ when $\mu=2$.
Across the entire possible range of $\mu$, $\Psi_{n}^{t+1}$ increases with $\eta_{\mathrm{P},n}^{t+1}$, and first increases and then decreases with $\lambda_{n}^{t+1}$.     
     $\Phi_n^{t+1}$ increases with $\eta_{\mathrm{P},n}^{t+1}$, but its dependence on $\lambda_{n}^{t+1}$ is more complex, subject to the values of $M$, $G_0$, $\mu$, and $h_1(\rho_{n,\mathrm{G}}^{t+1})\Theta_{\mathrm{L}}^{t}\!+\!\Gamma_{0}\rho_{n,\mathrm{G}}^{t+1}+\Gamma_{1}$; see~(\ref{Phi}). Nevertheless, there is an opportunity to minimize the maximum of $\Phi_n^{t+1}$, $\forall n \in \mathcal{N}$, while keeping $\varepsilon_{\mathrm{P},n}^{t+1}$ consistent across all clients by optimizing $\mathbf{c}^{t}$, $\mathbf{p}^{t}$, $\boldsymbol{\eta}_{\mathrm{F}}^{t}$, $\boldsymbol{\eta}_{\mathrm{P}}^{t+1}$, and $\boldsymbol{\lambda}^{t+1}$. This encourages the convergence of PL while maintaining performance fairness among the clients.
     }
     }

\subsection{Overall Convergence of PL}
Let $T$ be the maximum number of communication rounds satisfying ${\{n|t_n\leq T_0,n\in \cal{N}\}\neq \emptyset}$. We analyze the overall convergence upper bound of PL under imperfect channels after $T$ aggregation rounds, as follows.
\begin{theorem}
\label{Convergence_t2}
Under \textbf{Assumption \ref{assumption1}}, the convergence upper bound of PL under imperfect channels after $T$ aggregation rounds is given by
{
% \begin{subequations}
    \begin{equation} \small
\mathbb{E}\!\left[\!\parallel\!\!\tilde{\boldsymbol{\varpi}}_{n}^{T}\!\!-\!\!\boldsymbol{\varpi}_{n}^{\ast}\!\!\parallel^{2}\!\right] \!\leq\! 
\left(\!\varepsilon_{\mathrm{P}}^{\max}\!\right)^{T}\!\mathbb{E}\!\!\left[\!\parallel\!\!{\boldsymbol{\varpi}}_{n}^{0}\!\!-\!\!\boldsymbol{\varpi}_{n}^{\ast}\!\!\parallel^{2}\!\right]\!\!+\!\!\frac{(\varepsilon_{\mathrm{P}}^{\max})^{T}\!\!-\!\!1}{\varepsilon_{\mathrm{P}}^{\max}\!-\!1}\Phi^{\max},
    \label{T_convergence}
    \end{equation}}%
where {$\varepsilon_{\mathrm{P}}^{\max} \in (0,1)$} and
$\Phi^{\max}$ are the upper bounds of $\varepsilon_{\mathrm{P},n}^{t}$ and $\Phi_{n}^{t}$, $\forall n \in \cal{N}$, $t=0,1,\ldots,T$, respectively.
% {
% $A_{i}$ is defined for brevity, as given by
%     \begin{align}
%         \label{AT}
% A_{i}\!\!=\!\!\left\{
% \begin{aligned}
% &\frac{\left(\varepsilon_{\mathrm{P}}^{\max}\right)^{i+1}\!\!-\!\!\left(\varepsilon_{\mathrm{F}}^{\max}\right)^{i+1}}{\varepsilon_{\mathrm{P}}^{\max}-\varepsilon_{\mathrm{F}}^{\max}}, \text{if } \varepsilon_{\mathrm{F}}^{\max}\neq\varepsilon_{\mathrm{P}}^{\max};\\
%         % \label{A'T}
%         &(i+1)\left(\varepsilon_{\mathrm{P}}^{\max}\right)^{i}, \text{if } \varepsilon_{\mathrm{F}}^{\max}=\varepsilon_{\mathrm{P}}^{\max}.
%         \end{aligned}
%         \right.
%     \end{align}
% }
\end{theorem}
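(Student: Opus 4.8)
The plan is to unroll the one-round recursion established in Theorem~\ref{theorem_PL_con}, namely $\mathbb{E}[\|\tilde{\boldsymbol{\varpi}}_{n}^{t+1}-\boldsymbol{\varpi}_{n}^{\ast}\|^{2}]\leq \varepsilon_{\mathrm{P},n}^{t+1}\mathbb{E}[\|\tilde{\boldsymbol{\varpi}}_{n}^{t}-\boldsymbol{\varpi}_{n}^{\ast}\|^{2}]+\Phi_{n}^{t+1}$, from round $t=0$ up to round $T$. First I would replace the round-dependent coefficients by their worst-case values: since $\varepsilon_{\mathrm{P},n}^{t}\leq\varepsilon_{\mathrm{P}}^{\max}$ and $\Phi_{n}^{t}\leq\Phi^{\max}$ for all $n\in\mathcal{N}$ and all $t\leq T$ (these maxima exist and are finite because each quantity is a continuous function of boundedly many parameters, all of which lie in compact ranges — the learning rates in $(0,1)$, the coefficients $\lambda_{n}^{t}$ in $(0,2)$, and the channel-error probabilities in $[0,1]$), the recursion is dominated termwise by the scalar recursion $a_{t+1}\leq \varepsilon_{\mathrm{P}}^{\max}\,a_{t}+\Phi^{\max}$ with $a_{0}=\mathbb{E}[\|\boldsymbol{\varpi}_{n}^{0}-\boldsymbol{\varpi}_{n}^{\ast}\|^{2}]$.

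Next I would solve this linear scalar recursion explicitly. Iterating gives $a_{T}\leq (\varepsilon_{\mathrm{P}}^{\max})^{T}a_{0}+\Phi^{\max}\sum_{j=0}^{T-1}(\varepsilon_{\mathrm{P}}^{\max})^{j}$, and summing the finite geometric series yields $\sum_{j=0}^{T-1}(\varepsilon_{\mathrm{P}}^{\max})^{j}=\frac{(\varepsilon_{\mathrm{P}}^{\max})^{T}-1}{\varepsilon_{\mathrm{P}}^{\max}-1}$, which is exactly the claimed bound~\eqref{T_convergence}. The monotone-domination step — that $a_{t}\leq b_{t}$ whenever $b_{t+1}=\varepsilon_{\mathrm{P}}^{\max}b_{t}+\Phi^{\max}$ with $b_{0}=a_{0}$ — follows by a trivial induction, using that $\varepsilon_{\mathrm{P}}^{\max}\geq 0$ so the recursion preserves the inequality.

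The only genuine content, and the step I expect to be the main obstacle, is justifying that $\varepsilon_{\mathrm{P}}^{\max}\in(0,1)$, i.e., that the per-round contraction factor can be made strictly less than one uniformly. From~\eqref{varepsilon_P}, $\varepsilon_{\mathrm{P},n}^{t+1}=1-\eta_{\mathrm{P},n}^{t+1}\big((1-\lambda_{n}^{t+1}/2)\mu+\lambda_{n}^{t+1}\big)+(\eta_{\mathrm{P},n}^{t+1})^{2}$, so strict contraction requires $\eta_{\mathrm{P},n}^{t+1}<(1-\lambda_{n}^{t+1}/2)\mu+\lambda_{n}^{t+1}$ together with positivity of the whole expression; I would record this as a (mild) standing restriction on the admissible learning rates, which is consistent with Assumption~\ref{assumption1} and with the ranges $\lambda_{n}^{t+1}\in(0,2)$, $\eta_{\mathrm{P},n}^{t+1}\in(0,1)$ exploited in Section~V-B, and note that the subsequent optimization in Section~VI is precisely what enforces a common such $\varepsilon_{\mathrm{P}}^{\max}$ across clients. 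With that in hand, $\varepsilon_{\mathrm{P}}^{\max}-1<0$, so the geometric-sum expression is well-defined and positive, completing the argument.
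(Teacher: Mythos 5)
Your proposal is correct and follows essentially the same route as the paper: take the per-round recursion from Theorem~\ref{theorem_PL_con}, bound its coefficients by the uniform maxima $\varepsilon_{\mathrm{P}}^{\max}$ and $\Phi^{\max}$ (whose existence the paper justifies via the boundedness of the error probabilities, $e_{n,k,\mathrm{L}}^{t}<1$, much as you do via compactness of the parameter ranges), unroll it over $T$ rounds, and sum the geometric series. Your additional remarks on when $\varepsilon_{\mathrm{P}}^{\max}<1$ holds match the paper's treatment, which likewise states this as a condition enforced by the learning-rate and weighting-coefficient design rather than deriving it within the proof.
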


\begin{proof}
    See \textbf{Appendix \ref{Convergence_t2_proof}}.
\end{proof}

% { 
% Based on \textbf{Theorem \ref{privacy_budget}}, the DP noise (i.e., $\sigma_{\mathrm{DP}}$), while ensuring the privacy budget, compromises the convergence of PL. The effect of the DP noise on the convergence of PL (i.e., $\Phi_n^{t+1}$, $\forall n \in \mathcal{N}$) depends on the uplink and downlink channel conditions of the selected clients (i.e., $\rho_{n,\mathrm{L}}^{t}$ and $\rho_{n,\mathrm{G}}^{t+1}$), and can vary among clients. This also degrades the min-max fairness of the WPFL system under consideration.}

% According to \textbf{Theorem \ref{Convergence_t2}}, the overall convergence upper bound of PL is degraded by the upper bound of $\Gamma_{t+1}$, which increases with the upper bounds of $\rho_{n,\mathrm{G}}^{t+1}$, $\Theta_{\mathrm{L}}^{t}$, $\sigma_{\mathrm{DP}}$, and $E_{\mathrm{L}}^{\mathrm{max}}$. On the other hand, the impact of $\lambda_{n}^{t+1}$, $\eta_{\mathrm{F},n}^{t+1}$, and $\eta_{\mathrm{P},n}^{t+1}$ on the convergence is much more complex due to the complex coupling among them. 
% { Nevertheless, this convergence upper bound of PL can be further optimized by minimizing the RHS of (\ref{PL_con}) at each communication round, as will be discussed in Section VI.}

\section{Optimal Configuration and Scheduling Policy}
To accelerate the convergence of WPFL in a fair fashion, this section minimizes the maximum per-round convergence upper bound of all PL models, as the channels and subsequently the device selections change randomly across rounds.
{This starts by formulating a min-max problem and converting it to a max-min problem (see Section VI-A). By revealing its nested structure, the max-min problem is solved first through client selection, channel allocation, and power control (see Section~VI-B), followed by learning rate and weighting coefficient adjustment (see Section VI-C). The proposed algorithm and its complexity analysis are presented in Section~VI-D.}

\subsection{Problem Formulation}

According to (\ref{PL_con}), the convergence upper bound is dominated by $\Phi_n^{t+1}$, while the convergence rate is determined by $\varepsilon_{\mathrm{P},n}^{t+1}$. For this reason, we minimize $\underset{n\in \mathcal{N}}{\max}\,\Phi_n^{t+1}$, while keeping $\varepsilon_{\mathrm{P},n}^{t+1}$, $\forall n \in \cal{N}$ consistent across the clients, as follows.
\begin{subequations}\small
\label{optimizationP}
    \begin{align}
&\textbf{P}:  \min_{\mathbf{c}^{t},\mathbf{p}^{t},\boldsymbol{\eta}_{\mathrm{F}}^{t},\boldsymbol{\eta}_{\mathrm{P}}^{t+1},\boldsymbol{\lambda}^{t+1}} \max_{n\in\mathcal{N}} \Phi_n^{t+1} \\
\label{P_C1}
\textrm {s.t. }  \quad&\textbf{C1}: \varepsilon_{\mathrm{P},n}^{t+1}=\varepsilon_{\mathrm{P}}^{t+1},\forall n\in{\cal {N}}, \\
% \label{P_C2}
% & \textbf{C2}: h_{0}\left(\eta_{\mathrm{F},n}^{t},\eta_{\mathrm{P},n}^{t+1},\lambda_{n}^{t+1}\right)=h_{0}^{t},\forall n\in{\cal {N}}, \\
\label{P_C3}
& \textbf{C2}: {\sum}_{k=1}^{K}c_{n,k}^{t}\leq1,\,\forall n\in\mathcal{N}, \\
& \textbf{C3}: {\sum}_{n=1}^{N}c_{n,k}^{t}\leq1,\,\forall k\in\mathcal{K}, \\
& \textbf{C4}:
P_{n}^{t}\leq P_{n}^{\mathrm{th}},\,\forall n\in\mathcal{N}, \\
\label{P_C6}
& \textbf{C5}:
r_{n,k}^{t}\geq r_{\mathrm{min}},\forall k\in\mathcal{K},\\
& \textbf{C6}:
c_{n,k}^{t}\in\{0,1\},\forall n\in\mathcal{N},k\in\mathcal{K},\\
\label{P_C8}
& \textbf{C7}:
{\sum}_{t'=1}^{t}{\sum}_{k=1}^{K}c_{n,k}^{t'}\leq T_{0},\,\forall n\in\mathcal{N},\\
% \label{P_C9}
% & \textbf{C9}:
% 0<\varepsilon_{\mathrm{P},n}^{t}<1,\forall n\in\mathcal{N},\\
\label{P_C9}
& \textbf{C8}:
0<\lambda_{n}^{t+1}<2,\forall n\in\mathcal{N},\\
\label{P_C10}
& \textbf{C9}:
0<\eta_{\mathrm{P},n}^{t+1}<1,\forall n\in\mathcal{N},\\
\label{P_C11}
& \textbf{C10}:
0<\eta_{\mathrm{F},n}^{t}<1,\forall n\in\mathcal{N},\\
\label{P_C12}
& \textbf{C11}:
0<\varepsilon_{\mathrm{F},n}^{t}<1,\forall n\in\mathcal{N},
    \end{align}
\end{subequations}
where $\boldsymbol{\eta}_{\mathrm{F}}^{t}=\{{\eta}_{\mathrm{F},1}^{t},\ldots,{\eta}_{\mathrm{F},N}^{t}\}$ collects the FL learning rates of the clients at the $t$-th round, $\boldsymbol{\eta}_{\mathrm{P}}^{t+1}=\{{\eta}_{\mathrm{P},1}^{t+1},\ldots,{\eta}_{\mathrm{P},N}^{t+1}\}$ collects the PL learning rates of the clients at the $(t+1)$-th round, and $\boldsymbol{\lambda}^{t+1}=\{{\lambda}_1^{t+1},\ldots,{\lambda}_N^{t+1}\}$ collects the weighting coefficients of the clients at the $(t+1)$-th round; $\varepsilon_{\mathrm{P}}^{t+1}$ can differ between slots; $P_{n}^{\mathrm{th}}$ is the maximum transmit power of client~$n$. 

Constraints \textbf{C1} guarantee the consistent convergence rates among the clients. \textbf{C2} and \textbf{C3} specify that at most one subchannel is allocated to a client, and a subchannel is only allocated to one client. \textbf{C4} indicates that the transmit power of client $n$ is upper bounded by $P_{n}^{\mathrm{th}}$. \textbf{C5} indicates that the data rate of each client needs to exceed $r_{\mathrm{min}}$, {to ensure that its transmission delay does not exceed the maximum transmission delay $\tau_{\mathrm{max}}$.}
\textbf{C7} specifies that the number of rounds each client can participate in is no more than $T_0$ to meet the privacy requirement.
\textbf{C6} and \textbf{C8} -- \textbf{C11} are self-explanatory. 

{According to~\eqref{eq_con_FL}, \eqref{eq30}, and $\Theta_{\mathrm{L}}^{t}>0$ in \textbf{Lemma \ref{Lemma0_0}}, the objectives $\Phi_n^{t+1}$, $\forall n \in \mathcal{N}$, increase monotonically with $\Theta_{\mathrm{L}}^{t}$, which is consistent across all clients and only affected by the selected subchannels for the clients, $\mathbf{c}^{t}$, and the transmit powers of all clients, $\mathbf{p}^{t}$. The remaining terms in $\Phi_n^{t+1}$ are affected by the FL learning rate $\eta_{\mathrm{F},n}^{t}$, the PL learning rate $\eta_{\mathrm{P},n}^{t}$ and the weighting coefficient $\lambda_{n}^{t+1}$. In other words, $\eta_{\mathrm{F},n}^{t}\in\boldsymbol{\eta}_{\mathrm{F}}^{t}$, $\eta_{\mathrm{P},n}^{t}\in\boldsymbol{\eta}_{\mathrm{P}}^{t+1}$ and $\lambda_{n}^{t+1}\in\boldsymbol{\lambda}^{t+1}$ are independent of each other, and have no impact on $\Phi_{n'}^{t+1}$, $n'\neq n$. This} min-max problem can be converted into the following max-min problem:
\begin{align}
\label{P1}
    % \begin{align}
\textbf{P1}:  \quad\max_{n\in\mathcal{N}} \min_{\mathbf{c}^{t},\mathbf{p}^{t},\boldsymbol{\eta}_{\mathrm{F}}^{t},\boldsymbol{\eta}_{\mathrm{P}}^{t+1},\boldsymbol{\lambda}^{t+1}} \Phi_n^{t+1},\quad
\textrm {s.t. } \,\eqref{P_C1}-\eqref{P_C12},\nonumber
% \end{align}
\end{align}
which is a mixed integer program and is still challenging.
% In the following subsections, we decouple this problem {losslessly} into two sub-problems for client selection, channel allocation, and power control, and for learning rate and weighting coefficient adjustment, and obtain the optimal solutions to the subproblems.

According to {(\ref{Gamma}) and }(\ref{Phi}),  $\Theta_{\mathrm{L}}^{t}$ is consistent across all clients and depends only on client selection, channel allocation, and power control. Given $\Theta_{\mathrm{L}}^{t}$, for each client $n$, the second term on the RHS of (\ref{PL_con}), $\Phi_n^t$, only depends on the FL and PL learning rates {$\eta_{\mathrm{F},n}^{t}$} and $\eta_{\mathrm{P},n}^{t+1}$, and the weighting coefficient $\lambda_n^{t+1}$. Therefore, Problem \textbf{P1} is a nested optimization problem and can be equivalently decoupled into two subproblems solved sequentially. The first subproblem minimizes $\Theta_{\mathrm{L}}^{t}$ through client selection, channel allocation, and power control. Given $\Theta_{\mathrm{L}}^{t}$, the second subproblem maximizes the minimum of $\Phi_{n}^{t+1}$, {through the learning rate and weighting coefficient adjustment for optimizing $\Phi_{n}^{t+1}$ for any client $n$, followed by taking the maximum $\Phi_{n}^{t+1}$ among all clients.}

\subsection{Client Selection, Channel Allocation, and Power Control}
\label{client_selection}
%\subsection{Robustness of Privacy-Preserving FL}
We first minimize $\Phi_n^t$ per client $n \in \cal{N}$ by minimizing $\Theta_{\mathrm{L}}^{t}$:
% \begin{subequations}
% \label{P2}
    \begin{align}
\textbf{P2}:  \quad\min_{\mathbf{c}^{t},\mathbf{p}^{t}}  \;\Theta_{\mathrm{L}}^{t},\quad
\textrm {s.t. }\,\eqref{P_C3}-\eqref{P_C8} \nonumber.
\end{align}
% \end{subequations}
Clearly, the data rate $r_{n,k}^{t}$ increases monotonically with $P_{n}^{t}\leq P_{n}^{\mathrm{th}}$, while the element error probability $\rho_{n,\mathrm{L}}^{t}$ decreases.
Therefore, the optimal transmit power of client $n \in \mathcal{N}_{t}$ is $P_{n}^{t}=P_{n}^{\mathrm{th}}$. Let $\Upsilon_{n}^{t}$ denote the number of rounds that client $n$ is selected for uploading before round $t$, and $\mathcal{N}_{t}^{\mathrm{a}}$ collect the clients allowed to upload local models at round $t$, i.e., $\mathcal{N}_{t}^{\mathrm{a}}=\{n|n\in \mathcal{N}, \Upsilon_{n}^{t}<T_0\}$. Problem \textbf{P2} can be rewritten as
\begin{subequations}\small
\label{P3}
    \begin{align}
\textbf{P3}: \quad & \min_{\mathbf{c}^{t}}\;{\sum}_{n\in\mathcal{N}}\rho_{n,\mathrm{L}}^{t}\\
\textrm {s.t. }  \quad & {\sum}_{k=1}^{K}c_{n,k}^{t}\leq1,\,\forall n\in\mathcal{N}_{t}^{\mathrm{a}},\\
% &\sum_{k=1}^{K}c_{n,k}^{t}=0,\,\forall n\in\mathcal{N}/\mathcal{N}_{t+1}^{\mathrm{a}},\\
&{\sum}_{n=1}^{N}c_{n,k}^{t}\leq1,\,\forall k\in\mathcal{K},\\
&c_{n,k}^{t}\in\{0,1\},\forall n\in\mathcal{N}_{t}^{\mathrm{a}},k \in \mathcal{K},\\
&\eqref{P_C6}. \nonumber
\end{align}
\end{subequations}
Problem \textbf{P3} can be interpreted as a maximum-weight matching problem in bipartite graphs, which can be optimally solved using the Kuhn-Munkres (KM) algorithm\cite{kuhn1955hungarian}. At round $t$, the minimum of $\Theta_{\mathrm{L}}^{t}$, denoted as $\Theta_{\mathrm{L},\min}^{t}$, is obtained with the optimal client selection, channel allocation, and power control.

\subsection{Learning Rate and Weighting Coefficient Adjustment}
\label{co_adjustment}
Given $\Theta_{\mathrm{L},\min}^{t}$, $\Phi_{n}^{t+1}$ is rewritten as
\begin{equation} \small
    % \begin{split}
        \label{Phi_n}
        \scalebox{0.92}{$\Phi_{n}^{t\!+\!1}\!\!=\!\!\left(\!1\!\!+\!\!\left(\!\lambda_{n}^{t\!+\!1}\!\right)^{3}\!\right)\!\!\left(\!\eta_{\mathrm{P},n}^{t+1}\!\right)^{2}\!\!G_{n}^{t\!+\!1}\!\!+\!\!\Psi_{n}^{t\!+\!1}\!\Big(\!\Gamma_{2}\rho_{n,\mathrm{G}}^{t\!+\!1}\!\!+\!\!\Gamma_{3}\!\!+\!\!\frac{(\!G_{0}^{2}\!+\!M\mu\!)^{2}}{|\mathcal{N}_{t}|\mu^{2}}\!\!\!\underset{n\in\mathcal{N}_{t}}{\sum} \!\!\varepsilon_{\mathrm{F},n}^t\!\Big)\!,$}
    % \end{split}
\end{equation}%
where, for the brevity of notation, $\Gamma_{2}$ and $\Gamma_{3}$ are defined as 
\begin{subequations}\small
\begin{align}
    \label{Gamma2}
    \Gamma_{2}&\triangleq2\left(1+\frac{1}{\varphi_{1}}\right)\left(1+\varphi_{2}\right)\Theta_{\mathrm{L},\min}^{t}+\Gamma_{0};\\
% \end{equation}
% \begin{equation}
    \label{Gamma3}
    \Gamma_{3}&\triangleq\left(1+\varphi_{1}\right)\left(1+\frac{1}{\phi_{1}}+\frac{1}{\phi_{2}}\right)\Theta_{\mathrm{L},\min}^{t}+\Gamma_{1}.
    \end{align}
\end{subequations}
{
According to (\ref{Phi_n}), $\Phi_{n}^{t+1}$ increases with $\underset{n\in \mathcal{N}} {\sum}\varepsilon_{\mathrm{F},n}^t$, where the latter depends only on $\eta_{\mathrm{F},n}^t$, $\forall n \in \mathcal{N}$ and is consistent {among the clients}. 
For any client $n$, $\Phi_{n}^{t+1}$ is independent of $\boldsymbol{\eta}_{\mathrm{P}}^{t+1}\setminus{\eta}_{\mathrm{P},n}^{t+1}$ and $\boldsymbol{\lambda}^{t+1}\setminus{\lambda}_{n}^{t+1}$. 
The learning rates and weighting coefficient of client $n$ can be optimized at the server by solving
\begin{subequations}\small
\label{P4}
    \begin{align}
\textbf{P4}:  \quad&\min_{\lambda_{n}^{t+1},\eta_{\mathrm{P},n}^{t+1},\boldsymbol{\eta}_{\mathrm{F}}^{t}} \Phi_{n}^{t+1} \\
\label{P4_1}
\textrm {s.t. }  \quad &\varepsilon_{\mathrm{P},n}^{t+1}=\varepsilon_{\mathrm{P}}^{t+1},\\
% \label{P4_2}
% &h_{0}\left(\eta_{\mathrm{F},n}^{t},\eta_{\mathrm{P},n}^{t+1},\lambda_{n}^{t+1}\right)=h_{0}^{t},\\
% \label{P4_3}
% &0<\varepsilon_{\mathrm{P},n}^{t}<1,\\
\label{P4_3}
&0<\lambda_{n}^{t+1}<2,\\
\label{P4_4}
&0<\eta_{\mathrm{P},n}^{t+1}<1,\\
&\eqref{P_C11},\eqref{P_C12}.\nonumber
\end{align}
\end{subequations}
Since \eqref{P_C11} and \eqref{P_C12} are independent of $\eta_{\mathrm{P},n}^{t+1}$ and $\lambda_n^{t+1}$ while $\boldsymbol{\eta}_{\mathrm{F}}^{t}$ impacts $\Phi_{n}^{t+1}$ through ${\sum}_{n\in \mathcal{N}} \varepsilon_{\mathrm{F},n}^t$, we can solve Problem \textbf{P4} in two steps. We first determine the optimal FL local learning rate $\eta_{\mathrm{F},n}^{t}$, followed by the PL learning rate $\eta_{\mathrm{P},n}^{t+1}$ and the weighting coefficient $\lambda_n^{t+1}$.

\subsubsection{FL Local Learning Rate}
To minimize $\Phi_{n}^{t+1}$, we optimize the FL local learning rates $\boldsymbol{\eta}_{\mathrm{F}}^{t}$ to minimize ${\sum}_{n\in \mathcal{N}} \varepsilon_{\mathrm{F},n}^t$:
% \label{P_FLlocal}
    \begin{align}
\textbf{P5}:  \quad\min_{\boldsymbol{\eta}_{\mathrm{F}}^{t}} {\sum}_{n\in \mathcal{N}} \varepsilon_{\mathrm{F},n}^t,\quad
\textrm {s.t. }  \, \eqref{P_C11},\eqref{P_C12} \nonumber.
\end{align}
According to (\ref{vareps_F}),  Problem \textbf{P5} can be solved by setting $\eta_{\mathrm{F},n}^{t}=\underset{\eta_{\mathrm{F},n}^{t}\in (0,1)}{\arg\min}\varepsilon_{\mathrm{F},n}^t$, which is consistent among the clients and rounds. $\eta_{\mathrm{F},n}^{t}=\frac{\mu}{2(1+\phi_{1})L^{2}}$. Under properly designed $\phi_{1}$, $\phi_{2}$, and $\varphi_{1}$, $\min_{\boldsymbol{\eta}_{\mathrm{F}}^{t}}\varepsilon_{\mathrm{F},n}^t\in (0,1)$.

\subsubsection{PL learning Rate and Weighting Coefficient}
Given $\varepsilon_{\mathrm{P}}^{t+1}$ and $\varepsilon_{\mathrm{F},n}^{t}$, $\forall n\in \mathcal{N}$, 
% the personalized learning rate $\eta_{\mathrm{P},n}^{t+1}$ and weighting coefficient adjustment sub-problem of
Problem \textbf{P4} can be rewritten as}
% \begin{subequations}
\label{P5}
    \begin{align}
\textbf{P6}:  \quad\min_{\lambda_{n}^{t+1}\in(0,2),\eta_{\mathrm{P},n}^{t+1}\in(0,1)} \Phi_{n}^{t+1},\quad
\textrm {s.t. }  \,\eqref{P4_1},\eqref{P4_3},\eqref{P4_4}. \nonumber
\end{align}
% \end{subequations}
Problem \textbf{P6} is a non-convex problem because the equality constraint (\ref{P4_1}) is not an affine function; i.e., the feasible set is not a convex set. Nevertheless, we can write $\lambda_{n}^{t+1}=\lambda_{n}^{t+1}(\eta_{\mathrm{P},n}^{t+1})$, and specify the convex/concave region of Problem \textbf{P6} in $\eta_{\mathrm{P},n}^{t+1}$. Specifically, according to (\ref{P4_1}),
\begin{equation} \small
\label{lambda}
    \lambda_{n}^{t+1}=\left(1-\frac{\mu}{2}\right)^{-1}\left(\left(1-\varepsilon_{\mathrm{P}}^{t+1}\right)\big/{\eta_{\mathrm{P},n}^{t+1}}+\eta_{\mathrm{P},n}^{t+1}-\mu\right).
\end{equation}
Then, we can convert Problem \textbf{P6} to an unconstrained problem about $\eta_{\mathrm{P},n}^{t+1}$ under the typical situation with $\mu<2$ (as empirically measured in our experiments described in Section~\ref{experiments}).
{
We notice from \eqref{lambda} that, if $0<\varepsilon_{\mathrm{P}}^{t+1}<1-\frac{\mu^{2}}{4}$, then $\lambda_{n}^{t+1}>0$;
% is always greater than a certain positive value, which means that the PL learning rate has to be large enough to obtain a relatively small $\lambda_{n}^{t+1}$, or 
in other words, the regularization term on the RHS of \eqref{objective_p} can never be suppressed, even when the FL model $\tilde{\boldsymbol{\omega}}_{\mathrm{G}}^{t}$ deviates dramatically from an individual PL model $\tilde{\boldsymbol{\varpi}}_{n}^{t+1}$, penalizing personalization. 
% This is likely to cause the PL models to diverge or fail to converge to the optimal solution~\cite{bengio2012practical,goodfellow2016deep}.
}
% This might lead to instability, especially when the PL models get stuck in different local minima during each run.
For this reason, we design $\varepsilon_{\mathrm{P}}^{t+1}\geq1-\frac{\mu^{2}}{4}$, in which case $\lambda_{n}^{t+1}=\lambda_{n}^{t+1}(\eta_{\mathrm{P},n}^{t+1})$ intersects with $\lambda_{n}^{t+1}=0$, i.e., at $(\eta_2,0)$ and $(\eta_3,0)$.
By solving $\lambda_{n}^{t+1}(\eta_{\mathrm{P},n}^{t+1})=0$, the feasible set of $\eta_{\mathrm{P},n}^{t+1}$ can be written as $\Omega_{0}^{t+1}\cup\Omega_{1}^{t+1}$, with
\begin{subequations}\small
\label{feasible_set}
    \begin{align}
        \Omega_{0}^{t+1}&=\{\eta_{\mathrm{P},n}^{t+1}|\eta_1<\!\eta_{\mathrm{P},n}^{t+1}<\eta_2\};\\
        \Omega_{1}^{t+1}&=\left\{  
        \begin{aligned}
        &\left\{\eta_{\mathrm{P},n}^{t+1}|\,\eta_3\!<\!\eta_{\mathrm{P},n}^{t+1}\!<\!1\right\},\,    \text{ if } \,\varepsilon_{\mathrm{P}}^{t+1}\leq2-\mu; \\
        &\,\, \emptyset,  \quad\quad\quad\quad\quad\quad\quad\quad\quad \text{if }\, \varepsilon_{\mathrm{P}}^{t+1}\!>\!2\!-\!\mu  ,
        \end{aligned}
        \right.
    \end{align}
\end{subequations}
where $\eta_1=1-\sqrt{\varepsilon_{\mathrm{P}}^{t+1}}<1$ {is the $x$-coordinate of the intersection of $\lambda_{n}^{t+1}=\lambda_{n}^{t+1}(\eta_{\mathrm{P},n}^{t+1})$ and $\lambda=2$,} $\eta_2=\frac{\mu-\sqrt{\mu^{2}-4\left(1-\varepsilon_{\mathrm{P}}^{t+1}\right)}}{2}$, and $\eta_3=\frac{\mu+\sqrt{\mu^{2}-4\left(1-\varepsilon_{\mathrm{P}}^{t+1}\right)}}{2}$.
%The learning rate is usually set close to 0 rather than 1. Therefore, we take $\Omega_{0}^{t+1}$ as the feasible set, i.e., $\eta_{\mathrm{P},n}^{t+1} \in \left(1-\sqrt{\varepsilon_{\mathrm{P}}^{t+1}},\frac{\mu-\sqrt{\mu^{2}-4\left(1-\varepsilon_{\mathrm{P}}^{t+1}\right)}}{2}\right)$. 

% {We set the PL convergence rate }$\varepsilon_{\mathrm{P}}^{t+1}\geq1-\frac{\mu^{2}}{4}$. 
Now, constraints \eqref{P4_1}, \eqref{P4_3}, and \eqref{P4_4} are fulfilled in the feasible set specified by \eqref{feasible_set}. Problem \textbf{P6} can be rewritten as
\begin{equation}
\label{P6}
    % \begin{align}
\textbf{P7}:  \quad \min_{\eta_{\mathrm{P},n}^{t+1} \in \Omega_{0}^{t+1}\cup\Omega_{1}^{t+1} } \Phi_{n}^{t+1}, %\\
% \textrm {s.t. }  \quad& 1\!-\!\sqrt{\varepsilon_{\mathrm{P}}^{t+1}}\!<\!\eta_{\mathrm{P},n}^{t+1}\!<\!\frac{\mu\!-\!\sqrt{\mu^{2}\!-\!4\left(1\!-\!\varepsilon_{\mathrm{P}}^{t+1}\right)}}{2}.
% \end{align}
\end{equation}
which is an unconstrained optimization problem about $\eta_{\mathrm{P},n}^{t+1}$ with a nonconvex feasible set. 

\begin{theorem}
\label{theorem4}
    When $\varepsilon_{\mathrm{P}}^{t+1}\in[1-\frac{\mu^{2}}{4},2-\mu]$, $\Phi_{n}^{t+1}$ is convex in both $\eta_{\mathrm{P},n}^{t+1}\in \Omega_{0}^{t+1}$ and $\eta_{\mathrm{P},n}^{t+1}\in \Omega_{1}^{t+1}$. When $\varepsilon_{\mathrm{P}}^{t+1}\in (2-\mu,1)$ with $\mu\in (1,2)$, $\Phi_{n}^{t+1}$ is convex in $\eta_{\mathrm{P},n}^{t+1}\in \Omega_{0}^{t+1}$.
\end{theorem}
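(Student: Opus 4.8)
The plan is to use the equality constraint~\eqref{P4_1} to eliminate $\lambda_n^{t+1}$ and regard $\Phi_n^{t+1}$ as a scalar function of the single variable $\eta:=\eta_{\mathrm{P},n}^{t+1}$. Substituting $\lambda_n^{t+1}=\lambda_n^{t+1}(\eta)$ from~\eqref{lambda}, write $\kappa:=(1-\mu/2)^{-1}$ and $a:=1-\varepsilon_{\mathrm{P}}^{t+1}$, so that $\lambda_n^{t+1}(\eta)=\kappa(a/\eta+\eta-\mu)=\kappa(\eta^2-\mu\eta+a)/\eta$, where $\kappa>0$ because $\mu<2$ (Assumption~\ref{assumption1} with the empirically observed $\mu<2$) and $a\in(0,1)$ because $\varepsilon_{\mathrm{P}}^{t+1}\in(0,1)$. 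The quantity $\Gamma_2\rho_{n,\mathrm{G}}^{t+1}+\Gamma_3+\frac{(G_0^2+M\mu)^2}{|\mathcal{N}_t|\mu^2}\sum_{n\in\mathcal{N}_t}\varepsilon_{\mathrm{F},n}^t=:K$ is a strictly positive constant in $\eta$ (it involves only the already fixed $\Theta_{\mathrm{L},\min}^t$, the channel term, and the optimal FL rate), so~\eqref{Phi_n} becomes $\Phi_n^{t+1}(\eta)=\Phi_A(\eta)+K\,\Psi_n^{t+1}(\eta)$ with $\Phi_A(\eta):=(1+\lambda_n^{t+1}(\eta)^3)\eta^2 G_n^{t+1}$, and it suffices to show $\Phi_A''\ge0$ and $(\Psi_n^{t+1})''\ge0$ on each region.

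Next I would record the behaviour of $\lambda_n^{t+1}(\eta)$. Differentiation gives $(\lambda_n^{t+1})'=\kappa(1-a/\eta^2)$ and $(\lambda_n^{t+1})''=2\kappa a/\eta^3>0$, so $\lambda_n^{t+1}(\eta)$ is convex with unique stationary point $\eta_\star:=\sqrt{a}$. Since $\eta_2\eta_3=a$ (product of the roots of $\eta^2-\mu\eta+a$) and $0<\eta_2<\eta_3$, we have $\eta_2<\eta_\star<\eta_3$; hence $\lambda_n^{t+1}$ is strictly decreasing on $\Omega_0^{t+1}=(\eta_1,\eta_2)$, strictly increasing on $\Omega_1^{t+1}=(\eta_3,1)$, and stays in $(0,2)$ on both sets by the construction of $\eta_1,\eta_2,\eta_3$ in~\eqref{feasible_set} (and, for $\Omega_1^{t+1}$ to be nonempty, the restriction $\varepsilon_{\mathrm{P}}^{t+1}\le2-\mu$). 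I would also note $G_n^{t+1}=(G_0+c\,\lambda_n^{t+1})^2$ with $c:=G_0/\mu+M-G_0/2>0$ (again using $\mu<2$ and $M\ge0$), so $G_n^{t+1}$ is a positive, increasing, convex function of $\lambda_n^{t+1}$.

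On $\Omega_1^{t+1}$ (which, when nonempty, forces $\varepsilon_{\mathrm{P}}^{t+1}\in[1-\mu^2/4,2-\mu)$, matching the first clause) I would invoke the elementary fact that a product of nonnegative, nondecreasing, convex functions is again nonnegative, nondecreasing and convex, which follows by induction from $(fg)''=f''g+2f'g'+fg''\ge0$. There $\eta^2$, $\eta^2+1$, $1+\lambda_n^{t+1}(\eta)^3$, $(G_0+c\,\lambda_n^{t+1}(\eta))^2$, and $(\lambda_n^{t+1}(\eta))^2$ are all nonnegative, nondecreasing and convex in $\eta$ (the $\lambda$-dependent ones because they compose a nondecreasing convex scalar map with the nondecreasing convex $\lambda_n^{t+1}(\eta)$), which settles $\Phi_A$ and the $(\eta^2+1)(\lambda_n^{t+1})^2$ part of $\Psi_n^{t+1}$. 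The remaining piece $\eta^3/\lambda_n^{t+1}(\eta)=\kappa^{-1}\eta^4/(\eta^2-\mu\eta+a)$ has mismatched monotonicity, so for it --- and for every term on $\Omega_0^{t+1}$ and in the second clause $\varepsilon_{\mathrm{P}}^{t+1}\in(2-\mu,1)$, $\mu\in(1,2)$, where $\lambda_n^{t+1}$ is decreasing and the product trick fails --- I would clear denominators and compute the second derivative directly: each such term is a rational function whose denominator is a power of $\eta(\eta^2-\mu\eta+a)$, which is strictly positive on both $\Omega_0^{t+1}$ (as $\eta<\eta_2$) and $\Omega_1^{t+1}$ (as $\eta>\eta_3$), so convexity reduces to the sign of a numerator polynomial $P(\eta)$ with coefficients in $\mu$ and $a$.

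The main obstacle is establishing $P(\eta)\ge0$ on the relevant subinterval of $(0,1)$ under the stated parameter restrictions. I would handle this by exploiting that $\eta_1,\eta_2,\eta_3$ are explicit in $\mu$ and $\varepsilon_{\mathrm{P}}^{t+1}$ (so $P$ can be factored against $(\eta-\eta_1)$, $(\eta-\eta_2)$ or $(\eta-\eta_3)$ as appropriate), that $\eta\in(0,1)$ permits crude monomial bounds, and that the endpoint values $P(\eta_1),P(\eta_2),P(\eta_3),P(1)$ together with the sign of $P'$ pin down the sign throughout the interval; the conditions $\varepsilon_{\mathrm{P}}^{t+1}\ge1-\mu^2/4$ (real $\eta_2,\eta_3$), $\varepsilon_{\mathrm{P}}^{t+1}\le2-\mu$ (nonempty $\Omega_1^{t+1}$), and $\mu\in(1,2)$ (nonempty $(2-\mu,1)$) are exactly what keep these sign estimates valid. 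Assembling $\Phi_n^{t+1}=\Phi_A+K\Psi_n^{t+1}$ then yields convexity on $\Omega_0^{t+1}$ and on $\Omega_1^{t+1}$ as claimed.
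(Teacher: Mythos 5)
Your proposal follows essentially the same route as the paper's Appendix F: substitute $\lambda_{n}^{t+1}=\lambda_{n}^{t+1}(\eta_{\mathrm{P},n}^{t+1})$ from \eqref{lambda}, treat $\Phi_{n}^{t+1}$ as a one-variable function of $\eta_{\mathrm{P},n}^{t+1}$, and show its second derivative is nonnegative on $\Omega_{0}^{t+1}$ and $\Omega_{1}^{t+1}$ by reducing to the sign of polynomial numerators over the positive denominator $\eta\left(\eta^{2}-\mu\eta+a\right)$. The one genuine difference is organizational: the paper writes out the full second derivative at once and groups it into the four polynomials $H_{1}$--$H_{4}$ of \eqref{H1}--\eqref{H4}, whereas you decompose $\Phi_{n}^{t+1}=\Phi_{A}+K\,\Psi_{n}^{t+1}$ term by term and, on $\Omega_{1}^{t+1}$, dispatch most factors with the product-of-nonnegative-nondecreasing-convex-functions rule (using the correct observation that $\lambda_{n}^{t+1}(\eta)$ is convex with minimizer $\sqrt{a}=\sqrt{\eta_{2}\eta_{3}}$, hence monotone on each feasible piece); that shortcut is valid and cleaner where it applies. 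However, the decisive step --- positivity of the remaining numerator polynomials on $\Omega_{0}^{t+1}$, and of the $\eta^{3}/\lambda_{n}^{t+1}$ contribution on $\Omega_{1}^{t+1}$, which is exactly where the stated conditions $\varepsilon_{\mathrm{P}}^{t+1}\geq 1-\frac{\mu^{2}}{4}$, $\varepsilon_{\mathrm{P}}^{t+1}\leq 2-\mu$, and $\mu\in(1,2)$ must enter --- is only sketched in your plan; to be fair, the paper compresses the same step into the assertion that $H_{1}$--$H_{4}$ are positive ``by analyzing their monotonicity and minima,'' so your proposal is neither more nor less complete than the published argument on the point that actually carries the theorem.
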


\begin{proof}
    See \textbf{Appendix \ref{theorem4_proof}}.
\end{proof}

{By \textbf{Theorem \ref{theorem4}}, 
%$\frac{\partial^{2}\Phi_{n}^{t+1}}{\partial\eta^{2}}>0$ on $\Omega_0$ and $\Omega_1$, i.e, 
Problem \textbf{P7} is convex in $\Omega_0^{t+1}$ and $\Omega_1^{t+1}$ when they are non-empty, under $\mu <2$ and $\varepsilon_{\mathrm{P}}^{t+1}\geq1-\frac{\mu^{2}}{4}$.} 
When $\varepsilon_{\mathrm{P}}^{t+1}\in[1-\frac{\mu^{2}}{4},2-\mu]$, the optimal $\boldsymbol{\eta}_{\mathrm{P}}^{t+1,*}$ can be obtained by comparing the respective solutions in $\Omega_0^{t+1}$ and $\Omega_1^{t+1}$. 
When $\varepsilon_{\mathrm{P}}^{t+1}\in (2-\mu,1)$ (i.e., $1<\mu<2$), Problem~\textbf{P7} is convex.
The solutions can be obtained using convex optimization tools.
{After minimizing $\Phi_{n}^{t+1}$ for every client $n \in \mathcal{N}$, the maximization problem (i.e., $\underset{n\in \mathcal{N}}{\max}\,\Phi_n^{t+1}$) is solved by comparing the minimized $\Phi_{n}^{t+1}$ among all clients.}
%methods, such as Newton's Method \cite{bertsekas1997nonlinear}.
% On the other hand, when $\varepsilon_{\mathrm{P}}^{t+1}\in(0,1-\frac{\mu^{2}}{4})$, the convexity of Problem \textbf{P7} is hard to assess. However, when $\varepsilon_{\mathrm{P}}^{t+1}\in(0,1-\frac{\mu^{2}}{4})$, the obtained solutions of Problem \textbf{P7} could be sub-optimal. 
% While a small $\varepsilon_{\mathrm{P}}^{t+1}$ results in a large convergence rate of PL models, excessively a small $\varepsilon_{\mathrm{P}}^{t+1}$ can cause the PL models to be trapped in local minima or fail to generalize adequately. 
% Therefore, we design $\varepsilon_{\mathrm{P}}^{t+1}$ within the range of $[1-\frac{\mu^{2}}{4},1)$, e.g., $\varepsilon_{\mathrm{P}}^{t+1}=1-\frac{\mu^{2}}{4}$.}

\subsection{Algorithm Description and Discussion}

The overall algorithm is summarized in Algorithm \ref{algorithm2}. 
% and illustrated in Fig.~\ref{fig:flowchart}.
The complexity of the algorithm is dominated by the KM algorithm used to solve Problem \textbf{P3} and the convex optimization used to solve Problem \textbf{P7}.
The worst-case complexity of using the KM algorithm to solve Problem \textbf{P3} is $\mathcal{O}(|N|^3)$ \cite{jungnickel2005graphs}, as $N$ specifies the number of vertices in the bipartite graph. 

According to \textbf{Theorem \ref{theorem4}}, in the case where Problem \textbf{P7} is convex in the feasible sets $\Omega_0^{t+1}$ and $\Omega_1^{t+1}$ or is convex in the feasible set $\Omega_0^{t+1}$, the algorithm converges to the global optimum.
The complexity of using a typical convex optimization solver, e.g., interior point method, to solve Problem \textbf{P7} is $\mathcal{O}(V_{1}^{4.5}\log(\frac{1}{\alpha}))$ \cite{sun2021joint}, where $V_{1}$ is the number of variables and $\alpha$ is the convergence accuracy. Here, $V_{1}=1$. The problem is solved for $N$ clients in parallel. 
As a result, the overall complexity of Algorithm \ref{algorithm2} is $\mathcal{O}(N^3+N\log(\frac{1}{\alpha}))$.

Under $\varepsilon_{\mathrm{P}}^{t+1} \in [1-\frac{\mu^{2}}{4},1)$, we confirm the optimality of the solution to Problem \textbf{P1}. Due to the consistency of $\Theta_{\mathrm{L}}^{t}$ across the clients and the independence of $\Phi_n^t$ under given $\Theta_{\mathrm{L}}^{t}$, $\varepsilon_{\mathrm{P}}^{t+1}$, and $\boldsymbol{\eta}_{\mathrm{F}}^{t}$, Problem \textbf{P1} is a nested problem and decoupled into two subproblems. The first subproblem, i.e., client selection, channel allocation, and power control, is optimally solved using the KM algorithm. The second subproblem, i.e., learning rate and weighting coefficient adjustment, is further divided between the FL local learning rate adjustment, and the PL learning rate and weighting coefficient adjustment (i.e., Problem \textbf{P6}).
% , since $\Phi_{n}^{t+1}$ is independent of $\boldsymbol{\eta}_{\mathrm{P}}^{t+1}\setminus{\eta}_{\mathrm{P},n}^{t+1}$ and $\boldsymbol{\lambda}^{t+1}\setminus{\lambda}_{n}^{t+1}$, and is impact by $\boldsymbol{\eta}_{\mathrm{F}}^{t}$ with $\underset{n\in \mathcal{N}} {\sum}\varepsilon_{\mathrm{F},n}^t$.
Under $\varepsilon_{\mathrm{P}}^{t+1} \in [1-\frac{\mu^{2}}{4},1)$, Problem \textbf{P6} is convex within the specified convex region of $\eta_{\mathrm{P},n}^{t+1}$, optimally solved through convex optimization methods. 
The solution to Problem \textbf{P1} is optimal, {under the specified $\boldsymbol{\eta}_{\mathrm{F}}^{t}$, $\boldsymbol{\eta}_{\mathrm{P}}^{t+1}$, and $\boldsymbol{\lambda}^{t+1}$. 
}

{According to \eqref{down_glb_di} and \eqref{Gamma}, the per-round convergence upper bound of the FL global model depends on $\Gamma_{t+1}$ and ${\sum}_{n\in\mathcal{N}_{t}}\varepsilon_{\mathrm{F},n}^{t}$, which are minimized by solving Problems~\textbf{P2} and \textbf{P5}, respectively. In this sense, through client selection, channel allocation, power control, and FL local learning rate configuration, the per-round convergence upper bound and subsequently the overall convergence upper bound of the FL global model are minimized.} 

 \begin{algorithm}[t]
    \caption{Proposed Transmission Scheduling Policy}
    \label{algorithm2}
    \begin{algorithmic}[1]
        \Require $T_0$, $\{t_n=0\}_{n\in\mathcal{N}}$, $r_{\mathrm{min}}$, $\{P_n^{\mathrm{th}}\}_{n \in \mathcal{N}}$, $\varepsilon_{\mathrm{P}}$, $h_0^{0}$.
        \Ensure $\mathbf{c}^{t}$, $\mathbf{p}^{t}$, $\boldsymbol{\eta}_{\mathrm{F}}^{t}$, $\boldsymbol{\eta}_{\mathrm{P}}^{t+1}$, $\boldsymbol{\lambda}^{t+1}$, $\forall t$.
        \While{$\{n \mid t_n \leq T_0, n \in \mathcal{N}\} \neq \emptyset$}
            \State {$//$Client selection, channel allocation, power control}
            \State Let $P_n^t = P_n^{\mathrm{th}}$, $\forall n \in \mathcal{N}$;
            \State Obtain the set of candidate clients $\mathcal{N}_{t}^{\mathrm{a}}$ and $\{\rho_{n,\mathrm{L}}^{t}\}_{n \in \mathcal{N}}$ according to (\ref{SNR})-(\ref{element_error_pr});
            \State Obtain the optimal client selection $\mathcal{N}_{t}$ and channel allocation $\mathbf{C}^{t}$ by solving Problem \textbf{P3} using KM algorithm; 
            \State $t_n \leftarrow t_n + 1$, $\forall n \in \mathcal{N}_{t}$;
            \State $//$ Learning rate and weighting coefficient adjustment
            \State Set $\eta_{\mathrm{F},n}^{t}=\frac{\mu}{2(1+\phi_{1})L^{2}}$ and $\varepsilon_{\mathrm{P}}^{t+1}=\varepsilon_{\mathrm{P}}$;
            \ParFor{$n \in \mathcal{N}$}
                % \If{$\varepsilon_{\mathrm{P}}^{t+1} \in (2 - \mu, 1)$}
                    \State Obtain the optimal PL learning rate $\eta_{\mathrm{P},n}^{t+1,*}$ by solving Problem \textbf{P7}; 
                % \ElsIf{$\varepsilon_{\mathrm{P}}^{t+1} \in [1 - \frac{\mu^{2}}{4}, 2 - \mu]$}
                %     \State Obtain the optimal PL rate $\eta_{\mathrm{P},n,0}^{t+1}$ and $\eta_{\mathrm{P},n,1}^{t+1}$ by minimizing $\Phi_{n}^{t+1}$ on $\Omega_0^{t+1}$ and $\Omega_1^{t+1}$, using Newton's Method separately
                %     \State Obtain the optimal $\eta_{\mathrm{P},n}^{t+1,*} = \underset{\eta_{\mathrm{P},n}^{t+1}}{\arg\min}\{\Phi_{n}^{t+1}\mid_{\eta_{\mathrm{P},n,0}^{t+1}}, \Phi_{n}^{t+1}\mid_{\eta_{\mathrm{P},n,1}^{t}}\}$
                % \EndIf
                \State Obtain the optimal weighting coefficient $\lambda_{n}^{t+1,*}$ based on (\ref{lambda}) and $\eta_{\mathrm{P},n}^{t+1,*}$;
            \EndParFor
            % \State Set $h_{0}$ as the lower bound of (\ref{vL_c2}), and obtain the optimal local learning rates $\eta_{\mathrm{F},n}^{t}$ by solving (\ref{eta_G}) in parallel;
            \State $t \leftarrow t + 1$.
        \EndWhile
    \end{algorithmic}
\end{algorithm}

 % 	\begin{figure}[!t]
	% \centering
 %        \includegraphics[width=0.48\textwidth]{pic/flowchart5.pdf}
	% \caption{The flowchart of the proposed configuration and scheduling policy.}
	% \label{fig:flowchart}
	% \end{figure}

\section{Experiments and Results}
\label{experiments}
% In this section, we assess the convergence, accuracy, and fairness of DP-Ditto experimentally. The impact of privacy considerations on those aspects of DP-Ditto is discussed.

% \subsection{Experimental Settings}
\begin{table}[t]
\caption{Simulation parameter configuration}
\label{parameters}
\centering
\small 
\begin{tabular}{|*{2}{l|}}
\hline
\textbf{Parameter} & \textbf{Value} \\
\hline
% Cell radius & 150 m \\
% % \hline
% D2D pair distance & $\textless$40 m \\
% % \hline
% RB bandwidth $W$& 180 KHz \\
% \hline
Total bandwidth & 10 MHz \\
BS's maximum transmit power & 30 dBm \\
% \hline
client's maximum transmit power $P_{n}^{\mathrm{th}}$& 23 dBm \\
% % % \hline
Noise spectral density & -169 dBm/Hz \\
% % \hline
% Cellular link path loss & $128.1+37.6\log_{10}{\left(d\right)} $ \\
% \hline
Path loss at 1 m & -30 dB \\
% \hline
Path loss exponent & 2.8 \\
% \hline
 % Maximum transmission delay $\tau_{\mathrm{max}}$ & 0.1 ms \\
% \hline
Modulation order $M_{\boldsymbol{\omega}}$ & 256 \\
% \hline
% Symbols' number per parameter $N_{\boldsymbol{\omega}}$& 2  \\
% \hline
Sampling rate $q$& 0.01  \\
% \hline
% {The standard deviation of DP noise $\sigma_\mathrm{DP}$ & 0.01  }\\
% % \hline
% Discount factor $\gamma$& 0.97  \\
% % \hline
% Learning rate & 0.01  \\
% % \hline
% Soft update parameter $v$& 0.001  \\
% % \hline
% penalty coefficient $R^s$& 1 \\
\hline
\end{tabular}
\end{table}
Consider $M=20$ clients under the coverage of a BS with a coverage radius of 100 m. The distances between the BS and the clients are uniformly randomly taken from $[10,100]$~m. We consider Rayleigh fading for both uplink and downlink channels. The total bandwidth is $K\cdot B=10$ MHz with $K=10$ by default.
% The other parameters are provided in Table~\ref{}.
% We set $N=20$ clients by default. 
% Each quantized parameter is modulated into $a=2$ 256-QAM symbols, i.e., 
Three network models are considered here:
% \subsubsection{Models} 
% We consider the following three models.
\begin{itemize}
\item \textbf{MLR:} This classification method generalizes logistic regression to multiclass problems. It constructs a linear predictor function to predict the probability of an outcome based on an input observation.
% \item \textbf{Logic regression:} 
\item \textbf{DNN:} This model consists of an input layer, a fully connected hidden layer (with 100 neurons), and an output layer. The rectified linear unit (ReLU) activation function is applied to the hidden layer.
\item \textbf{CNN:} This model contains two convolutional layers with 32 and 64 convolutional filters per layer, and a pooling layer in-between to prevent over-fitting. Following the convolutional layers are two fully connected layers (with 1024 and 512 neurons for FMNIST, and 1600 and 512 neurons for CIFAR10). We use the ReLU in the convolutional and fully connected layers. 
\end{itemize} 
% The default learning rates for the local and personalized models are $\eta_{\mathrm{F},n}^{t}=0.01$ and $\eta_{\mathrm{P},n}^{t}=0.01$, respectively. The clipping threshold is $C=3$, $C=7$, and $C=20$ for MLR, DNN, and CNN, respectively. The default privacy budget and the sampling rate are $\epsilon_\mathrm{Q}=0.01$ and $q=0.01$, respectively. $\sigma_\mathrm{DP}=0.01$. The total bandwidth is $KB=100$ MHz, The maximum transmitting power is $$
The default FL and PL learning rates are $\eta_{\mathrm{F},n}^{t}=0.01$ and $\eta_{\mathrm{P},n}^{t}=0.01$, respectively. 
{ The default PL learning rates are used as the initial PL learning rates when $t=0$, and the default PL and FL learning rates are used for the compared scheduling policy (i.e., Non-Adjustment).}
% {\color{green}In our simulation, the number of quantization bits $R$ and clipping threshold $C$ are set considering the communication overhead and accuracy of network models.}
The clipping threshold is $C=3$, $7$, and $20$ for MLR, DNN, and CNN, respectively. The default privacy budget, 
% {\color{green}the number of quantization bits,} 
the maximum number of transmissions per client, and the weighting coefficient for each client are $\epsilon_\mathrm{Q}=1$, 
% {\color{green}$R=16$,} 
$T_0=20$, and $\lambda_n^t=0.5$. 
By default, $\delta_{\mathrm{Q}}=0.001$ for DNN and MLR, and $\delta_{\mathrm{Q}}=0.005$ for CNN.
% The default $\delta_\mathrm{Q}$ is $0.0007$, $$  
{The corresponding values of $\sigma_{\mathrm{DP}}$ are given in Table~\ref{table_sigma}.} 
The maximum transmission delays are $\tau_{\max}=0.01$, $0.1$, and $0.6$ s for MLR, DNN, and CNN, respectively.
{Given a dataset and an ML model, $L$ and $\mu$ can be obtained by empirically estimating the minimum and maximum of $\frac{\left\Vert \nabla F\left(\boldsymbol{\omega}\right)-\nabla F\left(\boldsymbol{\omega}'\right)\right\Vert }{\left\Vert \boldsymbol{\omega}-\boldsymbol{\omega}'\right\Vert }$~\cite{8664630}. $L$ is the maximum. $\mu$ is the minimum.} 
% \color{green!60!black}
% $L=1.32$, $0.43$, $0.29$, $0.33$, and $\mu=0.27$, $0.13$, $0.05$, $0.09$, for DNN and MLR on the MNIST dataset, and CNN on the FMNIST and CIFAR10 datasets, respectively.
\footnote{For DNN on the MNIST dataset, $L=1.32$ and $\mu=0.27$. 
For MLR on the MNIST dataset, $L=0.43$ and $\mu=0.13$. 
For CNN on the FMNIST dataset, $L=0.29$ and $\mu=0.05$. 
For CNN on the CIFAR10 dataset, $L=0.33$ and $\mu=0.09$. 
% In our simulation, to satisfy the required privacy budget for different network models, 
The other parameters are specified in Table~\ref{parameters}.}

 \begin{table}\small
 \tabcolsep=3pt
    \centering
    % \resizebox{0.45\textwidth}{!}{%
    \caption{The standard deviation of the DP noise, $\sigma_{\mathrm{DP}}$}
    \begin{tabular}{l|c|c|c|c|c|c}
     % \diagbox[width=2em]{Metrics}{Methods} 
     \hline
     & $T_{0}=5$ & $T_{0}=10$& $T_{0}=15$ & $T_{0}=20$& $T_{0}=25$ &$T_{0}=30$  \\
     \hline
       MLR  & {0.001}  & 0.003  & 0.005  & 0.006  & 0.008 & 0.01 \\
       % \hline
       DNN  & 0.004  & 0.008  & 0.012  & 0.016  & 0.02 & 0.024 \\
       % \hline
       CNN  & 0.0025  & 0.0045 & 0.007  & 0.009  & 0.012 & 0.014 \\
       \hline
    \end{tabular} %
    % }
    \label{table_sigma}
\end{table}

% \subsubsection{Datasets}
We consider three widely used public datasets, i.e., MNIST, Fashion-MNIST (FMNIST), and CIFAR10.
% \begin{itemize}
%     \item \textbf{MNIST:} The standard MNIST comprises $60,000$ training and $10,000$ testing examples of $28 \times 28$ grayscale images of handwritten digits between "$0$" and "$9$" \cite{lecun1998gradient}. We use the DNN and MLR to classify this dataset.
%     \item \textbf{FMNIST:} The FMNIST dataset contains $28 \times 28$ grayscale images of $70,000$ fashion products in ten categories (labels), e.g., coats and dresses \cite{xiao2017Fashion}. We use the CNN to classify the FMNIST dataset.
%     \item \textbf{CIFAR10:} This dataset consists of $60,000$ examples of $32 \times 32$ color images in ten classes ($6,000$ per class), including $50,000$ for training and $10,000$ for testing. We use the CNN and MLR to classify the dataset.
% \end{itemize}
Cross-entropy loss is considered for the datasets. 
% for MNIST, FMNIST, and CIFAR10.
% \subsubsection{Benchmark} 
% We additionally explore the performance of other personalized FL frameworks regarding accuracy and fairness. In particular, 

The following benchmarks are considered for WPFL:
\begin{itemize} 
    \item 
    \textbf{pFedMe~\cite{t2020personalized}:} The global FL model is updated in the same way as the typical FL. Learning from the global model, each personalized model is updated based on a regularized loss function using the Moreau envelope. 
    \item
    \textbf{APPLE~\cite{luo2022adapt}:} Each client uploads to the server a core model learned from its personalized model and downloads the other clients' core models in each round. The personalized model is obtained by locally aggregating the core models with learnable weights.
    \item 
    \textbf{FedAMP~\cite{huang2021personalized}:} The server has a set of personalized cloud models. Each client has a local personalized model. In each round, the server updates the personalized cloud models using an attention-inducing function of the uploaded local models and combination weights. Upon receiving the cloud model, each client locally updates its personalized model based on a regularized loss function.
    \item 
    \textbf{FedALA~\cite{zhang2023fedala}:} In every round of FedALA, each client adaptively initializes its local model by aggregating the downloaded global model and the old local model with learned aggregation weights before local training.
    % \item 
    % \textbf{Ditto~\cite{li2021ditto}:} This original Ditto does not consider DP, as described in Section~III-A.
    
\end{itemize}
{For a fair comparison, all these benchmarks are enhanced with the proposed DP mechanism and scheduling policy.

{To assess the proposed quantization-assisted Gaussian mechanism, we consider the following DP implementations {and baselines}:
\begin{itemize}    
    \item \textbf{Gaussian mechanism~\cite{wei2020federated}:} The FL local model is protected by Gaussian noise satisfying $(\epsilon_{\mathrm{Q}},\delta_{\mathrm{Q}})$-DP. { The contribution of quantization to privacy is overlooked.} 
    % {
    % The standard deviation of the noise
    % }
    \item \textbf{Moments accountant (MA)-based DP mechanism~\cite{abadi2016deep}:} Much tighter estimates on the privacy loss can be obtained through the MA technique and bisection. 
    The FL local model is protected by Gaussian noise with a smaller standard deviation {than the Gaussian mechanism~\cite{wei2020federated}.
    % , derived utilizing the MA and bisection methods. 
    This mechanism also overlooks the contribution of quantization to privacy.}
    
    {\item \textbf{DP with dithering quantization~\cite{wang2024p2cefl}:} 
    To guarantee $(\epsilon_{\mathrm{Q}},\delta_{\mathrm{Q}})$-DP, the quantization intervals are determined by sampling a set of gamma random variables for the coordinates of all clients at each slot. Each client's FL local model is quantized by adding uniform noise. With shared random seeds across clients, the server estimates the local models by subtracting the uniform noise. 
    % added to each parameter.
    
    \item \textbf{Perfect Gaussian~\cite{wei2020federated}:} The PFL is executed in an ideal environment with no quantization conducted and no communication error undergone. The FL local model is protected by Gaussian noise that satisfies $(\epsilon_{\mathrm{Q}},\delta_{\mathrm{Q}})$-DP. No quantization noises and transmission errors are considered. Comparing the proposed mechanism with this baseline helps isolate the impact of quantization and imperfect communication on the proposed mechanism.
    
    \item\textbf{WPFL without DP:} The standard WPFL is conducted with no privacy considered, where quantization is conducted for uploading the FL local models and downloading the FL global models, and imperfect communication channels are undergone. Comparing the proposed mechanism with this baseline helps isolate the impact of DP on the proposed mechanism.}    
\end{itemize}}

{To assess the proposed scheduling policy, the following scheduling policies are considered for comparison:}
\begin{itemize}
    \item 
    \textbf{Round-Robin:} The BS selects the clients for the available subchannels in a round-robin manner with no adjustment of the coefficients.
    \item 
    \textbf{Random Selection:} The clients are selected randomly with random subchannel allocation and fixed coefficients within all training rounds.
    \item 
    \textbf{Non-Adjustment:} The BS selects the clients in each round utilizing the KM algorithm with fixed coefficients within all training rounds.
\end{itemize}

\subsubsection{Communication overhead}
\begin{table*}
\centering
\small 
\caption{\small  The average numbers of quantization bits ($B_{\mathrm{q}}$) and overhead bits ($B_{\mathrm{o}}$) per parameter under different DP implementations. Note that a client transmits $\min(16, B_{\rm q}+B_{\rm o})$ bits per parameter since it would be more efficient to transmit all 16 quantization bits per parameter when $B_{\rm q}+B_{\rm o}\geq 16$; e.g., the Gaussian mechanism under MLR.} 
% on the MNIST dataset, DNN on the MNIST dataset, and CNN on the CIFAR10 dataset.}
\begin{tabular}{c|c|cc|cc|cc|cc|cc}
\toprule
\multirow{2}{*}{Model} & \multirow{2}{*}{Dataset} & \multicolumn{2}{c|}{Proposed} & \multicolumn{2}{c|}{MA} & \multicolumn{2}{c|}{Gaussian} & \multicolumn{2}{c|}{Dithering} & \multicolumn{2}{c}{Without DP}\\
\cline{3-12}
& & $B_{\mathrm{q}}$ & $B_{\mathrm{o}}$ & $B_{\mathrm{q}}$ & $B_{\mathrm{o}}$ & $B_{\mathrm{q}}$ & $B_{\mathrm{o}}$ & $B_{\mathrm{q}}$ & $B_{\mathrm{o}}$ & $B_{\mathrm{q}}$ & $B_{\mathrm{o}}$\\
\midrule
MLR & MNIST   & 6.81 & 2.61 & 8.98 & 4.00 & 14.32 & 3.93 & 8.01 & 1.34 & 4.70 & 1.63\\
DNN & MNIST   & 4.55 & 3.04 & 11.13 & 4.00 & 14.82 & 2.10 & 6.91 & 1.20 & 2.54 & 0.72 \\
CNN & FMNIST  & 5.17 & 4.00 & 10.95 & 4.00 & 14.79 & 1.09 & 5.59 & 4.00 & 3.88 & 3.85 \\
CNN & CIFAR10 & 4.93 & 4.00 & 11.34 & 4.00 & 14.84 & 0.89 & 5.20 & 4.00 & 3.55 & 3.92 \\
\bottomrule
\end{tabular}
\label{table_bit}
\end{table*}
{
Under the proposed mechanism, the Gaussian mechanism~\cite{dwork2014algorithmic}, and MA~\cite{abadi2016deep}, we employ a 16-bit quantizer ($R=16$) to ensure fine quantization intervals. Note that the significant bits of most weight parameters remain unused. Only the effectively utilized bits, along with the sign bit, need to be transmitted for each parameter. Moreover, multiple consecutive parameters may have the same number of effective bits. An index list is also sent to specify the count of consecutive parameters sharing the same number of effective bits and the number. 

Under Dithering~\cite{wang2024p2cefl}, the number of quantization bits per parameter can vary across clients. The average numbers of quantization bits are evaluated to be $16.05$, $16.05$, $16.25$, and $16.00$, for MLR on the MNIST dataset, DNN on the MNIST dataset, CNN on the FMNIST dataset, and CNN on the CIFAR10 dataset, respectively. Likewise, only the effective bits and the sign bit are transmitted, along with an index list specifying the count of consecutive parameters sharing the same number of effective bits and the number. 

Table~\ref{table_bit} shows the average numbers of quantization bits ($B_{\mathrm{q}}$) and overhead bits ($B_{\mathrm{o}}$) per parameter under different DP mechanisms, datasets, and models. Clearly, the consideration of DP increases the number of bits to be transmitted, because the DP noise extends the range of the model parameters. Among the schemes considering DP, the proposed mechanism generally requires the smallest number of bits to be transmitted, as the privacy-enhancing capability of quantization is exploited to help reduce the DP noise in the mechanism.}

\subsubsection{Comparison with existing DP mechanisms}
\begin{figure}[t]
\centering  
\subfigure[{Accuracy vs. $T$ (DNN, MNIST)}]
{
\label{privacy_mechanisms_dnn_mnist}
\includegraphics[width=0.23\textwidth]{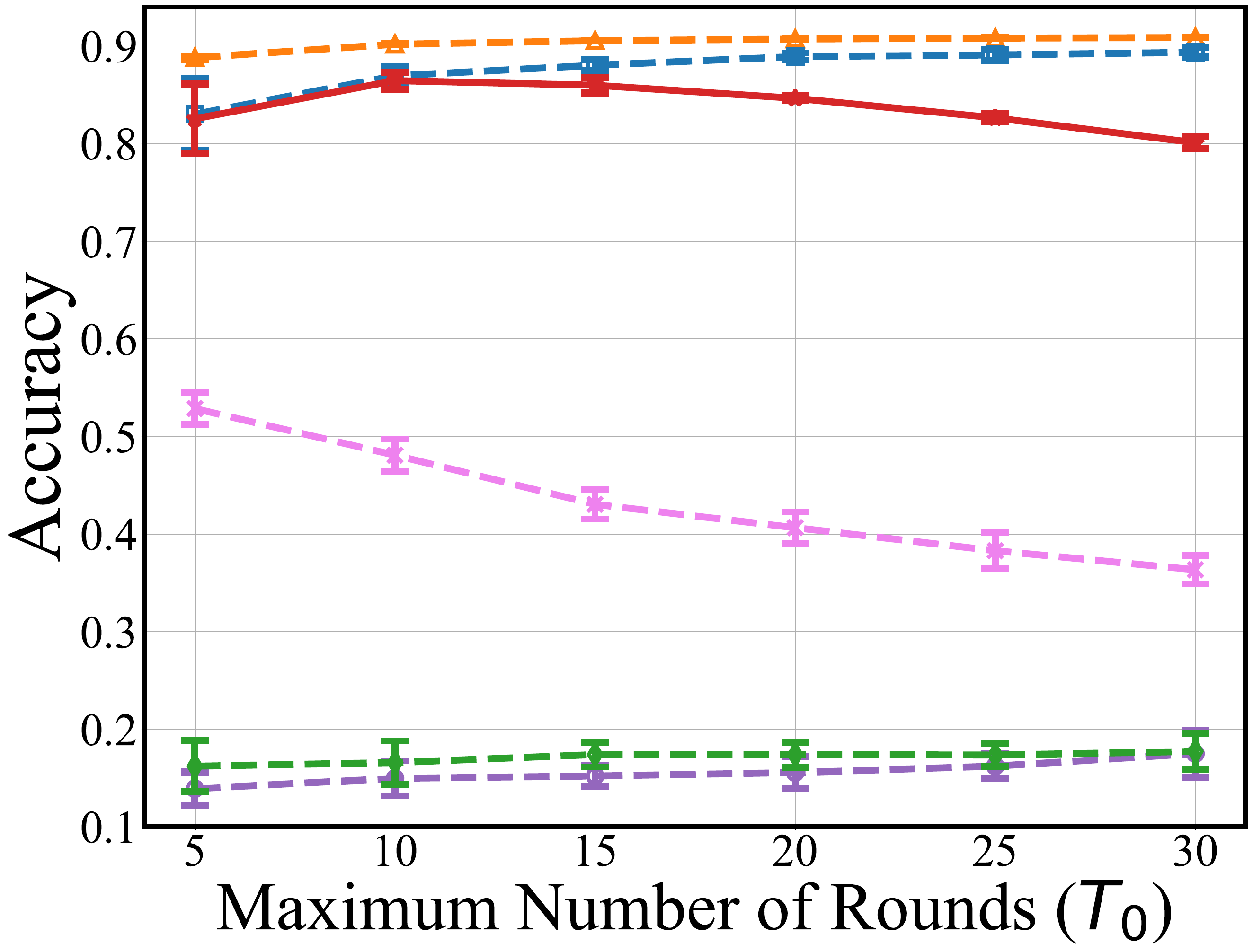}} 
\hspace{-1mm}
\subfigure[{Accuracy vs. $T$ (MLR, MNIST)}]{
\label{ privacy_mechanisms_mlr_mnist}
\includegraphics[width=0.23\textwidth]{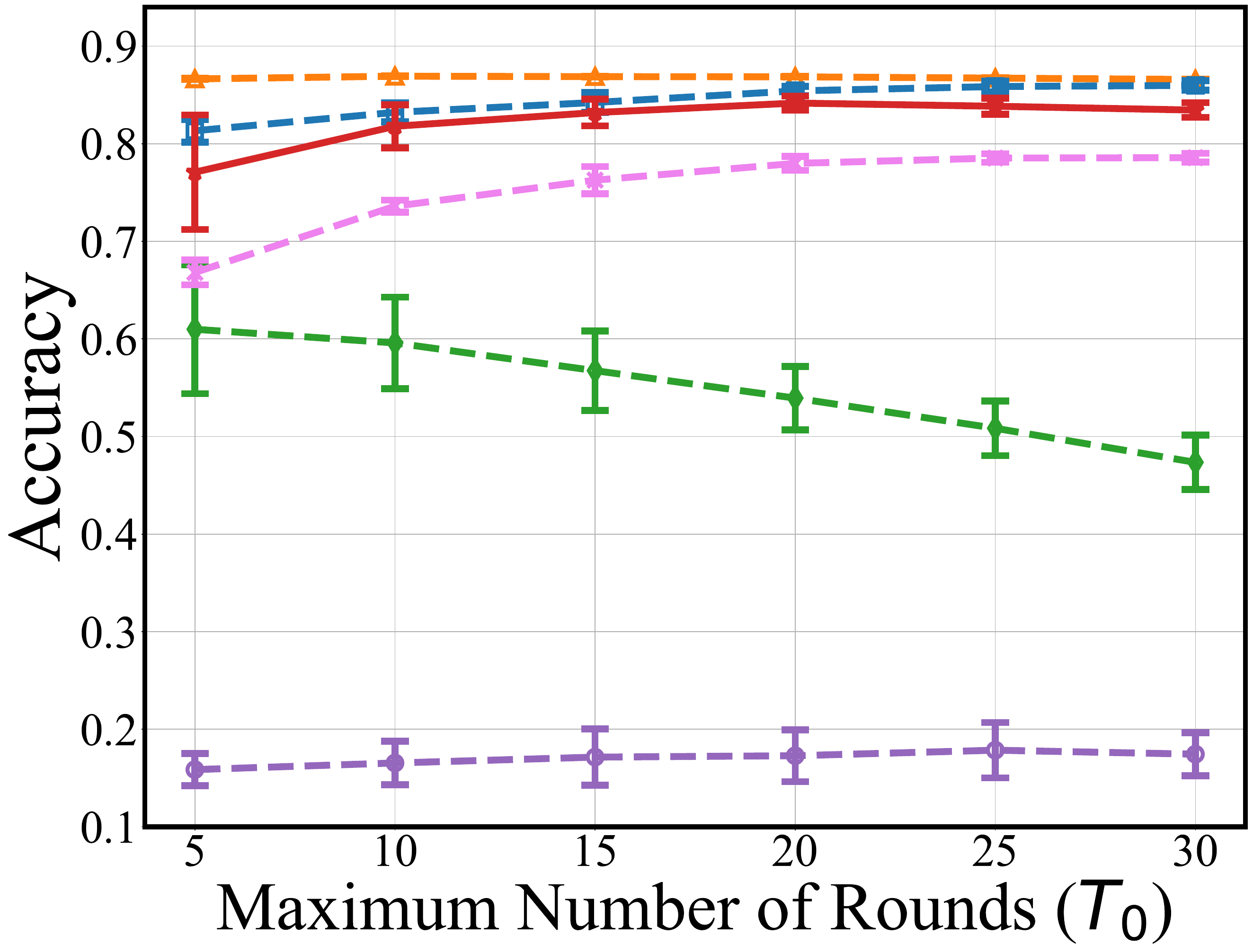}}
\\
\subfigure[Accuracy vs. $T$ (CNN,CIFAR10)]{
\label{privacy_mechanisms_cnn_Cifar10}
\includegraphics[width=0.23\textwidth]{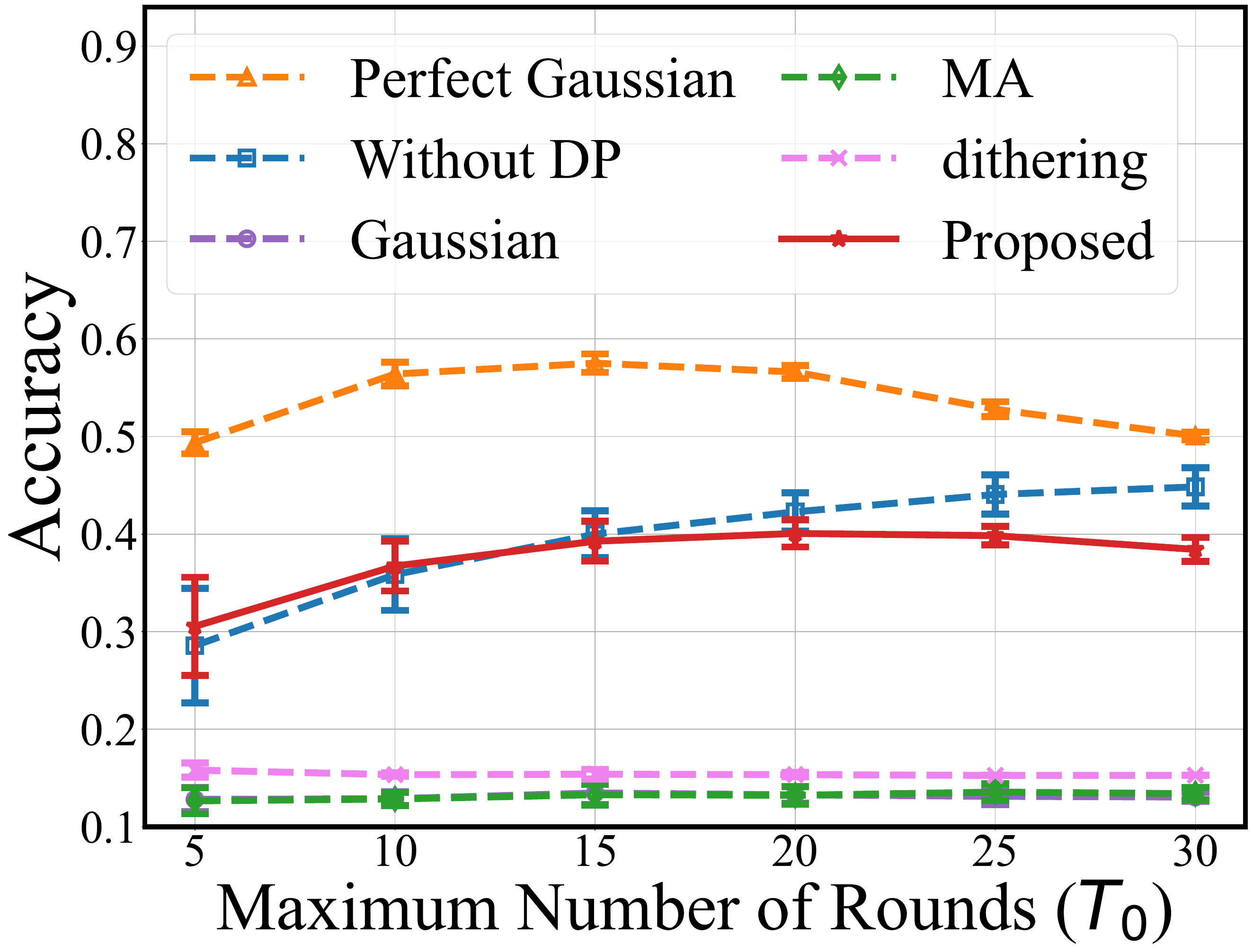}}
\hspace{-1mm}
\subfigure[Accuracy vs. $T$ (CNN, FMNIST)]{
\label{privacy_mechanisms_cnn_fmnist}
\includegraphics[width=0.23\textwidth]{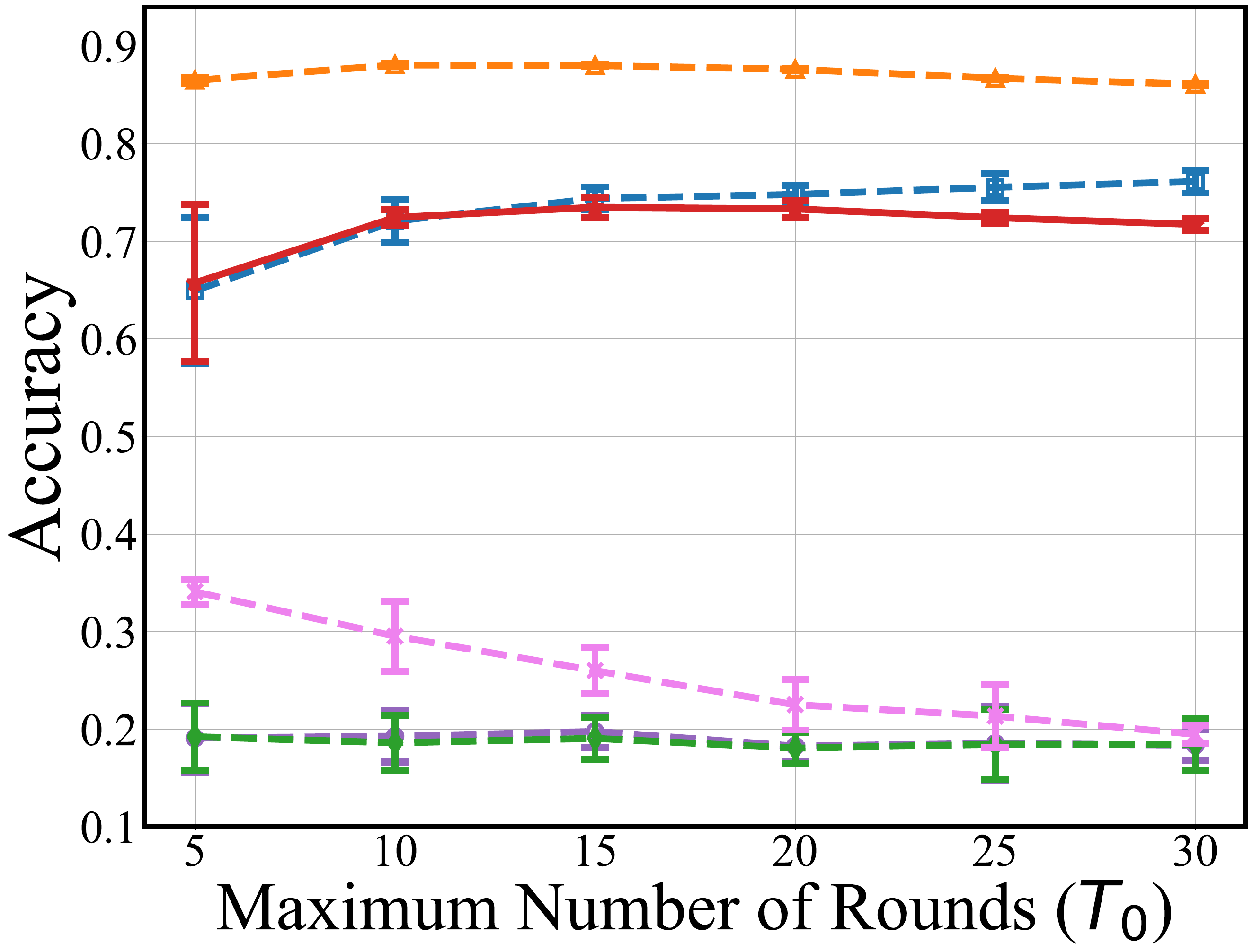}}
\caption{PL model accuracy vs. the maximum number of rounds $T_{0}$ under different allocation methods, where the default setting 
% of $\epsilon_{\mathrm{Q}}$ and $\delta_{\mathrm{Q}}$
is considered: $\epsilon_{\mathrm{Q}}=1$, $\delta_{\mathrm{Q}}=0.001$ for DNN and MLR, and $\delta_{\mathrm{Q}}=0.005$ for CNN.}
\label{privacy_mechanisms_acc}
\end{figure}

% \begin{table*}[]
% % \tabcolsep=1pt
% \centering
% \caption{\small The average numbers of quantization bits ($B_{\mathrm{q}}$) and overhead bits ($B_{\mathrm{o}}$) per parameter under different DP implementations. Note that a client transmits $\min(16, B_{\rm q}+B_{\rm o})$ bits per client since it would be more efficient to transmit all 16 quantization bits per parameter when $B_{\rm q}+B_{\rm o}\geq 16$; see the Gaussian mechanism under MLR on the MNIST dataset, DNN on the MNIST dataset, and CNN on the CIFAR10 dataset.}
% \begin{tabular}{c|c|cc|cc|cc|cc|cc}
% \toprule
% \multirow{2}{*}{Model} & \multirow{2}{*}{Dataset} & \multicolumn{2}{c|}{Proposed} & \multicolumn{2}{c|}{MA} & \multicolumn{2}{c|}{Gaussian} & \multicolumn{2}{c|}{Dithering} & \multicolumn{2}{c}{Without DP}\\
% \cline{3-12}
% & & $B_{\mathrm{q}}$ & $B_{\mathrm{o}}$ & $B_{\mathrm{q}}$ & $B_{\mathrm{o}}$ & $B_{\mathrm{q}}$ & $B_{\mathrm{o}}$ & $B_{\mathrm{q}}$ & $B_{\mathrm{o}}$ & $B_{\mathrm{q}}$ & $B_{\mathrm{o}}$\\
% \midrule
% MLR & MNIST   & 6.81 & 2.61 & 8.98 & 4.00 & 14.32 & 3.93 & 8.01 & 1.34 & 4.70 & 1.63\\
% DNN & MNIST   & 4.55 & 3.04 & 11.13 & 4.00 & 14.82 & 2.10 & 6.91 & 1.20 & 2.54 & 0.72 \\
% CNN & FMNIST  & 5.17 & 4.00 & 10.95 & 4.00 & 14.79 & 1.09 & 5.59 & 4.00 & 3.88 & 3.85 \\
% CNN & CIFAR10 & 4.93 & 4.00 & 11.34 & 4.00 & 14.84 & 0.89 & 5.20 & 4.00 & 3.55 & 3.92 \\
% \bottomrule
% \end{tabular}
% \label{table_bit}
% \end{table*}

{
Fig.~\ref{privacy_mechanisms_acc} evaluates the impact of different DP mechanisms (i.e., Gaussian~\cite{dwork2014algorithmic}, MA~\cite{abadi2016deep}, DP with Dithering~\cite{wang2024p2cefl}, and the proposed quantization-assisted Gaussian mechanism) on the accuracy of WPFL in a noisy wireless environment. 
% {\color{green}where $\epsilon_{\mathrm{Q}}=1$. 
% % We compare the results with the accuracy of models trained without DP perturbation.
% $\delta_{\mathrm{Q}}=0.001$ for the DNN and MLR models on the MNIST dataset, and $\delta_{\mathrm{Q}}=0.005$ for the CNN model on the CIFAR10 and FMNIST datasets.}
The accuracy of WPFL under the proposed DP protection first increases and then decreases as $T_0$ grows. This is because the effect of the DP perturbation and the quantization and transmission errors accumulates, and $\sigma_{\mathrm{DP}}$ increases with $T_{0}$ based on \textbf{Theorem~\ref{privacy_budget}}, causing performance degradation when $T_0$ is large.}

{In the scenarios with quantization and imperfect communication channels, the proposed mechanism achieves at least $5.00\%$ better accuracy than the second-best (i.e., Dithering) 
% mechanisms sustaining quantization noises and transmission errors while overlooking the privacy benefit of quantization (i.e., Gaussian, MA, and Dithering)},  
and is only $9.26\%$ worse than the standard WPFL without DP.
This is because our mechanism utilizes the inherent privacy-preserving ability of quantization, reducing the intensity of the added noise compared to the Gaussian and MA mechanisms. 
Although privacy is guaranteed using uniform noise addition and dithering quantization in Dithering, the quantization intervals can vary over time and differ among clients, depending on gamma random variables sampled in each round. The bit lengths can be large for clients with tiny intervals, causing high transmission errors under Dithering.

The proposed mechanism is 20.76\% worse in test accuracy than Perfect Gaussian. This difference arises because Perfect Gaussian operates in an ideal environment where there is no quantization conducted and no communication errors undergone, and FL and PL model training is influenced solely by Gaussian noise.}  

{
To maintain acceptable performance of PL models in practice, 
a training process can be terminated once the accuracy stops improving or starts to degrade. Additionally, cross-validation can be employed to select $T_0$ by evaluating performance across multiple subsets of the data. Note that the optimal $T_0$ can be different under specific networks, datasets, and DP mechanisms. Fig.~\ref{privacy_mechanisms_acc} provides guidance on adjusting $T_0$ in different experimental setups.}

\subsubsection{Comparison with alternative allocation schemes}

\begin{figure}[t]
\centering  
\subfigure[{Accuracy vs. $T_0$ (DNN, MNIST)}]
{
\label{allocation_acc_mnist_dnn}
\includegraphics[width=0.23\textwidth]{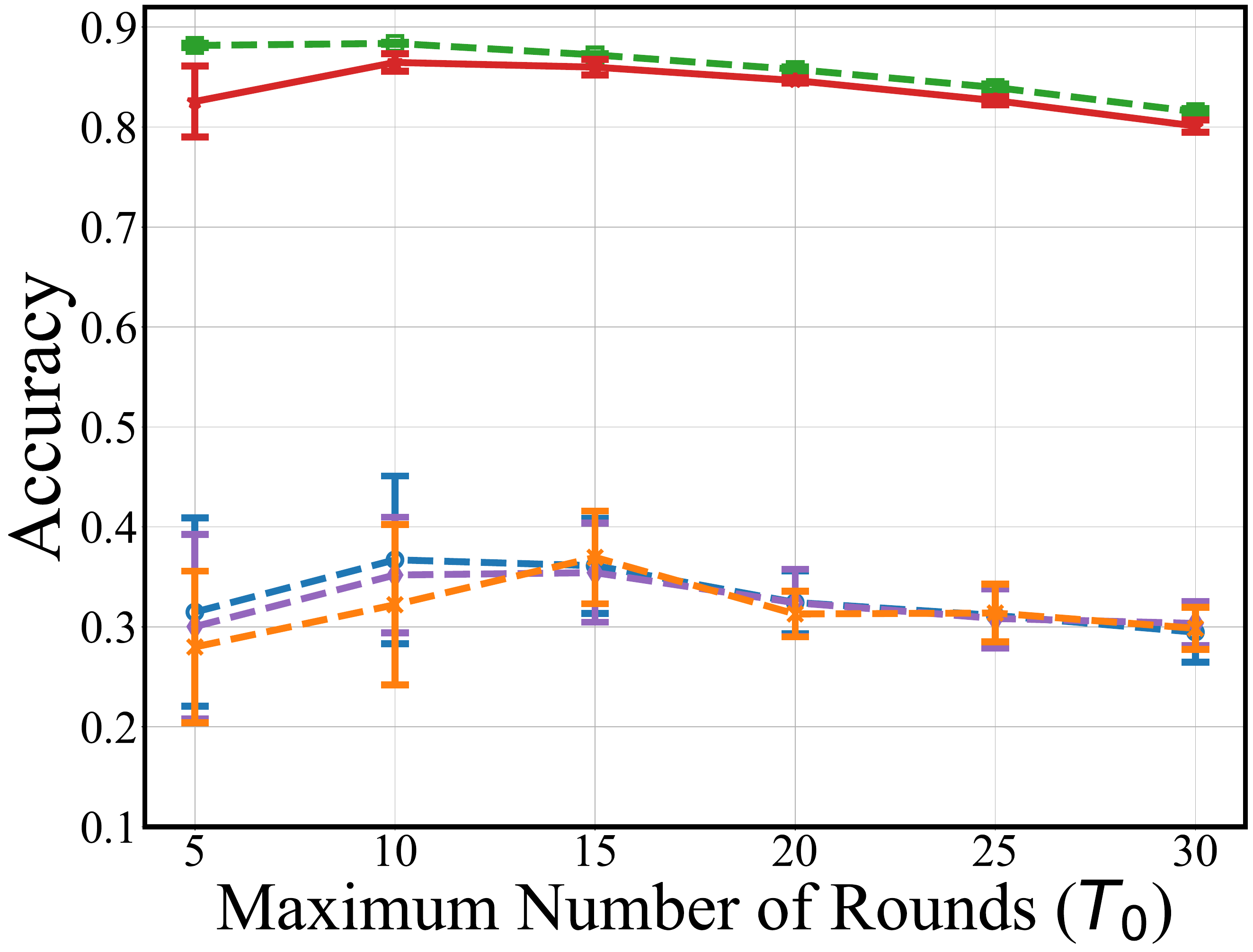}} 
\hspace{-1mm}
\subfigure[{Accuracy vs. $T_0$ (MLR, MNIST)}]{
\label{allocation_acc_mnist_mlr}
\includegraphics[width=0.23\textwidth]{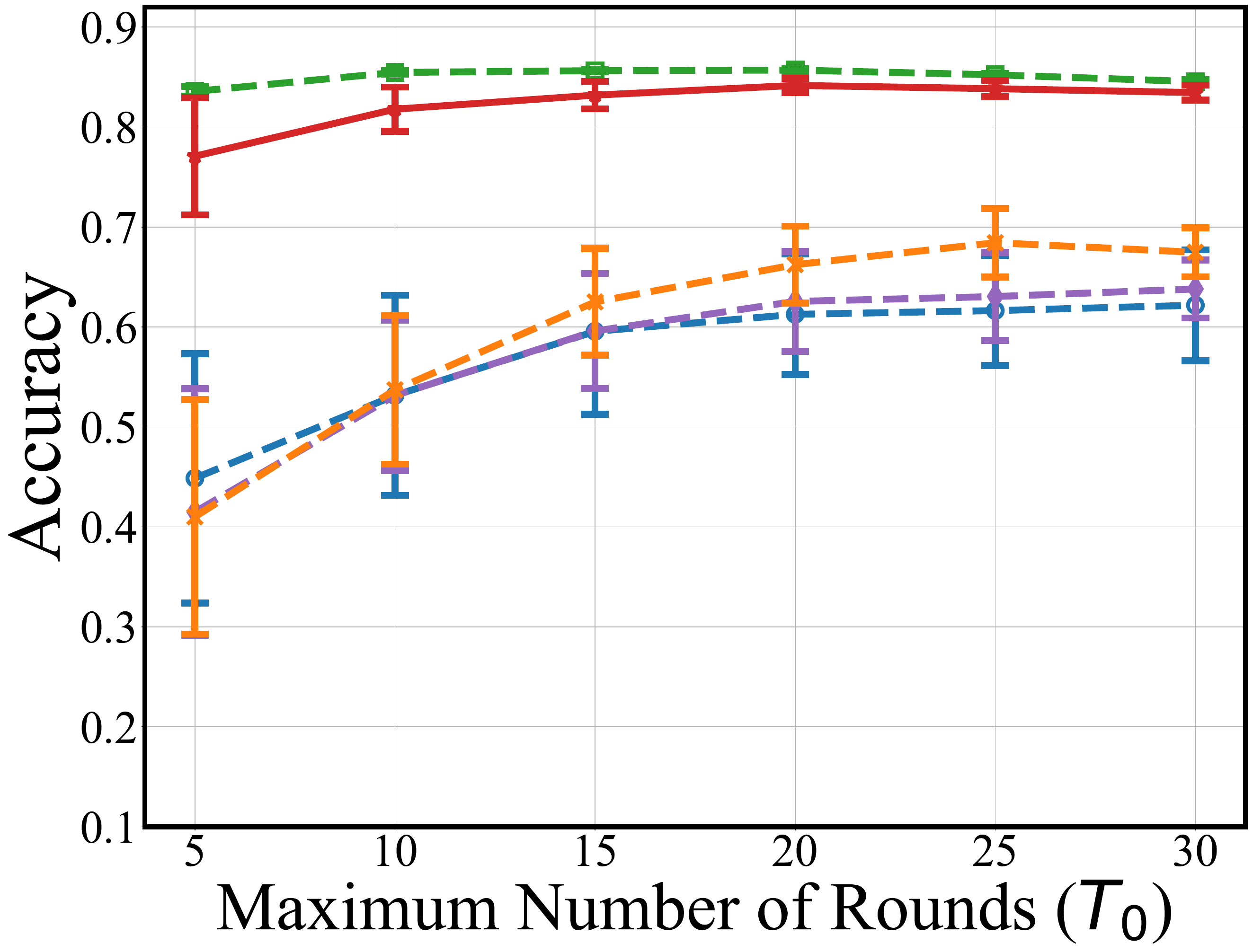}}
\\
\subfigure[Accuracy vs. $T_0$ (CNN,CIFAR10)]{
\label{allocation_acc_Cifar10_cnn}
\includegraphics[width=0.23\textwidth]{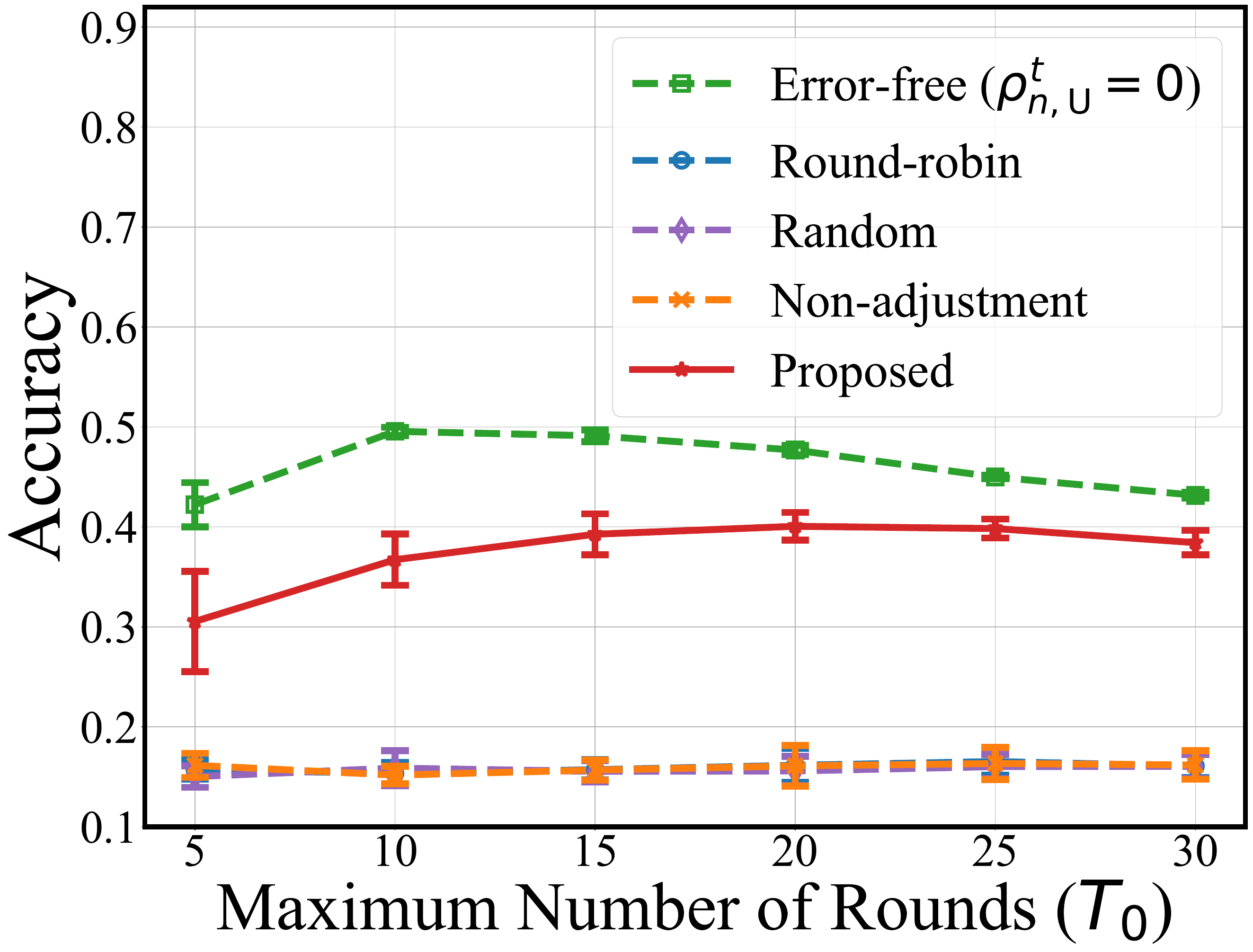}}
\hspace{-1mm}
\subfigure[Accuracy vs. $T_0$ (CNN, FMNIST)]{
\label{allocation_acc_fmnist_cnn}
\includegraphics[width=0.23\textwidth]{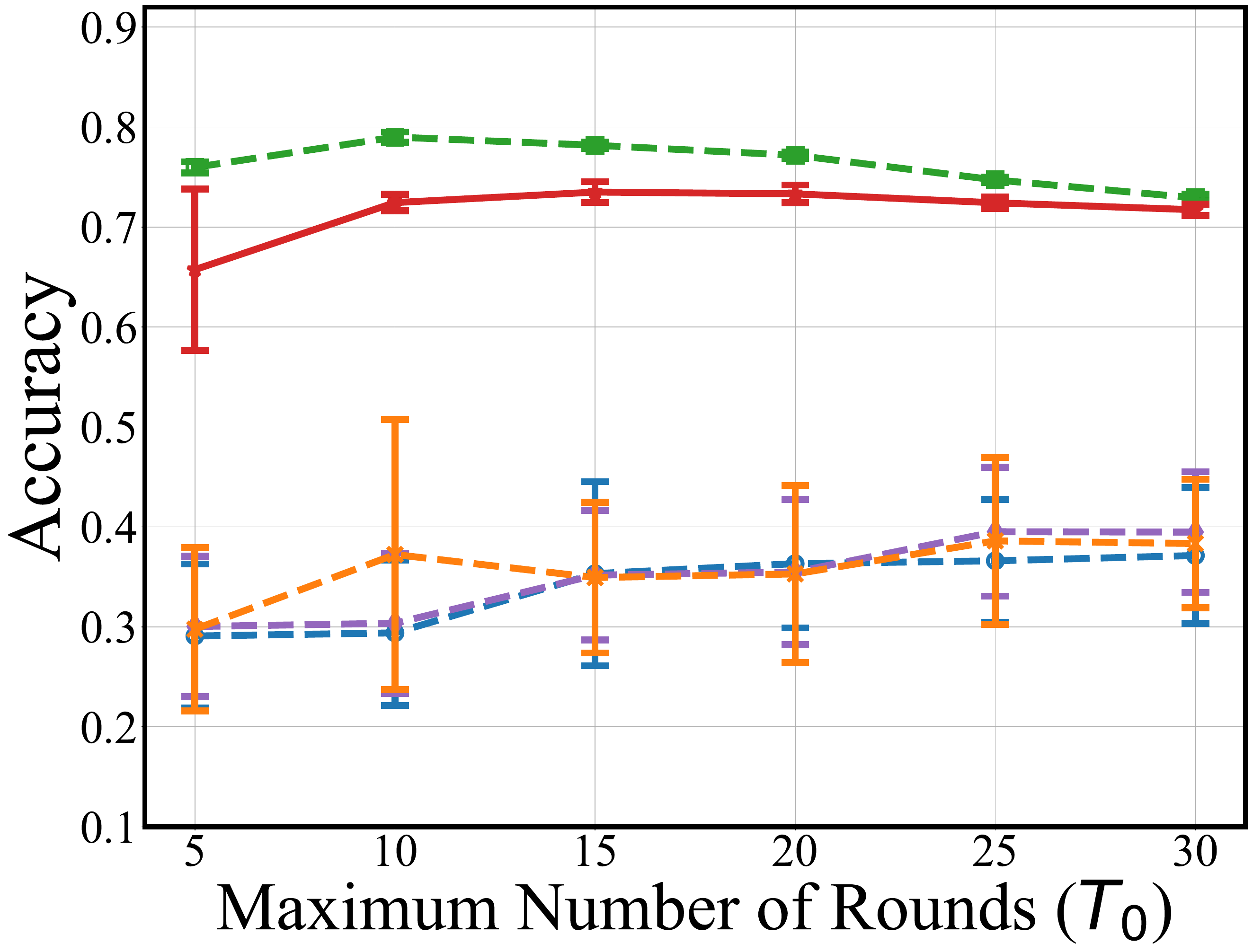}}
\caption{{Accuracy of the PL model concerning the maximum number of rounds $T_{0}$ under different allocation methods. $\epsilon_{\mathrm{Q}}=1$. $\delta_{\mathrm{Q}}=0.001$ for DNN and MLR, and $\delta_{\mathrm{Q}}=0.005$ for CNN.}}
\label{allocation_acc}
\end{figure}

\begin{figure}[t]
\centering  
\subfigure[{Fairness vs. $T_0$ (DNN, MNIST)}]
{
\label{allocation_jain_mnist_dnn}
\includegraphics[width=0.23\textwidth]{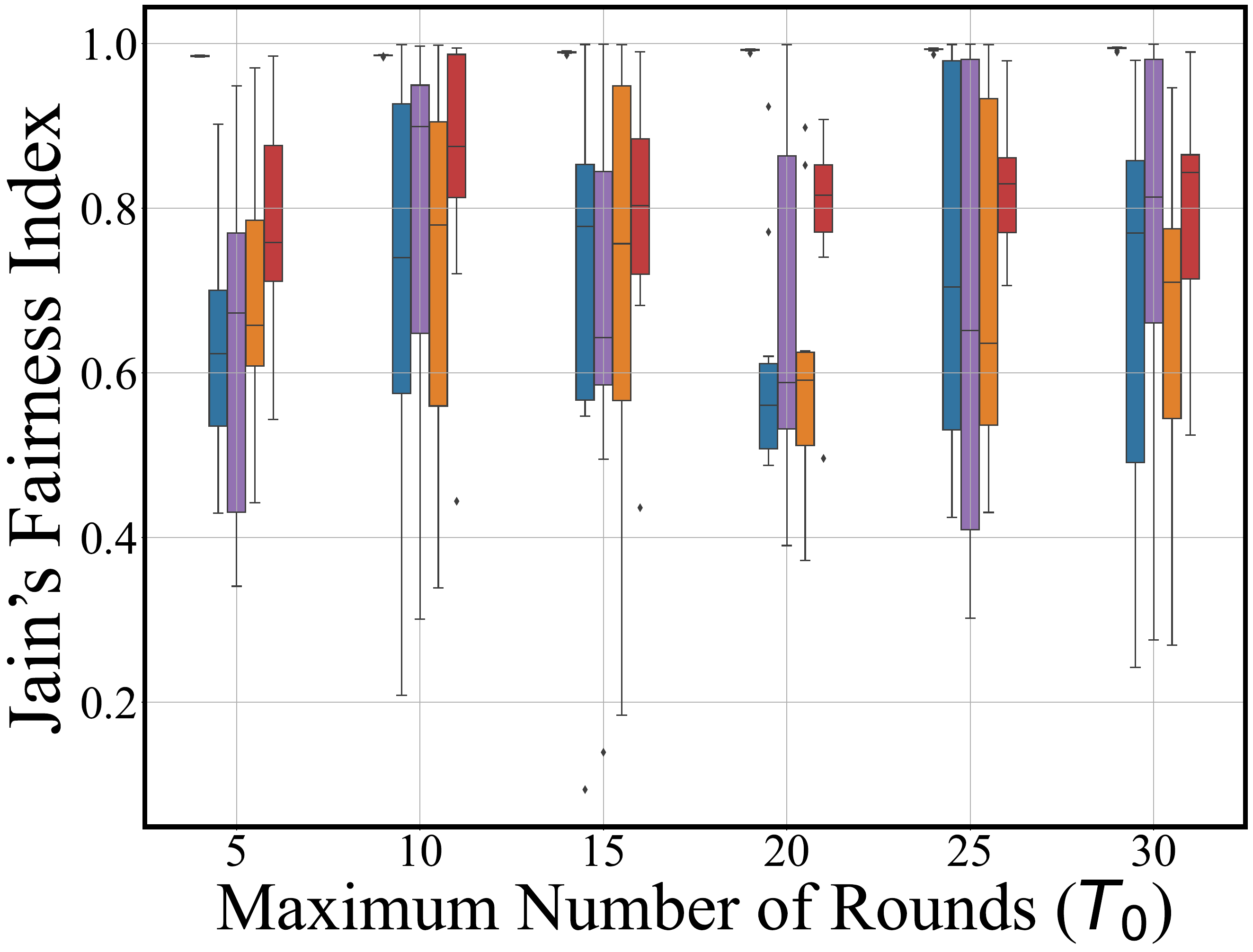}} 
\hspace{-1mm}
\subfigure[Maximum test loss vs. $T_0$ (DNN, MNIST)]
{
\label{allocation_worst_loss_mnist_dnn}
\includegraphics[width=0.23\textwidth]{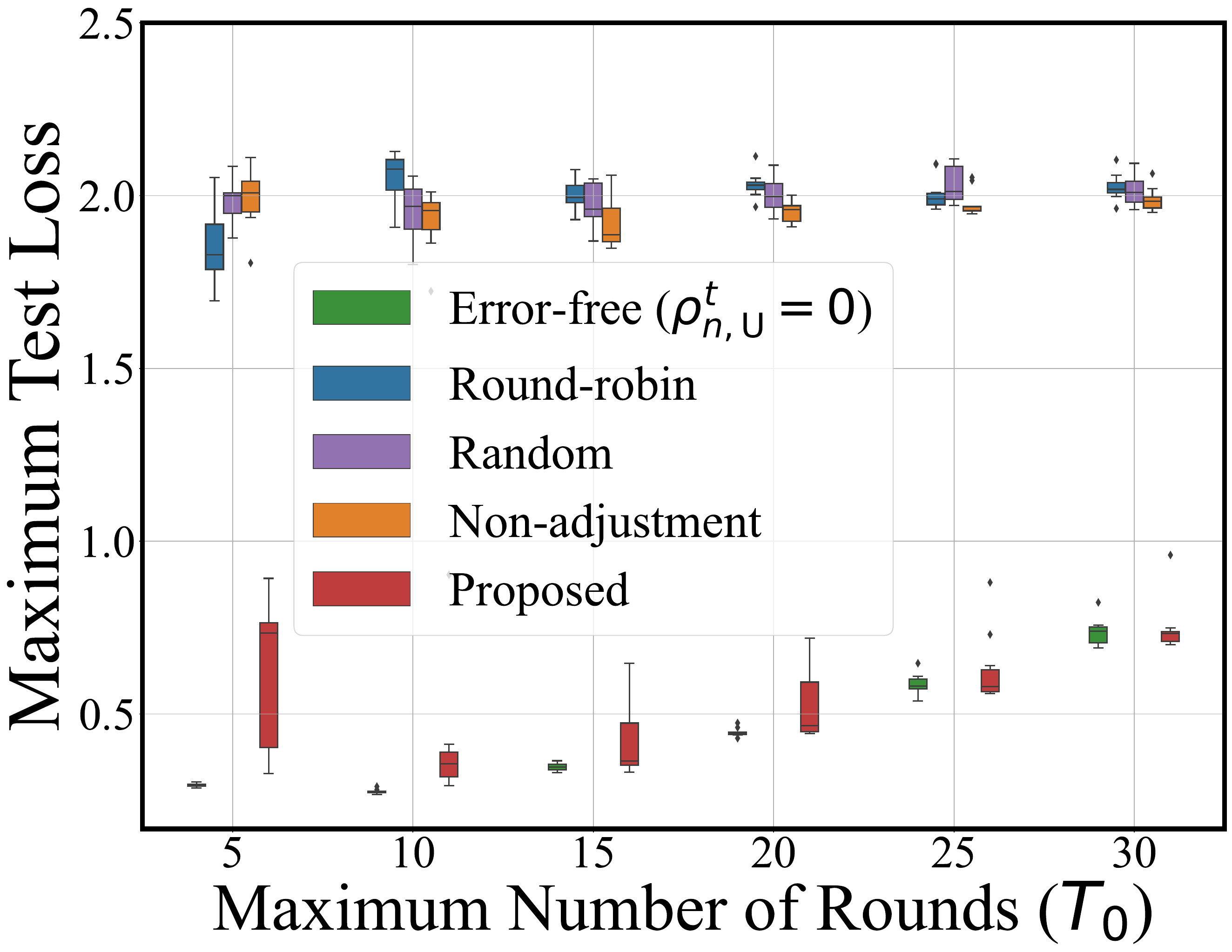}} 
\\
\subfigure[{Fairness vs. $T_0$ (MLR, MNIST)}]{
\label{allocation_jain_mnist_mlr}
\includegraphics[width=0.23\textwidth]{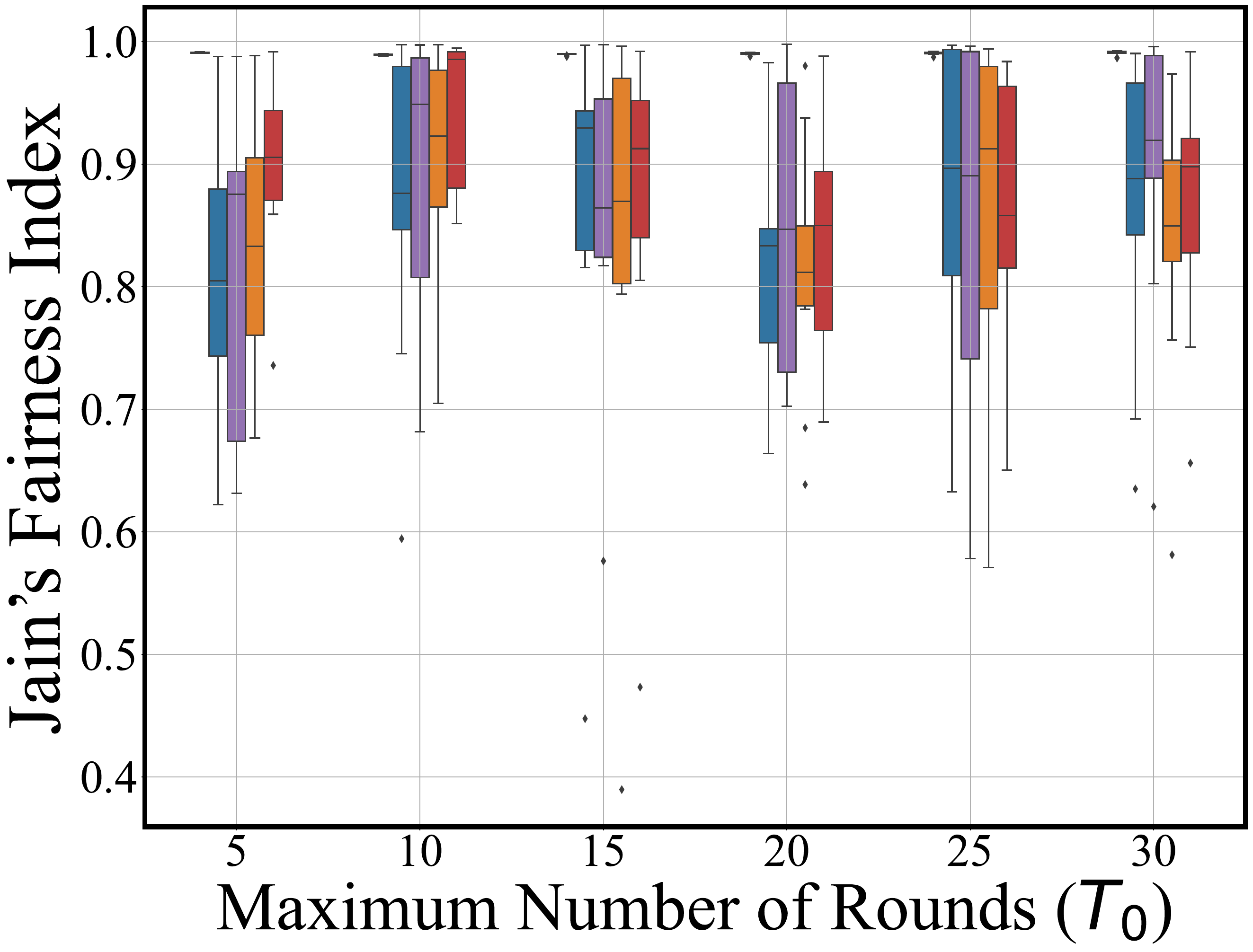}}
\hspace{-1mm}
\subfigure[{Maximum test loss vs. $T_0$ (MLR, MNIST)}]
{
\label{allocation_worst_loss_mnist_mlr}
\includegraphics[width=0.23\textwidth]{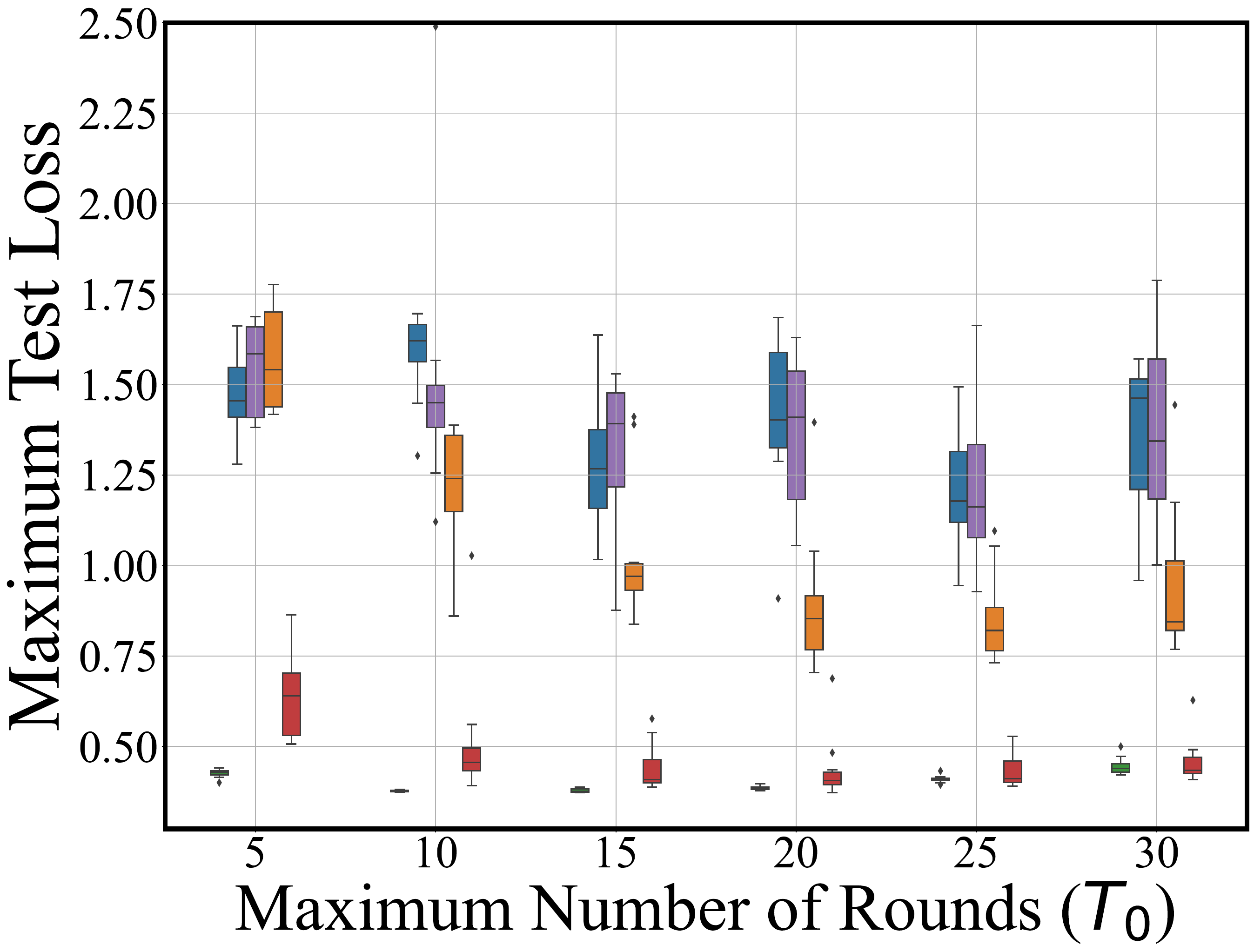}} 
\\
\subfigure[Fairness vs. $T_0$ (CNN,CIFAR10)]{
\label{allocation_jain_Cifar10_cnn}
\includegraphics[width=0.23\textwidth]{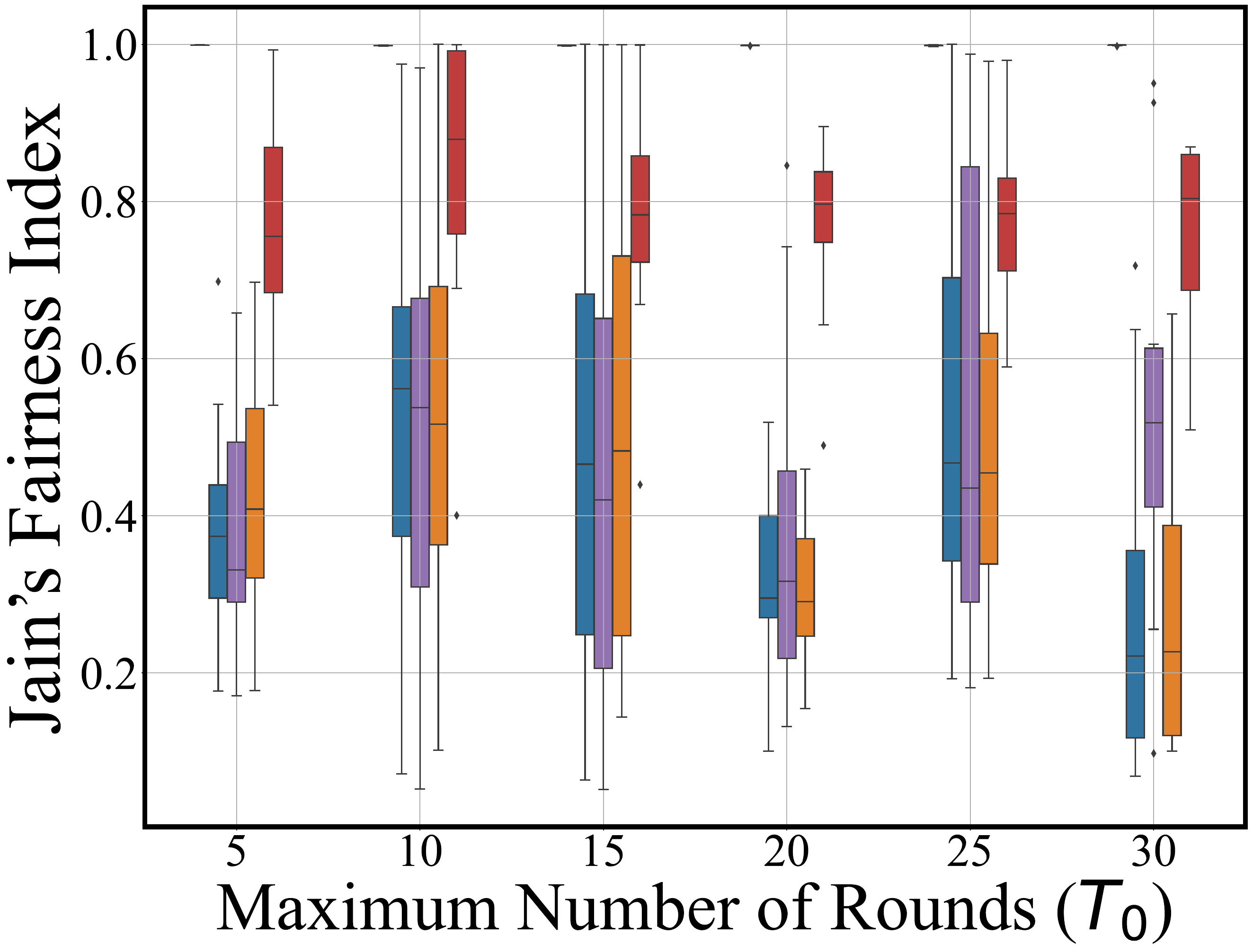}}
\hspace{-1mm}
\subfigure[Maximum test loss vs. $T_0$ (CNN,CIFAR10)]{
\label{allocation_worst_loss_Cifar10_cnn}
\includegraphics[width=0.23\textwidth]{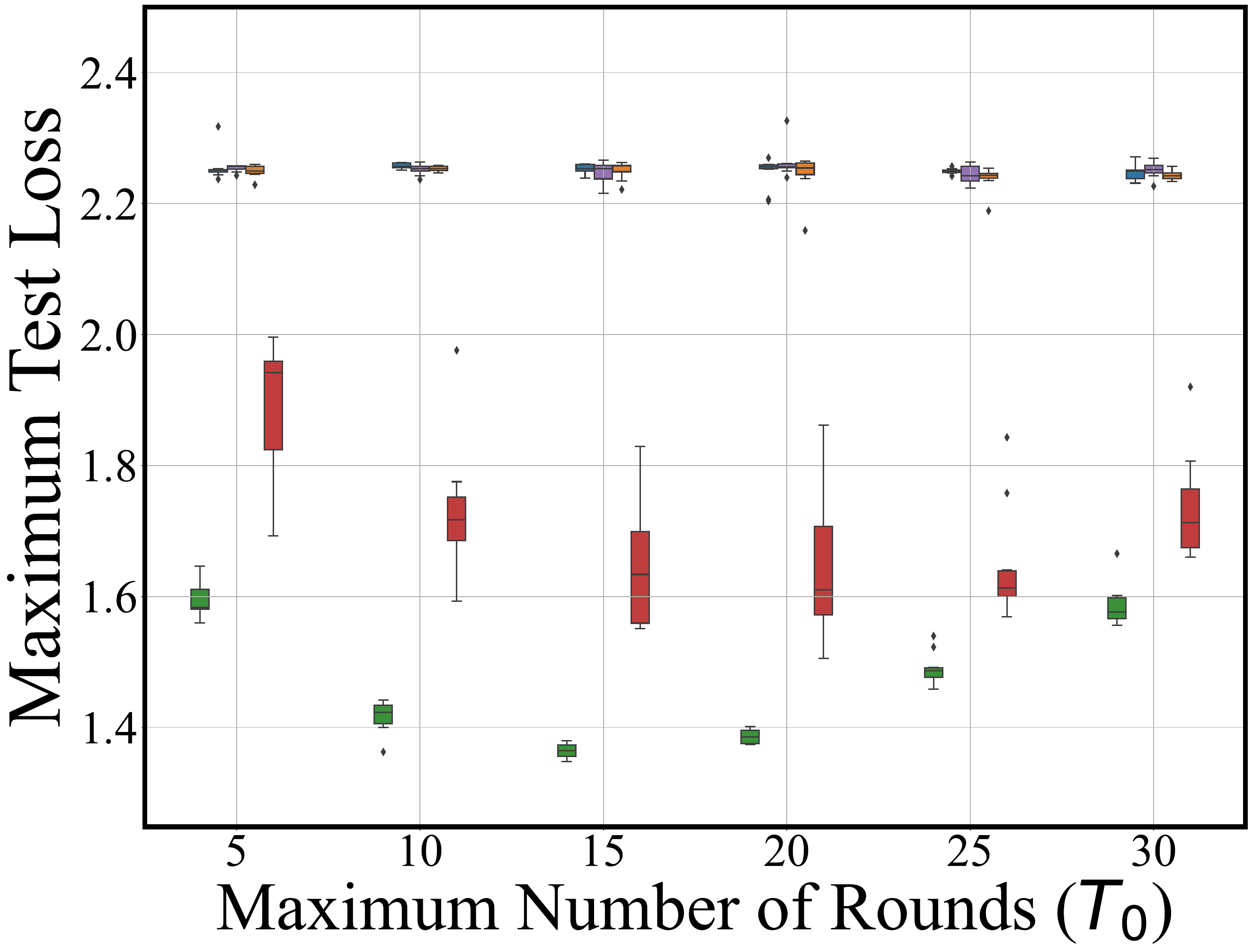}}
\\
\subfigure[Fairness vs. $T_0$ (CNN, FMNIST)]{
\label{allocation_jain_fmnist_cnn}
\includegraphics[width=0.23\textwidth]{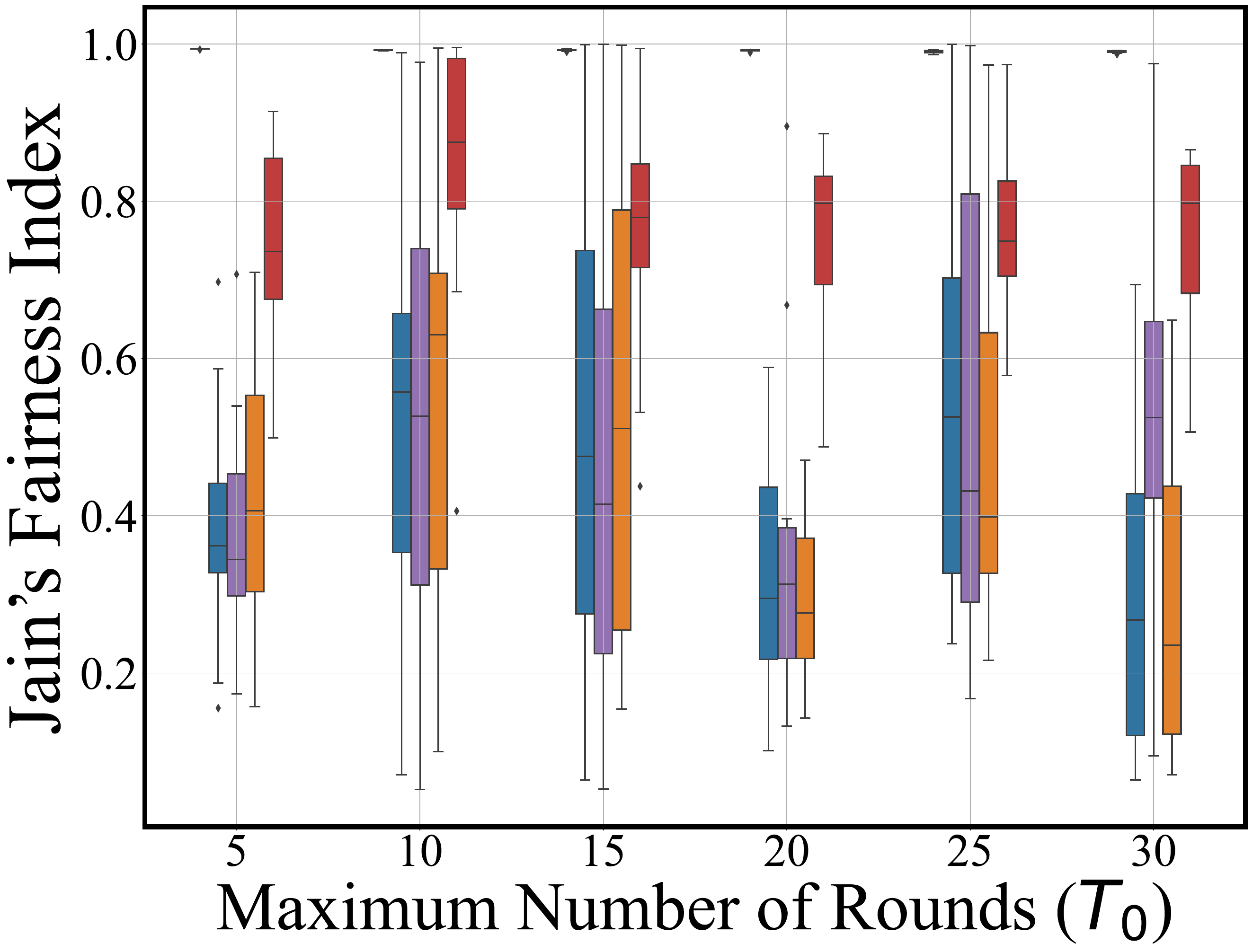}}
\hspace{-1mm}
\subfigure[{Maximum test loss} vs. $T_0$ (CNN,FMNIST)]{
\label{allocation_worst_loss_fmnist_cnn}
\includegraphics[width=0.23\textwidth]{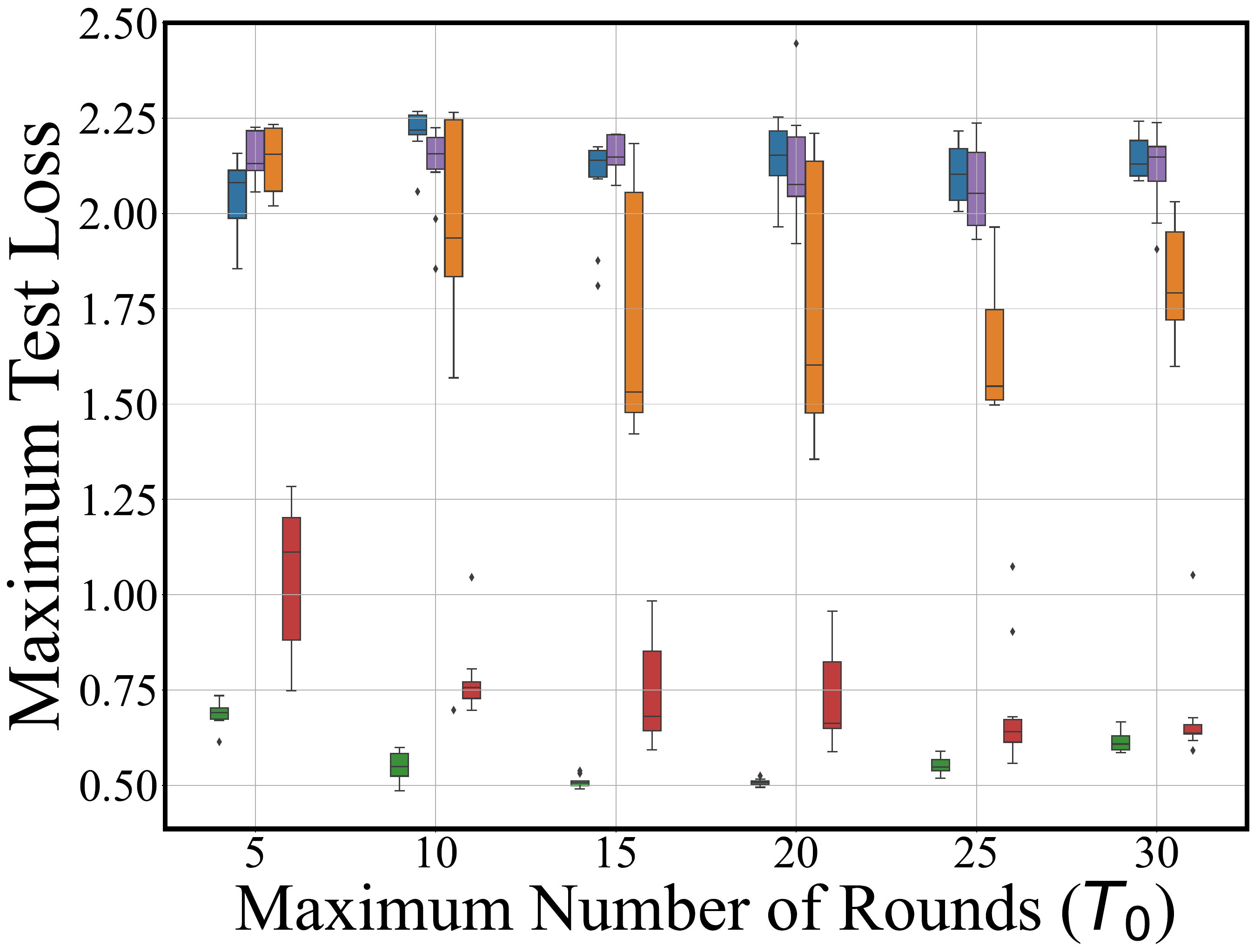}}
\caption{Comparison of fairness and {the maximum test loss of all participating clients} between the benchmarks. $\epsilon_{\mathrm{Q}}=1$. $\delta_{\mathrm{Q}}=0.001$ for DNN and MLR, and $\delta_{\mathrm{Q}}=0.005$ for CNN.}
\label{allocation_fair_loss}
\end{figure}

Figs.~\ref{allocation_acc} and \ref{allocation_fair_loss} plot the accuracy, fairness (i.e., Jain's fairness index), and maximum test loss of all participating clients among the clients of privacy-preserving WPFL. {In addition to imperfect channels, we consider the situation of the proposed allocation and configuration scheme running in an error-free channel (i.e., $\rho_{n,\mathrm{L}}^{t}=0$), which provides the best possible results of the proposed allocation and configuration scheme.}

As shown in Figs.~\ref{allocation_acc}, \ref{allocation_worst_loss_mnist_dnn}, \ref{allocation_worst_loss_mnist_mlr}, \ref{allocation_worst_loss_Cifar10_cnn}, and \ref{allocation_worst_loss_fmnist_cnn}, the proposed scheduling policy outperforms the other benchmarks, in accuracy and maximum test loss of all participating clients.
Particularly, it is better than the second-best (including round-robin, random selection, and non-adjustment) by $87.08\%$ in accuracy and $16.21\%$ in the maximum test loss of participating clients under the CNN model, and $52.26\%$ in accuracy and $15.99\%$ in the maximum test loss under the DNN and MLR models, respectively.
{The proposed scheduling and configuration policy differs marginally from the best possible results achieved under an error-free channel.} This is because our device selection and adaptive coefficient adjustment take into account time-varying errors from DP, quantization, and transmission, and minimize the maximum of the convergence upper bound among the clients, reducing the impact from DP and noisy channels. The accuracy first increases and then decreases as $T_0$ grows, and the maximum test loss of all participating clients first decreases and then increases for WPFL under the proposed policy {running in either noisy or error-free channels}, due to the accumulated impact of the DP noise.

Figs.~\ref{allocation_jain_mnist_dnn}, \ref{allocation_jain_mnist_mlr}, \ref{allocation_jain_Cifar10_cnn}, and \ref{allocation_jain_fmnist_cnn} examine the fairness of WPFL through Jain's fairness index $\mathcal{J}=\frac{(\sum_{n=1}^{N}x_n)^2}{n\sum_{n=1}^{N}x_n^2}$, with $x_n$ being the training loss of client $n$. 
Under the CNN model, our approach is better than the second-best (i.e., Round-Robin) by $38.37\%$ in fairness (Jain's index).
Compared to the benchmarks (i.e., Round-Robin, Random, and Non-Adjustment), the proposed configuration and scheduling policy is substantially fairer, thanks to its consideration of the min-max fairness. 
% The proposed scheme is only worse than that with error-free channels. This is because our policy optimizes client selection, channel allocation, power control, and coefficient adjustment, aiming at accelerating convergence and improving performance in a fair fashion.

\subsubsection{Comparison with state-of-the-art PFL}

\begin{figure}[t]
\centering  
\subfigure[{Accuracy vs. $t$ (DNN, MNIST)}]
{
\label{PFL_acc_mnist_dnn_convergence}
\includegraphics[width=0.23\textwidth]{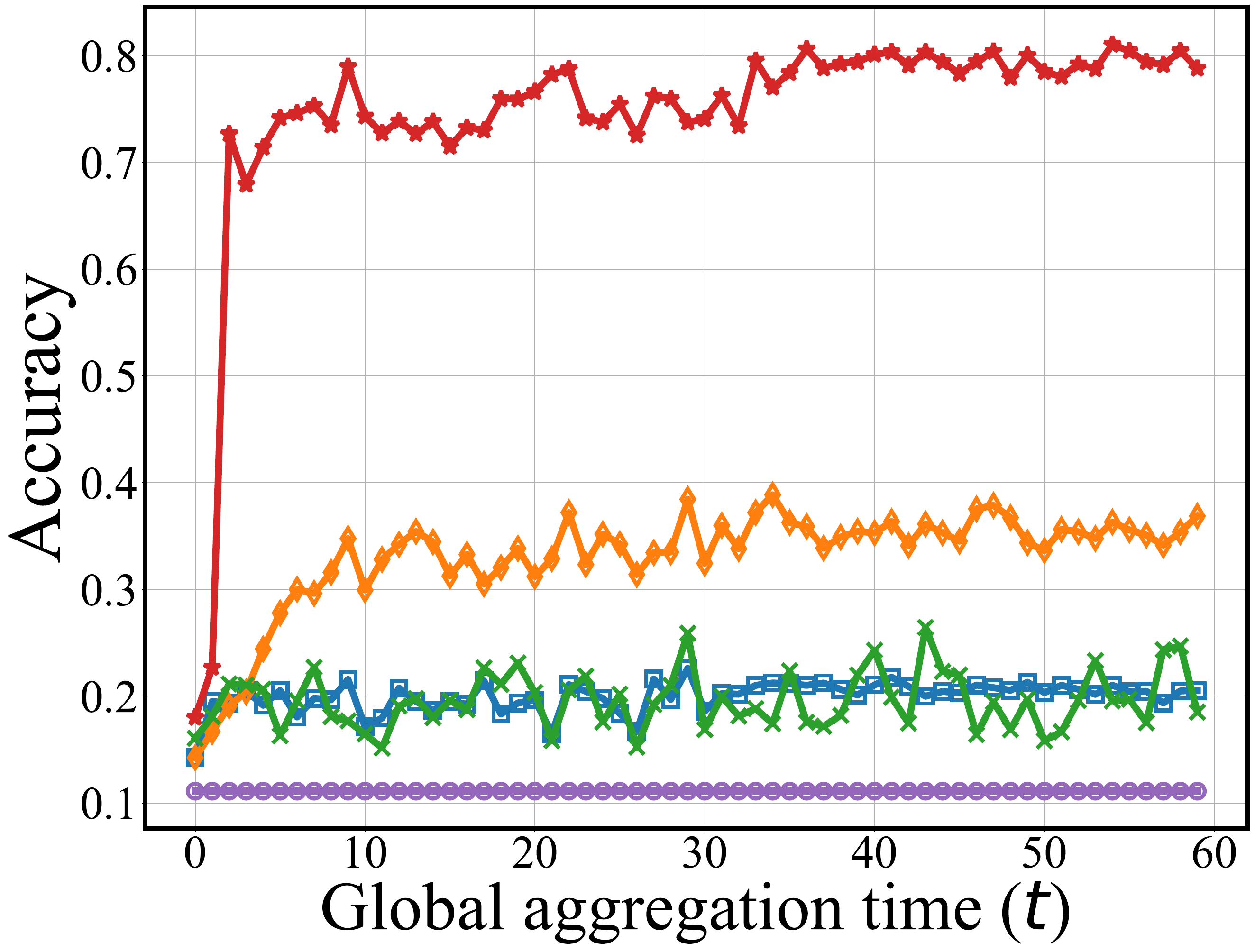}}
\hspace{-1mm}
\subfigure[Accuracy vs. $t$ (MLR, MNIST)]
{
\label{PFL_acc_mnist_mlr_convergence}
\includegraphics[width=0.23\textwidth]{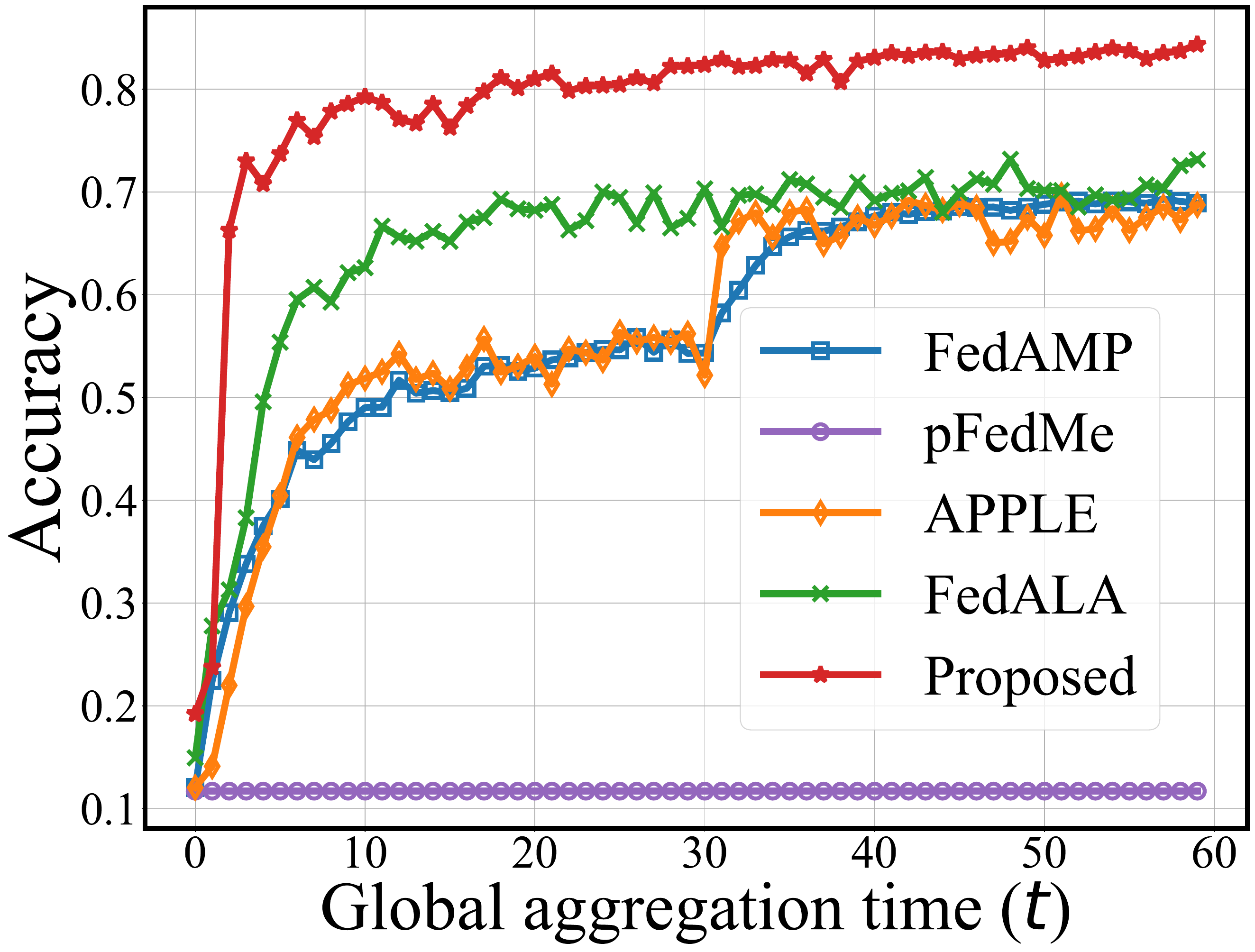}}
\\
\subfigure[Accuracy vs. $t$ (CNN,CIFAR10)]
{
\label{PFL_acc_Cifar10_cnn_convergence}
\includegraphics[width=0.23\textwidth]{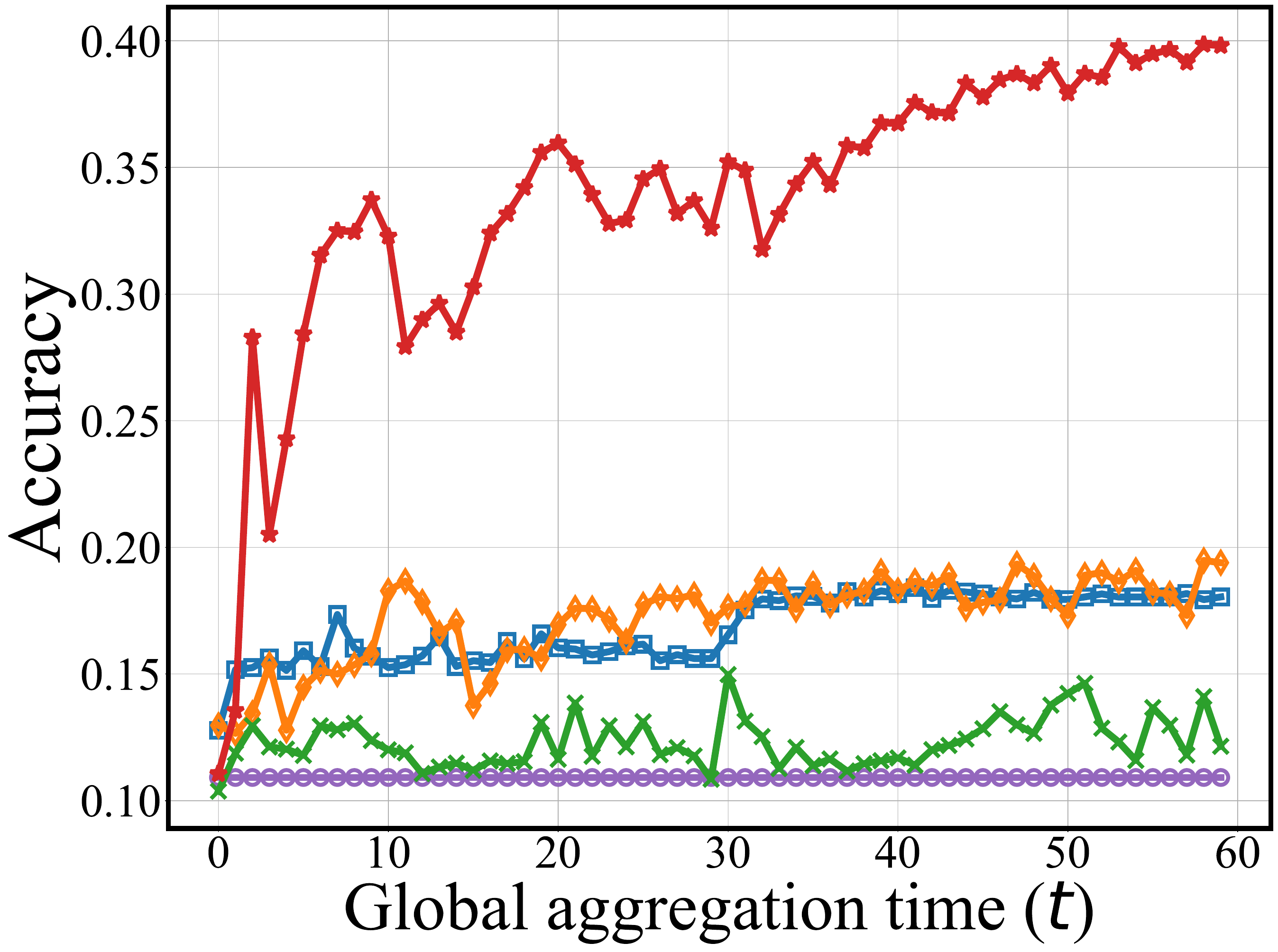}} 
\hspace{-1mm}
\subfigure[Accuracy vs. $t$ (CNN, FMNIST)]
{
\label{PFL_acc_fmnist_cnn_convergence}
\includegraphics[width=0.23\textwidth]{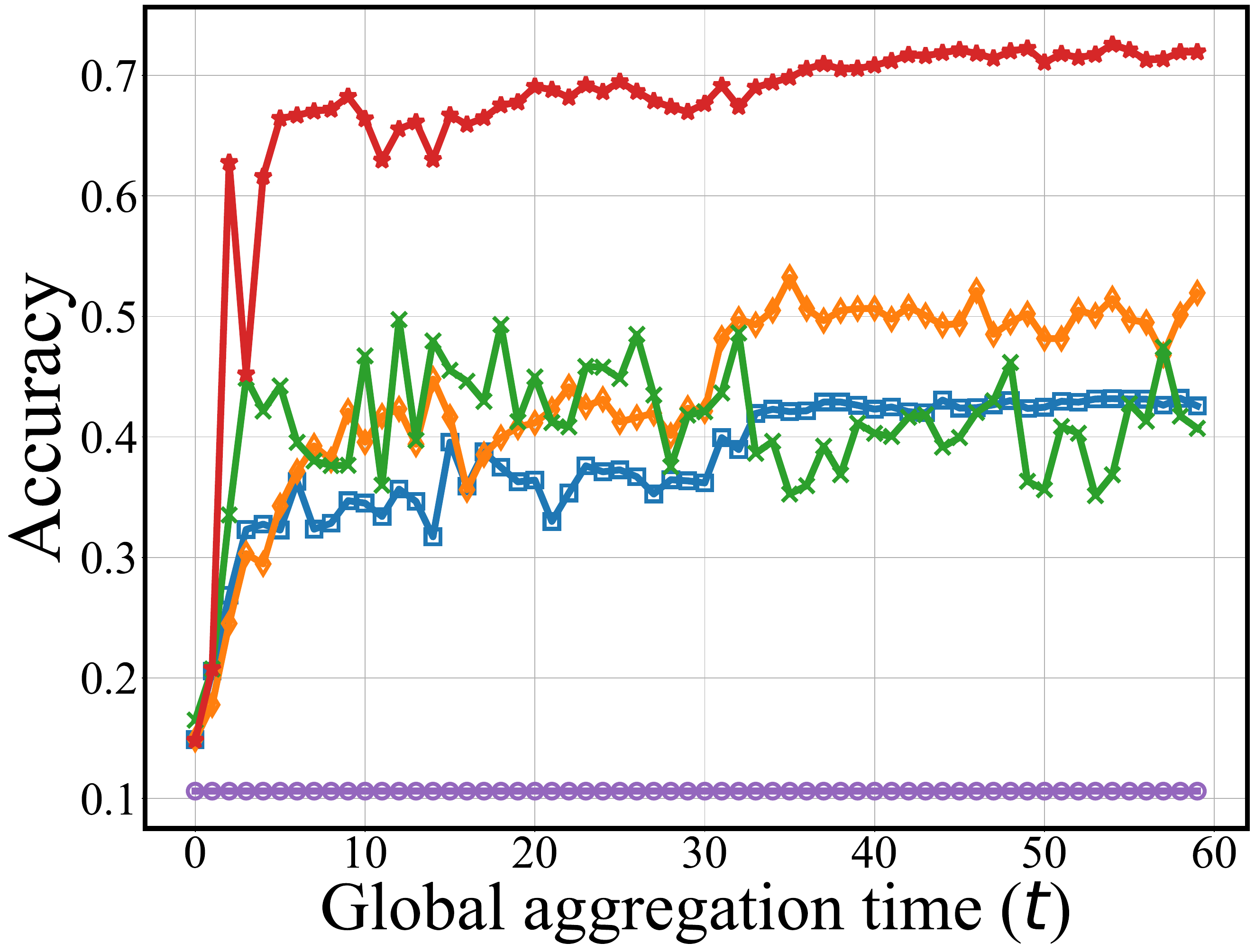}} 
\caption{{Comparison of accuracy vs. aggregation time $t$ between the benchmarks. $\epsilon_{\mathrm{Q}}=1$. $\delta_{\mathrm{Q}}=0.001$ for DNN and MLR, and $\delta_{\mathrm{Q}}=0.005$ for CNN.}}
\label{PFL_acc_convergence}
\end{figure}

\begin{figure}[t]
\centering  
\subfigure[{Accuracy vs. $T_0$ (DNN, MNIST)}]
{
\label{PFL_acc_mnist_dnn}
\includegraphics[width=0.23\textwidth]{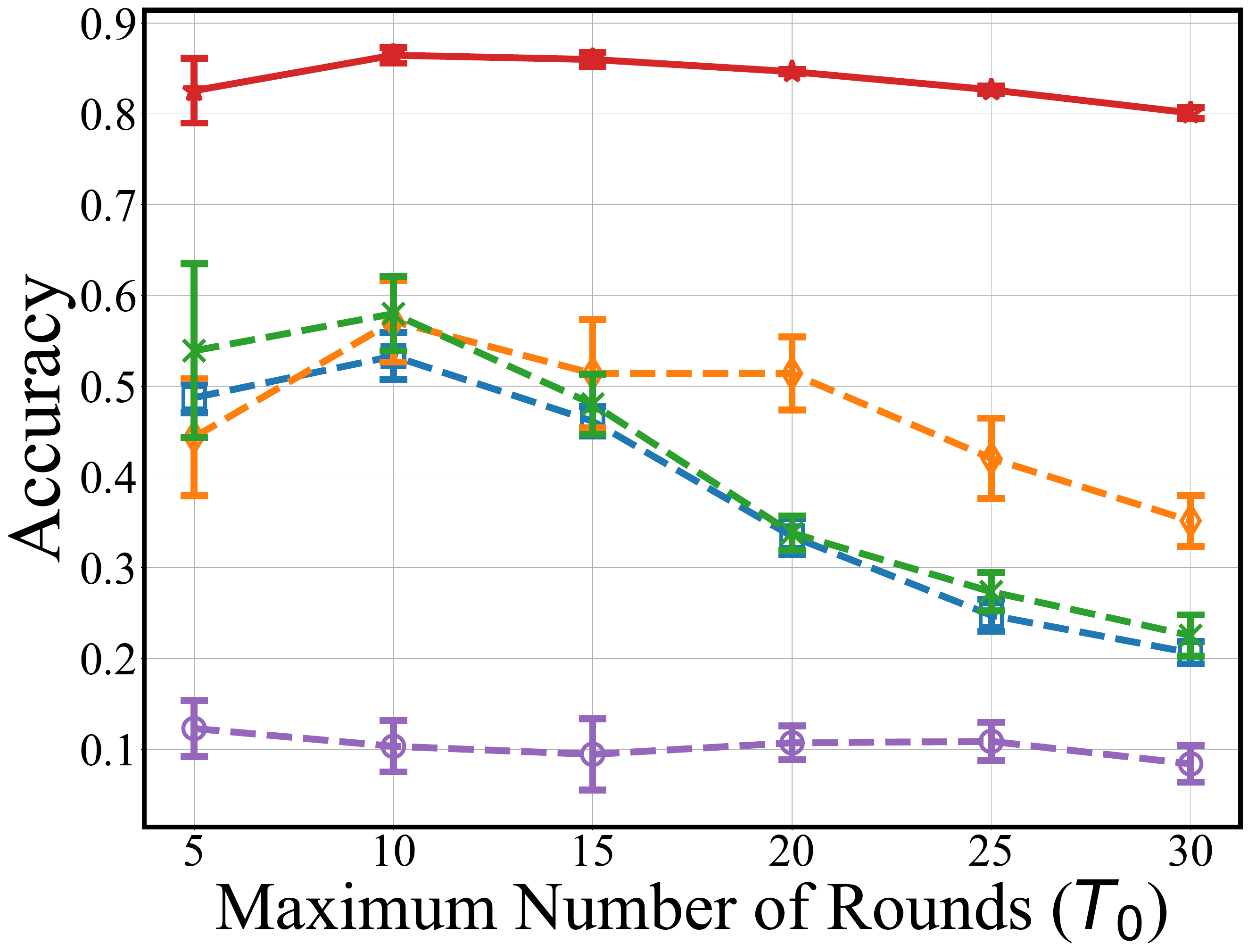}}
\hspace{-1mm}
\subfigure[Accuracy vs. $T_0$ (MLR, MNIST)]
{
\label{PFL_acc_mnist_mlr}
\includegraphics[width=0.23\textwidth]{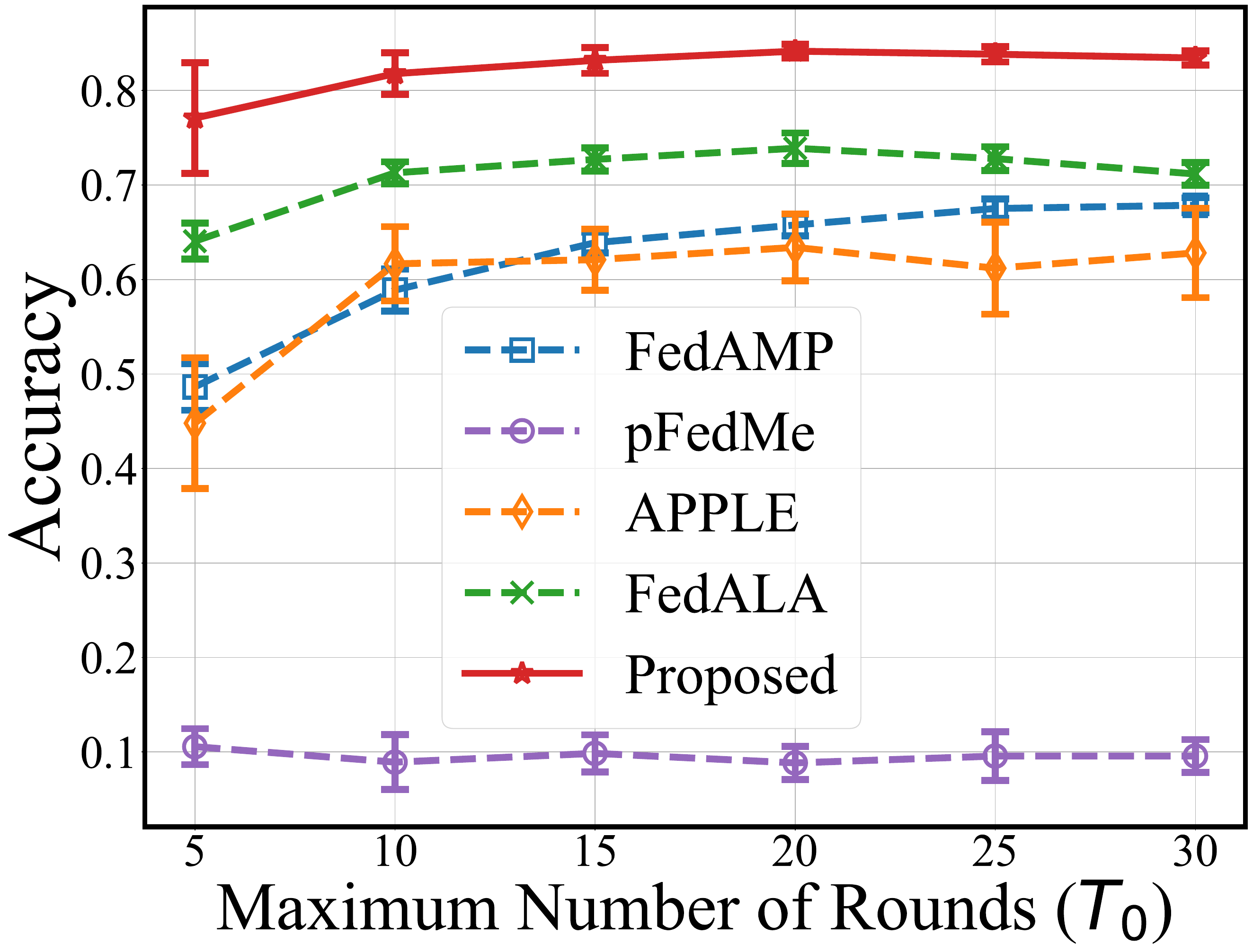}}
\\
\subfigure[Accuracy vs. $T_0$ (CNN,CIFAR10)]
{
\label{PFL_acc_Cifar10_cnn}
\includegraphics[width=0.23\textwidth]{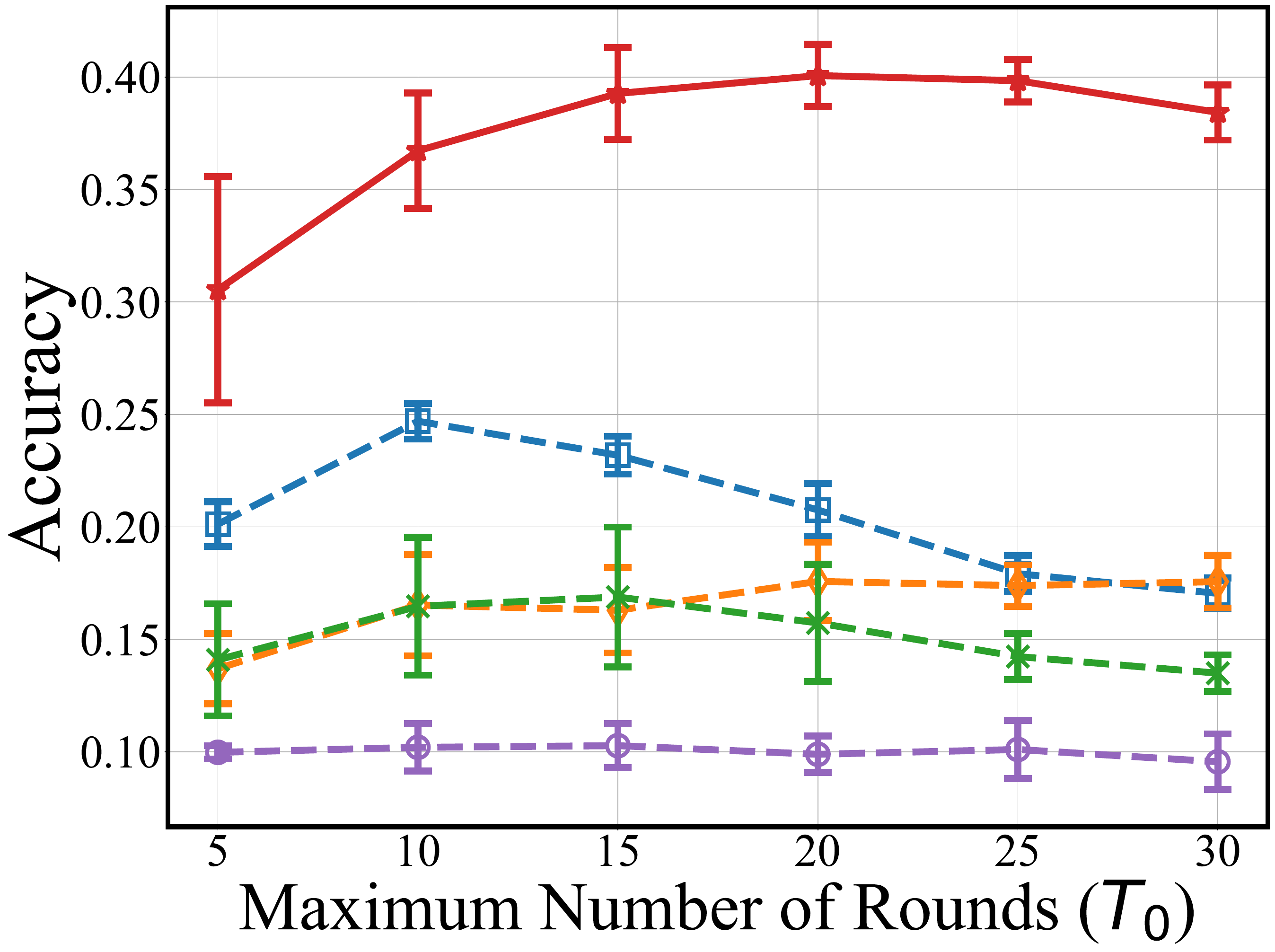}} 
\hspace{-1mm}
\subfigure[Accuracy vs. $T_0$ (CNN, FMNIST)]
{
\label{PFL_acc_fmnist_cnn}
\includegraphics[width=0.23\textwidth]{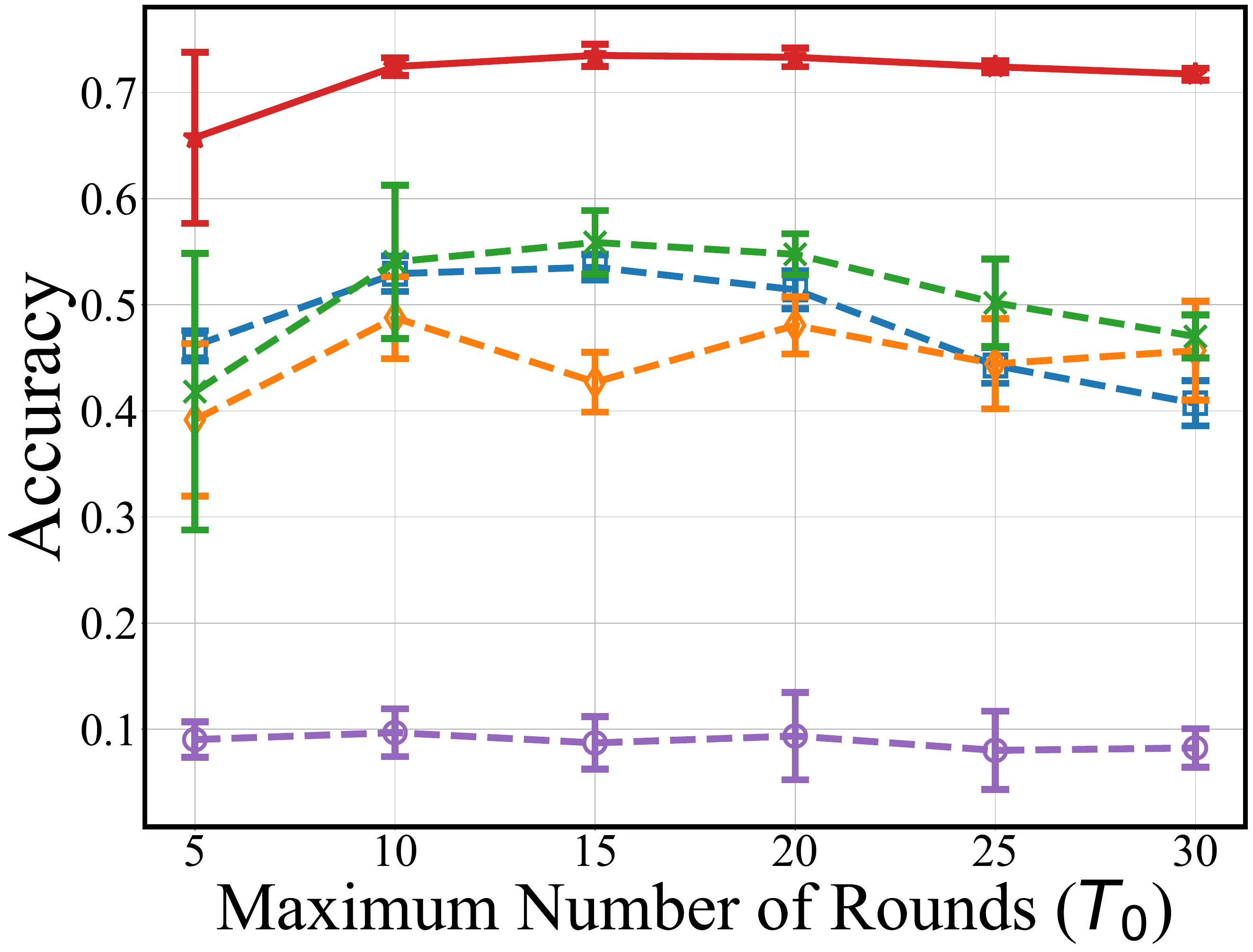}} 
\caption{{Comparison of testing accuracy between the benchmarks. $\epsilon_{\mathrm{Q}}=1$. $\delta_{\mathrm{Q}}=0.001$ for DNN and MLR, and $\delta_{\mathrm{Q}}=0.005$ for CNN.}}
\label{PFL_acc}
\end{figure}

\begin{figure}[t]
\centering  
\subfigure[Fairness vs. $T_0$ (DNN, MNIST)]
{
\label{PFL_fairness_mnist_dnn}
\includegraphics[width=0.23\textwidth]{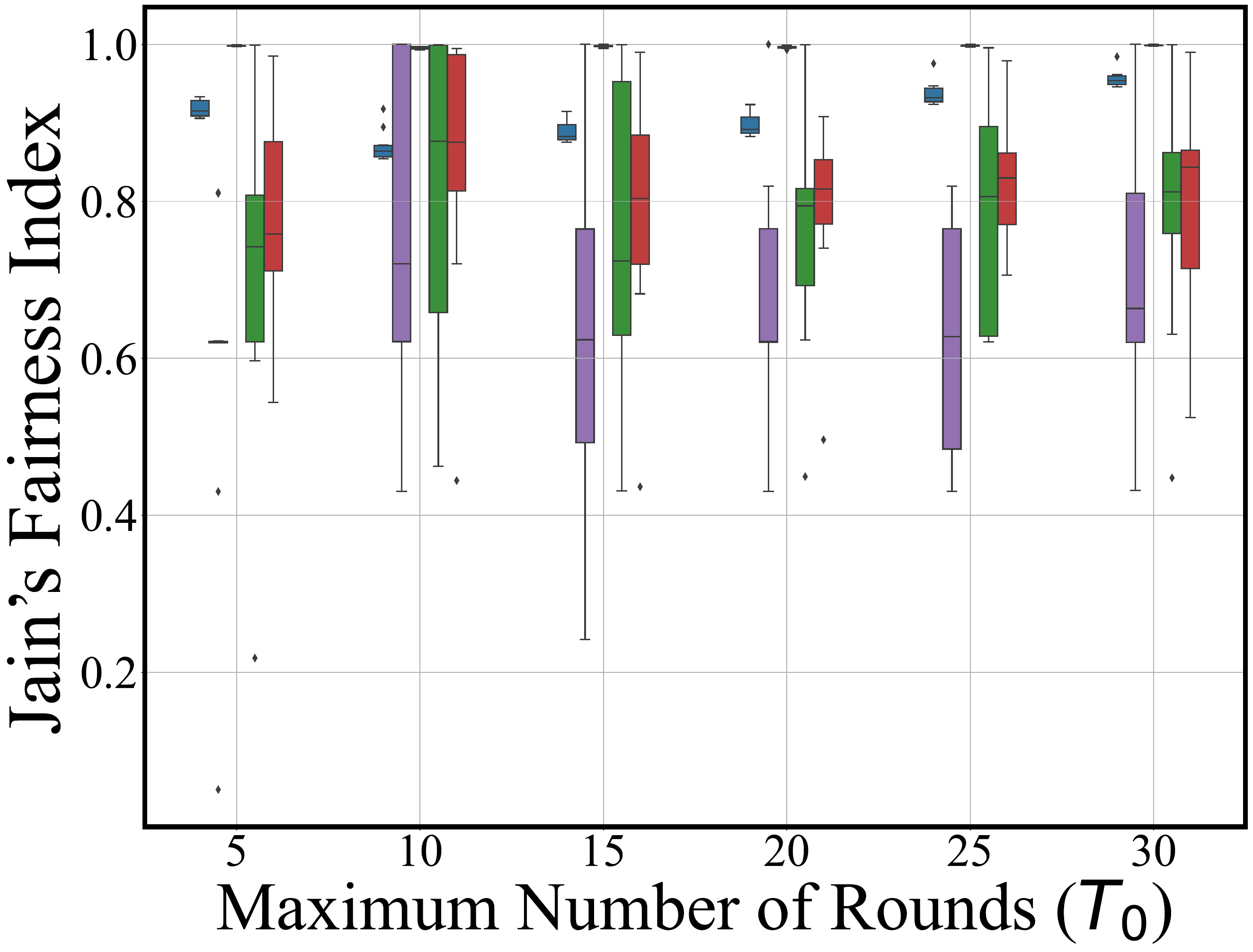}} 
\hspace{-1mm}
\subfigure[{Maximum test loss vs. $T_0$ (DNN, MNIST)}]
{
\label{PFL_worst_loss_mnist_dnn}
\includegraphics[width=0.23\textwidth]{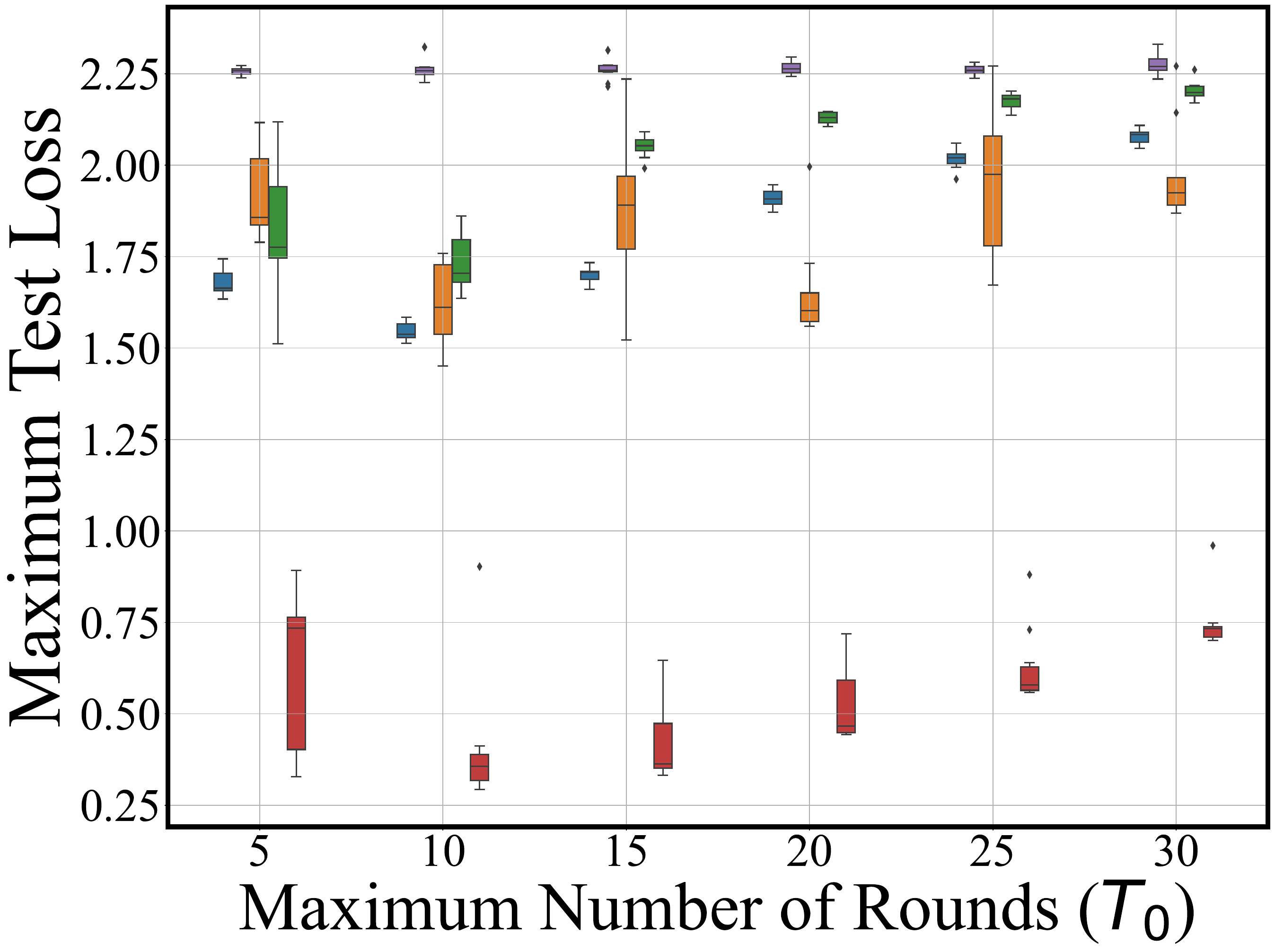}} 
\\
\subfigure[{Fairness vs. $T_0$ (MLR, MNIST)}]{
\label{PFL_fairness_mnist_mlr}
\includegraphics[width=0.23\textwidth]{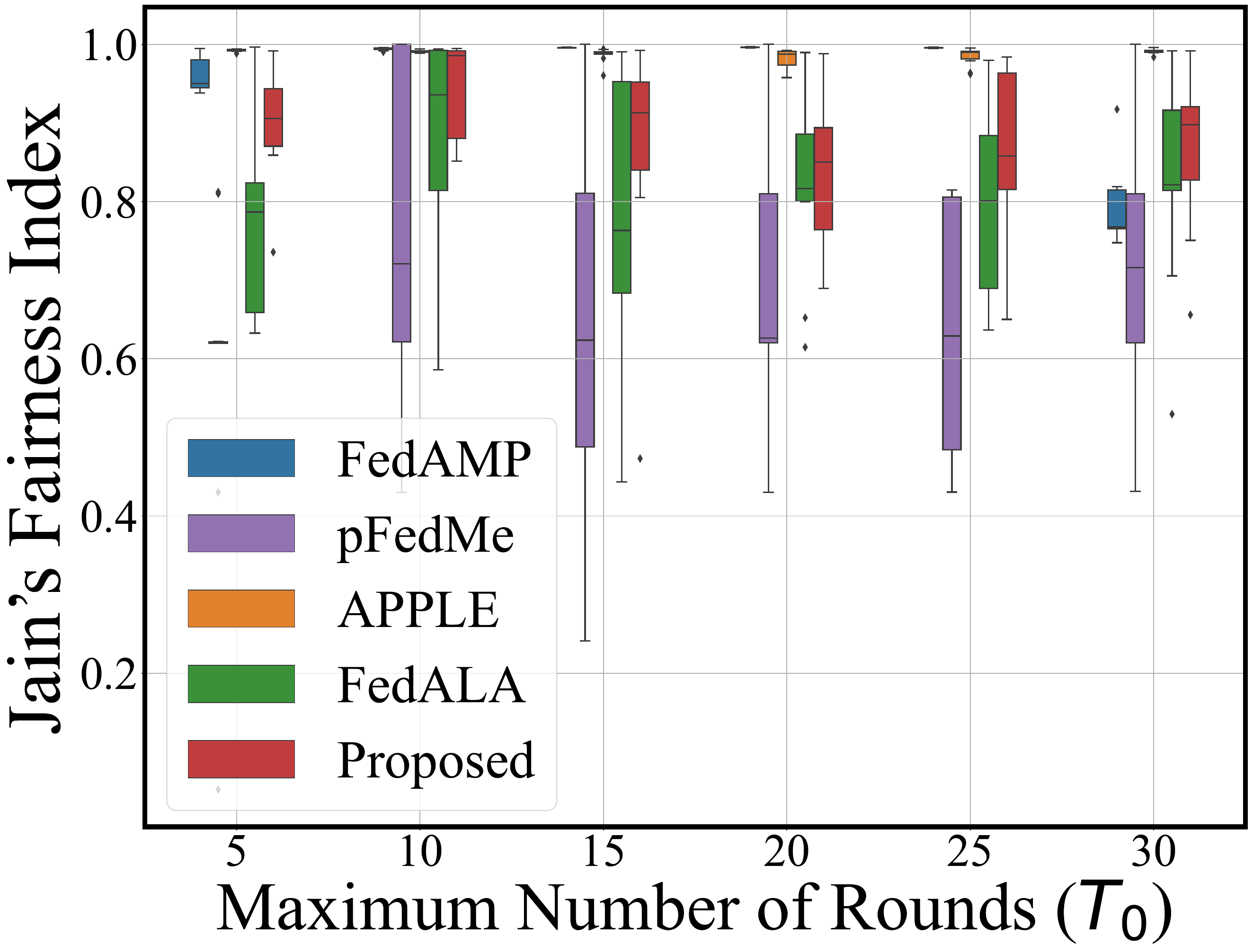}}
\hspace{-1mm}
\subfigure[{Maximum test loss vs. $T_0$ (MLR, MNIST)}]
{
\label{PFL_worst_loss_mnist_mlr}
\includegraphics[width=0.23\textwidth]{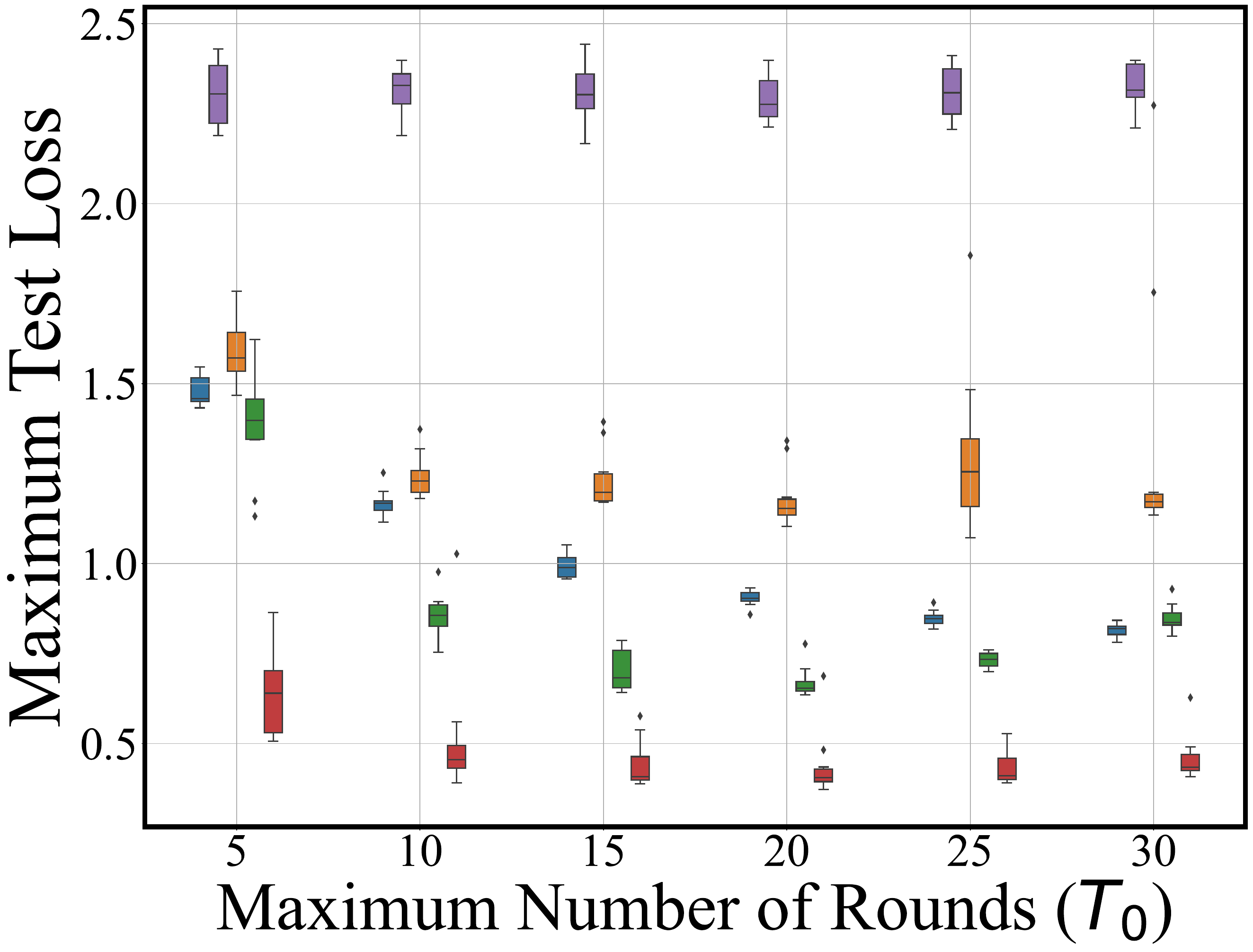}} 
\\
\subfigure[Fairness vs. $T_0$ (CNN,CIFAR10)]{
\label{PFL_fairness_Cifar10_cnn}
\includegraphics[width=0.23\textwidth]{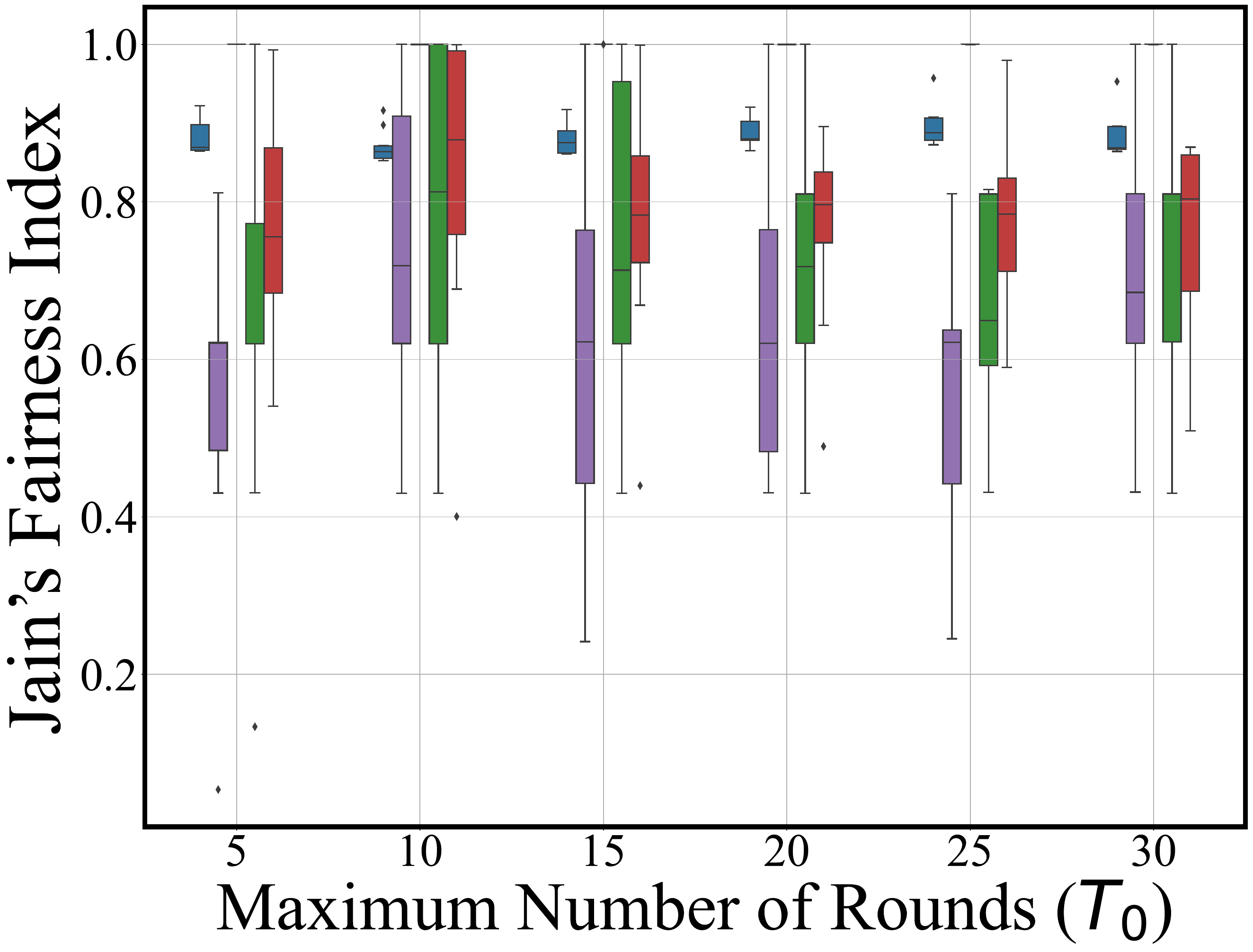}}
\hspace{-1mm}
\subfigure[Maximum test loss vs. $T_0$ (CNN,CIFAR10)]{
\label{PFL_worst_loss_Cifar10_cnn}
\includegraphics[width=0.23\textwidth]{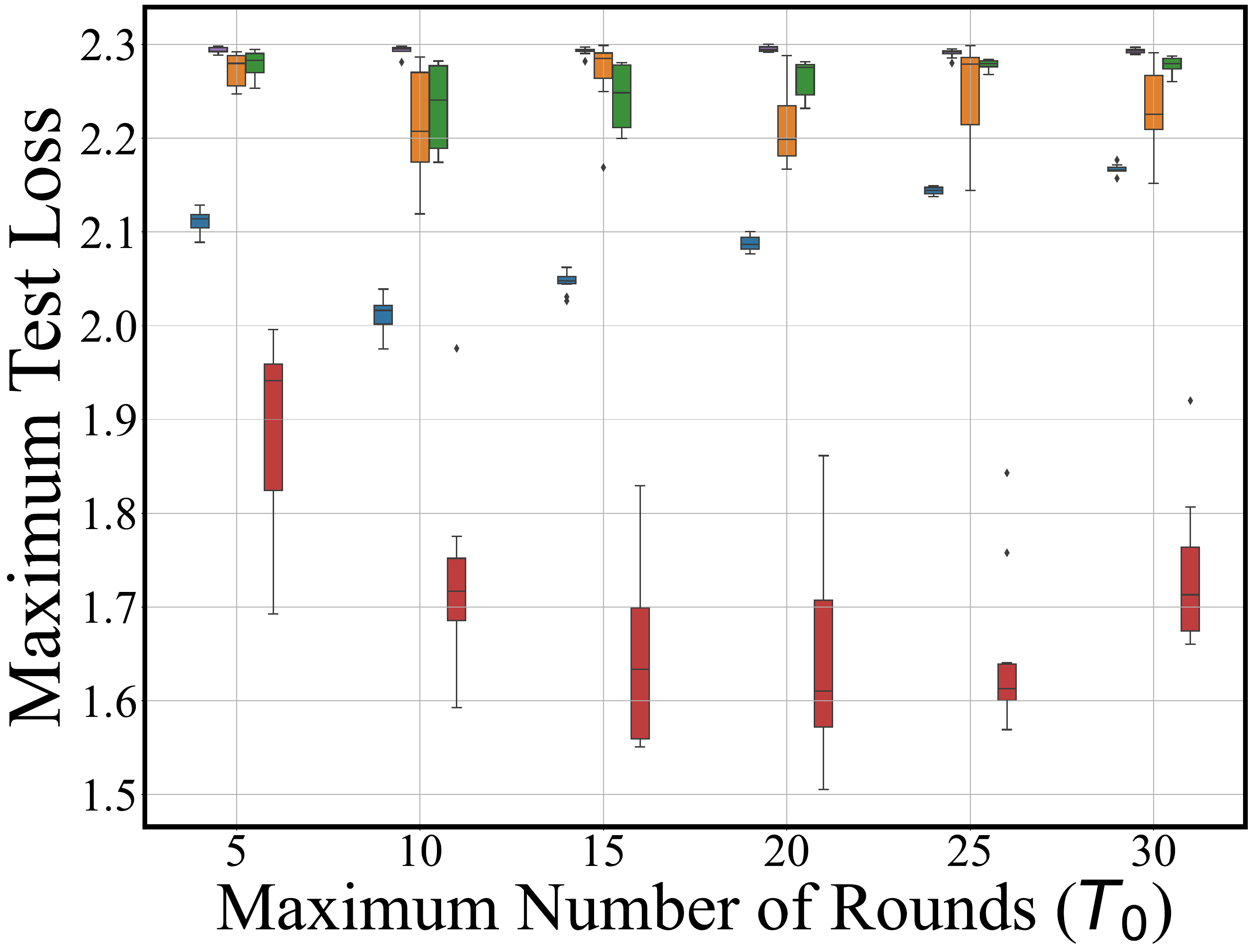}}
\\
\subfigure[Fairness vs. $T_0$ (CNN, FMNIST)]{
\label{PFL_fairness_fmnist_cnn}
\includegraphics[width=0.23\textwidth]{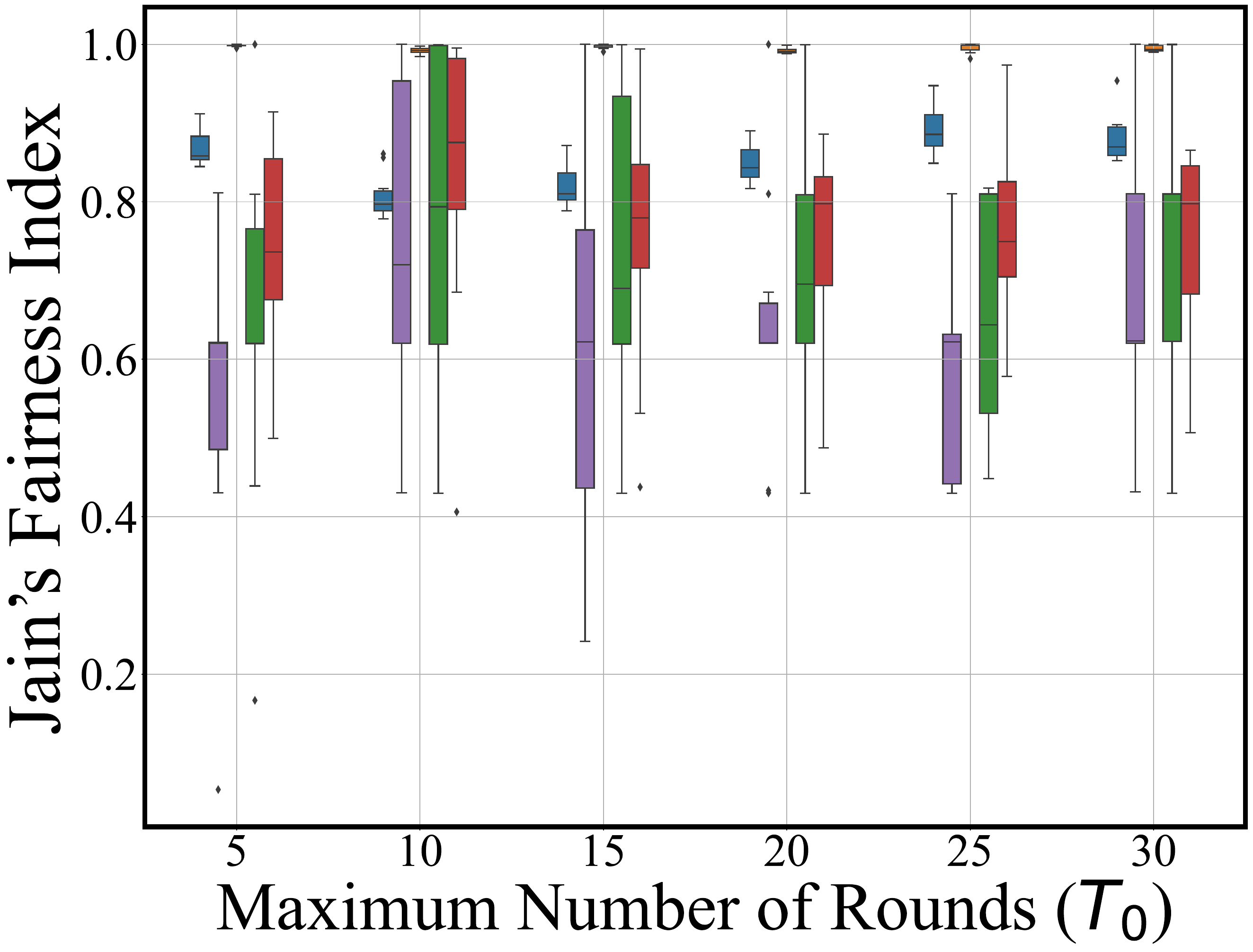}}
\hspace{-1mm}
\subfigure[{Maximum test loss} vs. $T_0$ (CNN,FMNIST)]{
\label{PFL_worst_loss_fmnist_cnn}
\includegraphics[width=0.23\textwidth]{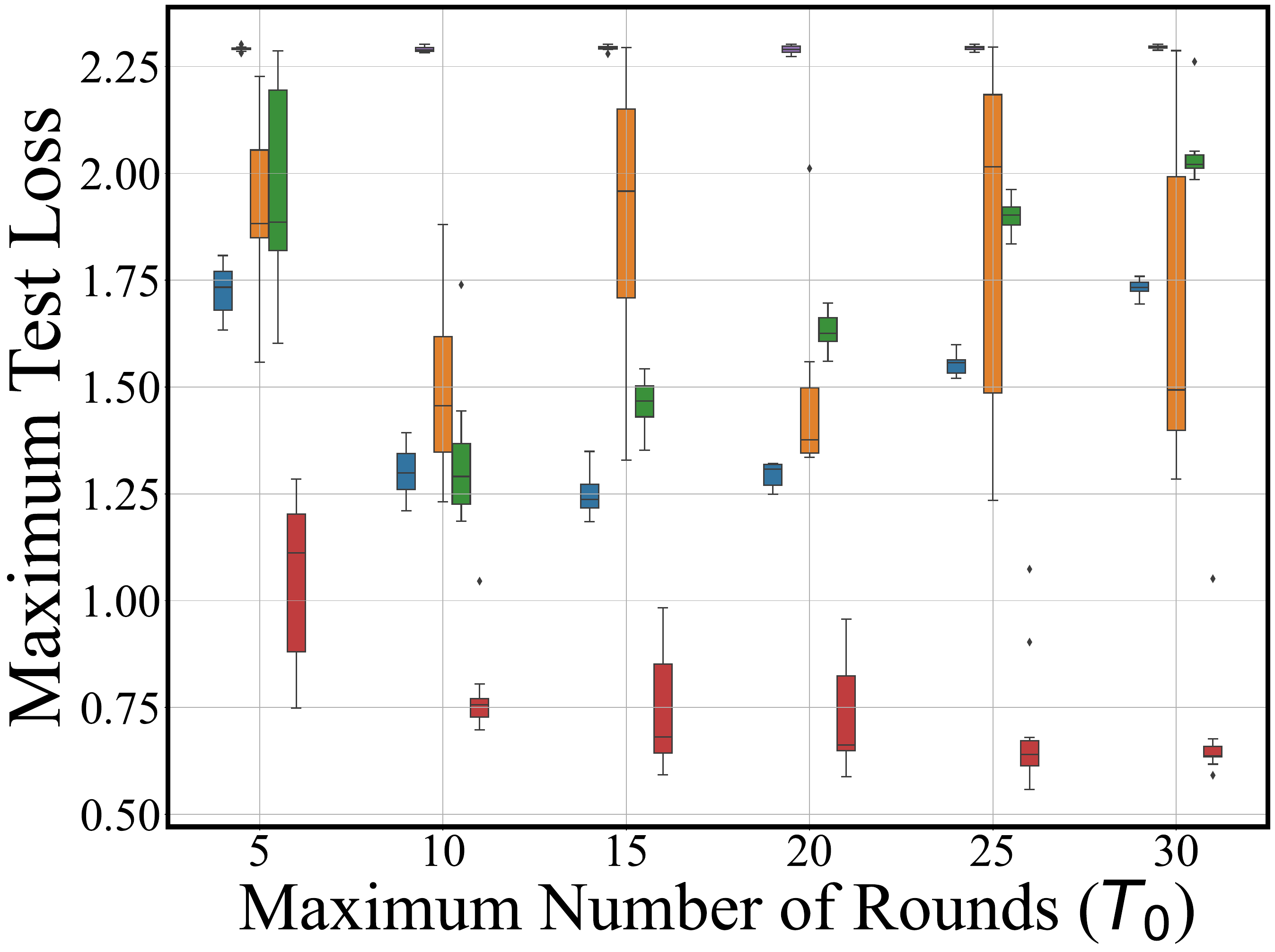}}
\caption{Comparison of fairness and {maximum test loss of all participating clients} between the benchmarks. $\epsilon_{\mathrm{Q}}=1$. $\delta_{\mathrm{Q}}=0.001$ for DNN and MLR, and $\delta_{\mathrm{Q}}=0.005$ for CNN.}
\label{PFL_fairness_worst_loss}
\end{figure}

Fig.~\ref{PFL_acc_convergence} plots the training accuracy against the increasing number $t$ of global aggregations under $T_0=30$, under the proposed WPFL and the benchmarks (i.e., FedAMP, pFedMe, APPLE, and FedALA). The proposed WPFL achieves at least $11.30\%$ better accuracy than the benchmarks, due to the self-adaptive configuration and scheduling policy proposed under imperfect and noisy channels.
Figs.~\ref{PFL_acc} and \ref{PFL_fairness_worst_loss} compare the accuracy, fairness (i.e., Jain's fairness index), and the maximum test loss of all participating clients between the proposed privacy-preserving WFPL framework and the PFL benchmarks (i.e., pFedMe, APPLE, FedAMP, and FedALA), under different $T_0$ values, datasets, and models (i.e., DNN and MLR on the MNIST dataset, and CNN on the FMNIST and CIFAR10 datasets). All schemes are protected through our DP mechanism with the consistent default $\epsilon_\mathrm{Q}$ and $\delta_{\mathrm{Q}}$ values. {The PFL benchmarks do not utilize the proposed configuration policy}, due to their different training procedures and loss functions. 

% As shown in Figs.~\ref{PFL_acc}, \ref{PFL_worst_loss_mnist_dnn}, \ref{PFL_worst_loss_mnist_mlr}, \ref{PFL_worst_loss_Cifar10_cnn}, and \ref{PFL_worst_loss_fmnist_cnn}, our proposed WPFL outperforms the benchmarks in accuracy and maximum test loss of all participating clients.
% Particularly, it is better than the second-best (i.e., FedAMP) by $10.43\%$ in accuracy and $8.16\%$ in maximum test loss.
% Generally, the accuracy and maximum test loss of all participating clients improve first, then degrade as $T_0$ grows, especially for the DNN models on the MNIST dataset and the CNN model on the FMNIST and CIFAR10 datasets. The performance of FedAMP, APPLE, and FedALA deteriorates faster than that of the proposed WPFL. This is because the PFL benchmarks do not adaptively adjust their training process and settings in response to the time-varying errors, resulting in a greater cumulative impact of DP noise, quantization, and transmission errors.

As shown in Figs.~\ref{PFL_acc}, \ref{PFL_worst_loss_mnist_dnn}, \ref{PFL_worst_loss_mnist_mlr}, \ref{PFL_worst_loss_Cifar10_cnn}, and \ref{PFL_worst_loss_fmnist_cnn}, our proposed WPFL outperforms the benchmarks in accuracy and maximum test loss of all participating clients.
Particularly, it is better than the second-best (i.e., FedAMP) by $10.43\%$ in accuracy and $8.16\%$ in maximum test loss.
Generally, the accuracy and maximum test loss of all participating clients improve first, then degrade as $T_0$ grows, especially for the DNN models on the MNIST dataset and the CNN model on the FMNIST and CIFAR10 datasets. 
{Although the trade-off between personalization and generalization is considered for all benchmarks, the performance of pFedMe is always worse than the other benchmarks. This is because the PL model is uploaded directly for global model aggregation in pFedMe, and the trade-off weighting coefficient is fixed across the whole training. }

{To capture the similarity among the PL models, in each round of FedAMP and APPLE, the PL cloud models for clients are obtained by aggregating the PL local models based on weights updated using attention-inducing function and stochastic gradient descent (SGD), respectively. 
Both FedAMP and APPLE require multiple models to upload and download, which increases resource contention, leading to increased transmission errors.

By downloading an FL global model instead of a group of models, FedALA reduces communication overhead. In each round, the PL model of each client is initialized by aggregating the FL global model and its local model with element-wise learnable weights to capture the desired information. However, updates based on the previous round’s PL models may be ineffective due to time-varying bit errors in transmissions.}
The performances of FedAMP, APPLE, and FedALA deteriorate faster than that of the proposed WPFL, because these PFL benchmarks do not adaptively adjust their training process and settings in response to the time-varying errors, and they undergo a greater cumulative impact of DP noise, quantization, and transmission errors.

Figs.~\ref{PFL_fairness_mnist_dnn}, \ref{PFL_fairness_mnist_mlr}, \ref{PFL_fairness_Cifar10_cnn}, and \ref{PFL_fairness_fmnist_cnn} gauge the fairness of the proposed WPFL and its PFL benchmarks. The proposed WPFL outperforms pFedMe and FedALA in fairness, due to our optimization for performance fairness. Although the fairness of the proposed WPFL is worse than that of FedAMP and APPLE, the proposed WPFL can achieve dramatically better accuracy and maximum test loss of all participating clients while maintaining relatively satisfying fairness. By contrast, FedAMP and APPLE offer much poor accuracy and maximum test loss, rendering the achieved fairness less meaningful.

% \begin{figure}[t]
% \centering  
% \subfigure[Fairness vs. $T_0$ (MLR, MNIST)]
% {
% \label{PFL_acc_mnist_mlr}
% \includegraphics[width=0.23\textwidth]{pic/PFL/test/PFL_acc_mnist_mlr_3.pdf}}
% \\
% \subfigure[Accuracy vs. $T_0$ (MLR, MNIST)]
% {
% \label{PFL_jain_mnist_mlr}
% \includegraphics[width=0.23\textwidth]{pic/PFL/test/PFL_jain_mnist_mlr_3.pdf}} 
% \hspace{-1mm}
% \subfigure[Maximum test loss vs. $T_0$ (MLR, MNIST)]
% {
% \label{PFL_worst_loss_mnist_mlr}
% \includegraphics[width=0.23\textwidth]{pic/PFL/test/PFL_worst_loss_mnist_mlr_3.pdf}} 
% \caption{\textcolor{blue}{Comparison of the accuracy, fairness. and maximum test loss of all participating clients between the benchmarks under different $T_0$. $\epsilon_{\mathrm{Q}}=1$.}}
% \end{figure}

\section{Conclusion}
In this paper, we proposed a new WPFL framework, where quantization errors were exploited in coupling with a Gaussian DP mechanism to enhance the privacy of WPFL and min-max fairness was enforced to balance its convergence and fairness. 
% Specifically, we analyzed the convergence upper bounds of PL models under the impact of quantization errors, Gaussian DP noises, and imperfect communication channels. By minimizing the maximum of the bounds, we designed the optimal transmission scheduling strategy that yields min-max fairness for WPFL. 
Experiments validated our analysis and demonstrated that, under the CNN model, our approach substantially outperforms its alternative scheduling strategies (including round-robin, random selection, and non-adjustment) by $87.08\%$ in accuracy, $16.21\%$ in the maximum test loss of participating clients, and $38.37\%$ in fairness. 
With the quantization-assisted Gaussian mechanism, WPFL is $16.10\%$ better in accuracy than using only the Gaussian mechanisms (e.g., MA), 
% and only $9.26\%$ lower than that of WPFL with no DP perturbation, 
validating the idea of exploiting quantization errors for privacy enhancement.
Moreover, our approach dramatically surpasses the wireless deployment of the state-of-the-art PFL (e.g., FedAMP)
% pFedMe and FedALA) 
by $10.43\%$ in accuracy and $8.16\%$ in maximum test loss. 
% Despite being $11.94\%$ and $...\%$ lower than the wireless deployment of FedAMP and APPLE, respectively in fairness, our approach is dramatically better by $26.71\%$ and $50.84\%$ in accuracy and the maximum test loss, respectively. 
% This demonstrates the performance enhancement achieved under our proposed WPFL.

% This work deepens the understanding of how quantization contributes to privacy, as well as the impact of quantization-assisted DP and imperfect communication channels on the convergence of WPFL. It also improves the fairness of PL models, paving the way for WPFL systems in real-world applications, where accuracy, fairness, and privacy are all critical. In the future, we will focus on the privacy benefits of quantization and communication errors, and conduct theoretical analyses on the convergence of decentralized WPFL models under imperfect wireless environments and DP mechanisms.
\bibliography{DittoDP}

\twocolumn[\newpage]
\appendix
\subsection{{Proof of \textbf{Theorem \ref{privacy_budget}}}}
\label{privacy_budget_proof}
% \begin{spacing}{1}
    We first focus on the DP mechanism in one communication round. According to \cite{abadi2016deep},
    the privacy bound for multivariate noise is converted into a one-dimensional form by assuming that $\mathcal{M}_\mathrm{Q}(\mathcal{X}_{1}')$ and $\mathcal{M}_\mathrm{Q}(\mathcal{X}_{1})$ are identical except for the first element, without loss of generality. For conciseness, $u_{n,1}(\bullet)$ is written as $u(\bullet)$.  
    Under the quantization-assisted Gaussian mechanism $\mathcal{M}_\mathrm{Q}$, the Max Divergence and the $\delta$-Approximate Max Divergence are given by
    
    \vspace{-\baselineskip}
    % \end{spacing}
    % \begin{singlespace}
    {\begin{subequations} \footnotesize
    \label{divergence}
    \begin{align}
        \label{m_d}
        D_{\infty}\left[\mathcal{M}\left({\mathcal{X}_{1}'}\right)||\mathcal{M}\left({\mathcal{X}_{1}}\right)\right]&=\max_{\chi\in\mathcal{Q_{\mathrm{L}}}}\ln\frac{p}{p_{1}};
        \\
        D_{\infty}^{\delta}\left[\mathcal{M}\left({\mathcal{X}_{1}'}\right)||\mathcal{M}\left({\mathcal{X}_{1}}\right)\right]&=\max_{\chi\in\mathcal{Q_{\mathrm{L}}}}\ln\frac{p-\delta}{p_{1}},
        \label{m_d_delta}
            \end{align}
    \end{subequations} }%
    % \end{singlespace}
    % \begin{spacing}{1}
    where $\mathcal{Q_{\mathrm{L}}}=\{\chi_{1},\ldots,\chi_{|\mathcal{Q_{\mathrm{L}}}|}\}$ collects the quantization levels; for conciseness, $p$ and $p_{1}$ are defined as 
    % \begin{singlespace}

    \vspace{-\baselineskip}
    \begin{subequations} \footnotesize
    \begin{align}
        \label{p}
        p&=(1\!\!-\!\!q)\Pr[\mathcal{M}_{\mathrm{Q}}(\mathcal{X}_{1})\!=\!\chi]\!+\!q\Pr\left[\mathcal{M}_{\mathrm{Q}}(\mathcal{X}_{0})\!=\!\chi\right];
        \\
        \label{p1}
        p_{1}&=\Pr\left[\mathcal{M}_{\mathrm{Q}}(\mathcal{X}_{1})=\chi\right],
        \end{align}
    \end{subequations} 
    % \end{singlespace}
    where $\mathcal{X}_{0}=x_n\cup \mathcal{X}_{1}$ is the adjacent dataset of $\mathcal{X}_{1}$, $\forall d_n\in \mathcal{X}$, $x_n \notin \mathcal{X}_{1}$; $\chi\in\mathcal{Q_{\mathrm{L}}}$ is a quantization level. 
    According to (\ref{DP_mechanism}), when $\chi\in\mathcal{Q_{\mathrm{L}}}\setminus\{\chi_{1},\chi_{|\mathcal{Q_{\mathrm{L}}}|}\}$, the probabilities in (\ref{p}) and (\ref{p1}) are given by

    \vspace{-\baselineskip}
    % \end{spacing}
{\begin{subequations}\footnotesize
    \label{P}
        \begin{align}
            \label{Pr1}
            &\Pr[\mathcal{M}_{\mathrm{Q}}(\mathcal{X}_{1})\!=\!\chi]\!=\!Q(\frac{\chi\!\!-\!\!E_{\mathrm{L}}^{\mathrm{max}}\!\!-\!\!u(\mathcal{X}_{1})}{\sigma_{\mathrm{DP}}})\!\!-\!\!Q(\frac{\chi\!\!+\!\!E_{\mathrm{L}}^{\mathrm{max}}\!\!-\!\!u(\mathcal{X}_{1})}{\sigma_{\mathrm{DP}}});
            \\
            &\Pr[\mathcal{M}_{\mathrm{Q}}(\mathcal{X}_{0})\!=\!\chi]\!=\!Q(\frac{\chi\!\!-\!\!E_{\mathrm{L}}^{\mathrm{max}}\!\!-\!\!u(\mathcal{X}_{0})}{\sigma_{\mathrm{DP}}})\!\!-\!\!Q(\frac{\chi\!\!+\!\!E_{\mathrm{L}}^{\mathrm{max}}\!\!-\!\!u(\mathcal{X}_{0})}{\sigma_{\mathrm{DP}}}).
            \label{Pr2}
        \end{align}
    \end{subequations} }%
    When $\chi\in\{\chi_{1},\chi_{|\mathcal{Q_{\mathrm{L}}}|}\}$, we have
    {
\begin{subequations}\footnotesize
    \label{P'}
        \begin{align}
            \label{Pr1'}
            &\Pr[\mathcal{M}_{\mathrm{Q}}(\mathcal{X}_{1})\!=\!\chi]\!=Q(\frac{\chi\!\!-\!\!E_{\mathrm{L}}^{\mathrm{max}}\!\!-\!\!u(\mathcal{X}_{1})}{\sigma_{\mathrm{DP}}});
            \\
            &\Pr[\mathcal{M}_{\mathrm{Q}}(\mathcal{X}_{0})\!=\!\chi]\!=Q(\frac{\chi\!\!-\!\!E_{\mathrm{L}}^{\mathrm{max}}\!\!-\!\!u(\mathcal{X}_{0})}{\sigma_{\mathrm{DP}}}).
            \label{Pr2'}
        \end{align}
    \end{subequations} }
        % \end{spacing}
        
    Since each parameter of the model is bounded by the clipping threshold $C$, i.e., $u(\bullet)\leq C$, when $\chi\in\mathcal{Q_{\mathrm{L}}}\setminus\{\chi_{1},\chi_{|\mathcal{Q_{\mathrm{L}}}|}\}$,
   
    %according to the probability distribution characteristics of a zero-mean Gaussian random variable, the following inequality is obtained
    % \begin{equation}
    %     \label{ineq_ln}
    %     \Pr[\mathcal{M}_{\mathrm{Q}}(\bullet)=0]\leq\Pr[\mathcal{M}_{\mathrm{Q}}(\bullet)=Q]\leq \Pr[\mathcal{M}_{\mathrm{Q}}(\bullet)=\Lambda].
    % \end{equation}
\begin{subequations}\footnotesize
    \label{ineq_ln}
        \begin{align}
        \Pr[\mathcal{M}_{\mathrm{Q}}(\bullet)\!\!=\!\!\chi]&\!\!\geq \!\!Q(\frac{2C\!\!+\!\!3\sigma_{\mathrm{DP}}\!\!-\!\!E_{\mathrm{L}}^{\mathrm{max}}}{\sigma_{\mathrm{DP}}})\!\!-\!\!Q(\frac{2C\!\!+\!\!3\sigma_{\mathrm{DP}}\!\!+\!\!E_{\mathrm{L}}^{\mathrm{max}}}{\sigma_{\mathrm{DP}}}),
            \\
           \Pr[\mathcal{M}_{\mathrm{Q}}(\bullet)\!\!=\!\!\chi]&\!\!\leq\!\! Q(\frac{-E_{\mathrm{L}}^{\mathrm{max}}}{\sigma_{\mathrm{DP}}})\!\!-\!\!Q(\frac{E_{\mathrm{L}}^{\mathrm{max}}}{\sigma_{\mathrm{DP}}});
        \end{align}
    \end{subequations}%
    when $\chi\in\{\chi_{1},\chi_{|\mathcal{Q_{\mathrm{L}}}|}\}$
    %        {\footnotesize \begin{subequations}
    % \label{ineq_ln'}
    %     \begin{align}
    %     \Pr[\mathcal{M}_{\mathrm{Q}}(\bullet)=\chi]&\geq Q(\frac{2C\!\!+\!\!3\sigma_{\mathrm{DP}}\!\!-\!\!E_{\mathrm{L}}^{\mathrm{max}}}{\sigma_{\mathrm{DP}}}),
    %         \\
    %        \Pr[\mathcal{M}_{\mathrm{Q}}(\bullet)=\chi]&\leq Q(\frac{3\sigma_{\mathrm{DP}}-E_{\mathrm{L}}^{\mathrm{max}}}{\sigma_{\mathrm{DP}}}),
    %     \end{align}
    % \end{subequations}}%
    {\footnotesize 
     % \begin{subequations}
        \begin{align}
            \label{ineq_ln'}
            Q(\frac{2C\!\!+\!\!3\sigma_{\mathrm{DP}}\!\!-\!\!E_{\mathrm{L}}^{\mathrm{max}}}{\sigma_{\mathrm{DP}}})\leq \Pr[\mathcal{M}_{\mathrm{Q}}(\bullet)=\chi] \leq Q(\frac{3\sigma_{\mathrm{DP}}\!\!-\!\!E_{\mathrm{L}}^{\mathrm{max}}}{\sigma_{\mathrm{DP}}}) .
        \end{align}
    % \end{subequations}
    }
        % \end{spacing}

    By substituting (\ref{ineq_ln}) and (\ref{ineq_ln'}) into (\ref{m_d}) and then plugging the results into (\ref{divergence}), it follows that

    \vspace{-\baselineskip}
    \begin{subequations} \footnotesize
    \label{budget_one_round}
        \begin{align}
        D_{\infty}\left[\mathcal{M}\left(\mathcal{X}_{1}'\right)||\mathcal{M}\left({\mathcal{X}_{1}}\right)\right]\!&\leq\! \max\{\ln\frac{\psi}{\psi_{1}},\ln\frac{\psi'}{\psi_{1}'}\},
            \\
           D_{\infty}^{\delta}\left[\mathcal{M}\left({\mathcal{X}_{1}'}\right)||\mathcal{M}\left({\mathcal{X}_{1}}\right)\right]\!&\leq \!\max\{\ln\frac{\psi\!\!-\!\!\delta}{\psi_{1}} ,\ln\frac{\psi'\!\!-\!\!\delta}{\psi_{1}'} \}\!\!=\!\!\epsilon_\mathrm{Q},\nonumber
           \\
           \Rightarrow \delta_\mathrm{Q}\!\!&=\!\!\max\{\psi\!\!-\!\!\psi_{1}e^{\epsilon_\mathrm{Q}},\psi'\!\!-\!\!\psi_{1}'e^{\epsilon_\mathrm{Q}}\}\!.
        \end{align}
    \end{subequations}
    By applying Composition Theorem \cite[Thm 3.16]{dwork2014algorithmic} to (\ref{budget_one_round}), \textbf{Theorem \ref{privacy_budget}} readily follows.
    % \end{spacing}

\subsection{Proof of \textbf{Lemma \ref{Lemma0_0}}}
\label{Lemma0_0_proof}
% By substituting (\ref{received_local_model}), (\ref{noisy_local_model}), and (\ref{error_local_model}) into (\ref{aggregated_glb}), we have 
By substituting (\ref{received_local_model}) and (\ref{noisy_local_model}) into (\ref{aggregated_glb}), we have 
{
\begin{subequations}\footnotesize
    \begin{align}
        \label{receive_glb_difference1}
        % \begin{split}
            \mathbb{E}&\left[\parallel\tilde{\boldsymbol{\omega}}_{\mathrm{L}}^{t}-\boldsymbol{\omega}^{\ast}\parallel^{2}\right] 
        %     {=\mathbb{E}\bigg[\parallel\frac{1}{|\mathcal{N}_{t}|}\underset{n\in\mathcal{N}_{t}}{\sum}\left(\mathbf{s}_{n}^{t}\circ\left(\boldsymbol{u}_{n}^{t}+\boldsymbol{\zeta}_{n,\mathrm{L}}^{t}\right) \right.}\\
        %     &{\left.+\left(\mathbf{1}_{|\boldsymbol{\omega}|}-\mathbf{s}_{n}^{t}\right)\circ\left(\boldsymbol{u}_{n}^{t}+\mathbf{z}_{n}^{t}+\mathbf{E}_{n,\mathrm{L}}^{t}\right)\right)-\boldsymbol{\omega}^{\ast}\parallel^{2}\bigg] }
        % \end{split}
        % \\
        % \label{receive_glb_difference2}
        % \begin{split}
            % &
            =\mathbb{E}\Big[\parallel\frac{1}{|\mathcal{N}_{t}|}{\sum}_{n\in\mathcal{N}_{t}}\boldsymbol{u}_{n}^{t}+\Lambda_0-\boldsymbol{\omega}^{\ast}\parallel^{2}\Big]
        % \end{split}
        \\
        \label{receive_glb_difference3}
        % \begin{split}
            &\leq\frac{1}{|\mathcal{N}_{t}|}\!{\sum}_{n\in\mathcal{N}_{t}}\!\mathbb{E}\Big[\!\!\parallel\!\!\left(\hat{\boldsymbol{\omega}}_{n,\mathrm{G}}^{t}\!\!-\!\!\eta_{\mathrm{F},n}^{t}\nabla F_{n}(\hat{\boldsymbol{\omega}}_{n,\mathrm{G}}^{t})\right)\!\!-\!\!\boldsymbol{\omega}^{\ast}  \!\!+\!\!\Lambda_0\!\!\parallel^{2}\!\!\Big]
        % \end{split}
        % \\
        % \label{receive_glb_difference4}
        % \begin{split}
        %     &=\frac{1}{|\mathcal{N}_{t}|}\underset{n\in\mathcal{N}_{t}}{\sum}\mathbb{E}\bigg[\parallel\hat{\boldsymbol{\omega}}_{n,\mathrm{G}}^{t}-\boldsymbol{\omega}^{\ast}-\eta_{\mathrm{F},n}^{t}\nabla F(\hat{\boldsymbol{\omega}}_{n,\mathrm{G}}^{t}) \\
        %     &+\left(\mathbf{s}_{n}^{t}\circ\boldsymbol{\zeta}_{n,\mathrm{L}}^{t}+\left(1-\mathbf{s}_{n}^{t}\right)\circ\left(\mathbf{z}_{n}^{t}+\mathbf{E}_{n,\mathrm{L}}^{t}\right)\right)\parallel^{2}\bigg]
        % \end{split}
        \\
        \label{receive_glb_difference5}
        \begin{split}
            &=\frac{1}{|\mathcal{N}_{t}|}{\sum}_{n\in\mathcal{N}_{t}}\Big(2\eta_{\mathrm{F},n}^{t}\mathbb{E}\big\langle \nabla F(\hat{\boldsymbol{\omega}}_{n,\mathrm{G}}^{t}),\boldsymbol{\omega}^{\ast}-\hat{\boldsymbol{\omega}}_{n,\mathrm{G}}^{t}\big\rangle
            \\
&+\!\!2\mathbb{E}\big\langle \!\Lambda_0, 
           \!-\!\eta_{\mathrm{F},n}^{t}\nabla F(\hat{\boldsymbol{\omega}}_{n,\mathrm{G}}^{t})\big\rangle+\!2\mathbb{E}\big\langle\!\Lambda_0,\hat{\boldsymbol{\omega}}_{n,\mathrm{G}}^{t}\!\!-\!\!\boldsymbol{\omega}^{\ast}\!\big\rangle \! \Big)
            \\
&\mathbb{E}\!\Big[\!\parallel\!\Lambda_0\!\parallel^{2}\!\Big]\!\!+\!\!\mathbb{E}\left[\parallel\!-\!\eta_{\mathrm{F},n}^{t}\nabla F(\hat{\boldsymbol{\omega}}_{n,\mathrm{G}}^{t})\parallel^{2}\right]+\mathbb{E}\left[\!\parallel\!\hat{\boldsymbol{\omega}}_{n,\mathrm{G}}^{t}-\boldsymbol{\omega}^{\ast}\!\parallel^{2}\!\right]
        \end{split}
        \\
        \label{receive_glb_difference7}
        \begin{split}
               &\leq\frac{1}{|\mathcal{N}_{t}|}{\sum}_{n\in\mathcal{N}_{t}}\Big(\big(1+\frac{1}{\phi_{1}}+\frac{1}{\phi_{2}}\big)\mathbb{E}\big[\parallel\Lambda_0\parallel^{2}\big] \!+\!\left(\!1\!+\!\phi_{1}\!\right)\cdot\\
            &\mathbb{E}\left[\!\parallel\!-\!\eta_{\mathrm{F},n}^{t}\!\nabla F(\hat{\boldsymbol{\omega}}_{n,\mathrm{G}}^{t})\!\!\parallel^{2}\!\right] \!\!+\!\!\big(\!1\!\!+\!\!\phi_{2}\!\!-\!\!\mu\eta_{\mathrm{F},n}^{t}\!\big)\mathbb{E}\big[\parallel\!\!\hat{\boldsymbol{\omega}}_{n,\mathrm{G}}^{t}\!\!-\!\!\boldsymbol{\omega}^{\ast}\!\!\parallel^{2}\!\big]\!\! \Big)
        \end{split}
        \\
        \label{receive_glb_difference8}
        \begin{split}
               &\leq\frac{1}{|\mathcal{N}_{t}|}{\sum}_{n\in\mathcal{N}_{t}}\Big(\big(1\!\!+\!\!\frac{1}{\phi_{1}}\!\!+\!\!\frac{1}{\phi_{2}}\big)\mathbb{E}\big[\!\parallel\!\Lambda_0\!\parallel^{2}\!\big]+\!\!\big(\!1\!\!+\!\!\phi_{2} \\
            % &+\left(1+\phi_{1}\right)\mathbb{E}\left[\parallel-\eta_{\mathrm{F},n}^{t}\nabla F(\hat{\boldsymbol{\omega}}_{n,\mathrm{G}}^{t})\parallel^{2}\right] \\
            &\!\!+\!\!\left(1\!\!+\!\!\phi_{1}\right)L^{2}\!\big(\eta_{\mathrm{F},n}^{t}\big)^{2}\!\!-\!\!\mu\eta_{\mathrm{F},n}^{t}\big)\mathbb{E}\big[\!\parallel\!\hat{\boldsymbol{\omega}}_{n,\mathrm{G}}^{t}\!\!-\!\!\boldsymbol{\omega}^{\ast}\parallel^{2}\!\big] \!\Big),
        \end{split}
    \end{align}
\end{subequations}}%
{where $\Lambda_0=\mathbf{s}_{n}^{t}\circ\boldsymbol{\zeta}_{n,\mathrm{L}}^{t}+(\mathbf{1}_{|\boldsymbol{\omega}|}-\mathbf{s}_{n}^{t})\circ(\mathbf{z}_{n}^{t}+\mathbf{E}_{n,\mathrm{L}}^{t})$. (\ref{receive_glb_difference3}) is obtained by substituting (\ref{update_lc_model}) into (\ref{receive_glb_difference1})} and then utilizing the Cauchy–Schwarz inequality; 
{(\ref{receive_glb_difference7}) is obtained by considering the $\mu$-strong convexity of $F(\cdot)$ and $F\left(\boldsymbol{\omega}^{\ast}\right)-F\left(\hat{\boldsymbol{\omega}}_{n,\mathrm{G}}^{t}\right)\leq 0$ for the first term of \eqref{receive_glb_difference3}, exploiting $2\left\langle a,b\right\rangle \leq v a^{2}+\frac{1}{v}b^{2}$, with $a=\Lambda_0$, $b=-\eta_{\mathrm{F},n}^{t}\nabla F(\hat{\boldsymbol{\omega}}_{n,\mathrm{G}}^{t})$, and $v=\phi_{1}>0$ for the second term, and exploiting $2\left\langle a,b\right\rangle \leq v a^{2}+\frac{1}{v}b^{2}$, with $a=\Lambda_0$, $b=\hat{\boldsymbol{\omega}}_{n,\mathrm{G}}^{t}-\boldsymbol{\omega}^{\ast}$, and $v=\phi_{2}>0$ for the third term; }
% (\ref{receive_glb_difference7}) follows from $F\left(\boldsymbol{\omega}^{\ast}\right)-F\left(\hat{\boldsymbol{\omega}}_{n,\mathrm{G}}^{t}\right)\leq 0$; 
(\ref{receive_glb_difference8}) is obtained under the $L$-smoothness of $F(\cdot)$  %(\ref{receive_glb_difference7}) is obtained
and
% Based on Cauchy–Schwarz inequality, the upper bound of $\Lambda_0$ is obtained, as given by
% \begin{subequations}

\vspace{-\baselineskip}
\begin{small}
    \begin{align} 
        \label{Omega0_1}
        % \begin{split}
        \mathbb{E}\left[\parallel\Lambda_0 \parallel^{2}\right]
            \!\!&=\!\! \mathbb{E}\!\left[\parallel\!\mathbf{s}_{n}^{t}\!\circ\!\boldsymbol{\zeta}_{n,\mathrm{L}}^{t}\!\parallel^{2}\right] \!\!+\!\!\mathbb{E}\left[\parallel\!\left(\mathbf{1}_{|\boldsymbol{\omega}|}\!\!-\!\!\mathbf{s}_{n}^{t}\right)\!\circ\!\left(\mathbf{z}_{n}^{t}\!\!+\!\!\mathbf{E}_{n,\mathrm{L}}^{t}\right)\!\parallel^{2}\right] \nonumber \\
            &+2\mathbb{E}\left\langle \mathbf{s}_{n}^{t}\circ\boldsymbol{\zeta}_{n,\mathrm{L}}^{t},\left(1-\mathbf{s}_{n}^{t}\right)\circ\left(\mathbf{z}_{n}^{t}+\mathbf{E}_{n,\mathrm{L}}^{t}\right)\right\rangle  
        % \end{split}
    \end{align}
% \end{subequations}
\end{small}

We further have
{
\begin{subequations} \small
    \begin{align}
        \label{varOmega1}
        \mathbb{E}&\left[\parallel\mathbf{s}_{n}^{t}\circ\boldsymbol{\zeta}_{n,\mathrm{L}}^{t}\parallel^{2}\right]=\mathbb{E}\Big[{\sum}_{i=1}^{|\boldsymbol{\omega}|}\left(s_{n,i}^{t}\zeta_{n,i,\mathrm{L}}^{t}\right)^{2}\Big] \\
        % \label{varOmega2}
        %  & {\leq\mathbb{E}\left[\sum_{i=1}^{|\boldsymbol{\omega}|}\left(s_{n,i}^{t}\right)^{2}\left(\left|u_{n,i}^{t+1}\right|+C+3\sigma_{\mathrm{DP}}\right)^{2}\right]} \\
        \label{varOmega3}
         & \leq {\sum}_{i=1}^{|\boldsymbol{\omega}|}\mathbb{E}\left[\left(s_{n,i}^{t}\right)^{2}\right]\mathbb{E}\left[\left(\left|u_{n,i}^{t+1}\right|+C+3\sigma_{\mathrm{DP}}\right)^{2}\right] \\
         \label{varOmega4}
         & {\leq2\rho_{n,\mathrm{L}}^{t}\mathbb{E}\Big[{\sum}_{i=1}^{|\boldsymbol{\omega}|}\big(\left|u_{n,i}^{t+1}\right|^{2}+(C+3\sigma_{\mathrm{DP}})^{2}\big)\Big] } \\
        %  \label{varOmega5}
        % &=2\rho_{n,\mathrm{L}}^{t}\mathbb{E}\left[\parallel\boldsymbol{u}_{n}^{t+1}\parallel^{2}\right]+2|\boldsymbol{\omega}|\rho_{n,\mathrm{L}}^{t}(C+3\sigma_{\mathrm{DP}})^{2} \\
        \label{varOmega6}
        &\leq 2\rho_{n,\mathrm{L}}^{t}C^2+2|\boldsymbol{\omega}|\rho_{n,\mathrm{L}}^{t}(C+3\sigma_{\mathrm{DP}})^{2} \,;
    \end{align}
\end{subequations}}%
% where (\ref{varOmega2}) follows from $\zeta_{n,i,\mathrm{L}}^{t} \leq \left|u_{n,i}^{t}\right|+C+3\sigma_{\mathrm{DP}}$; 
where (\ref{varOmega3}) is obtained by substituting $\zeta_{n,i,\mathrm{L}}^{t} \leq |u_{n,i}^{t}|+C+3\sigma_{\mathrm{DP}}$ into \eqref{varOmega1}, followed by exploiting $\mathbb{E}\big[\sum_{i=1}^{|\boldsymbol{\omega}|}a\big]=\sum_{i=1}^{|\boldsymbol{\omega}|}\mathbb{E}\left[a\right]$ and $\mathbb{E}\left[bc\right]=\mathbb{E}\left[b\right]\mathbb{E}\left[c\right]$ with $a=(s_{n,i}^{t})^{2}(|u_{n,i}^{t}|+C+3\sigma_{\mathrm{DP}})^{2}$, $b=(s_{n,i}^{t})^{2}$, $c=(|u_{n,i}^{t}|+C+3\sigma_{\mathrm{DP}})^{2}$ 
% ($b$ and $c$ are independent random variables); 
and (\ref{varOmega4}) is due to $\mathbb{E}[(s_{n,i}^{t})^{2}]=\rho_{n,\mathrm{L}}^{t}$ and $(a+b)^2\leq 2(a^2+b^2)$; (\ref{varOmega6}) follows from $\parallel\boldsymbol{u}_{n}^{t}\parallel^{2} \leq C^2$. 
{
\begin{subequations} \small
    \begin{align}
        %\label{noisy_quan1}
        \mathbb{E}&\left[\parallel(\mathbf{1}_{|\boldsymbol{\omega}|}-\mathbf{s}_{n}^{t})\circ(\mathbf{z}_{n}^{t}+\mathbf{E}_{n,\mathrm{L}}^{t})\parallel^{2}\right] \nonumber\\
        % \label{noisy_quan1}
        % &=\sum_{i=1}^{|\boldsymbol{\omega}|}\mathbb{E}\left[(1-{s}_{n,i}^{t})^2\right]\mathbb{E}\left[ \left(z_{n,i}^{t}+E_{n,i,\mathrm{L}}^{t}\right)^{2}\right]  \\
        % \label{noisy_quan2}
        % &{=\left(1-\rho_{n,\mathrm{L}}^{t}\right)\sum_{i=1}^{|\boldsymbol{\omega}|}\mathbb{E}\left[ \left(z_{n,i}^{t}+E_{n,i,\mathrm{L}}^{t}\right)^{2}\right] } \\
        % \label{noisy_quan3}
        % &{=\left(1\!\!-\!\!\rho_{n,\mathrm{L}}^{t}\right)\sum_{i=1}^{|\boldsymbol{\omega}|}\mathbb{E}\left[ \left(z_{n,i}^{t}\right)^{2}\!\!+\!\!\left(E_{n,i,\mathrm{L}}^{t}\right)^{2}\!\!+\!\!2z_{n,i}^{t}E_{n,i,\mathrm{L}}^{t}\right] } \\
        \label{noisy_quan4}
        &=\left(1-\rho_{n,\mathrm{L}}^{t}\right){\sum}_{i=1}^{|\boldsymbol{\omega}|}\big(\mathbb{E}\big[ \left(z_{n,i}^{t}\right)^{2}\big] +\mathbb{E}\big[ \left(E_{n,i,\mathrm{L}}^{t}\right)^{2}\big] \big) \\
        % \label{noisy_quan5}
        % &{\leq\left(1-\rho_{n,\mathrm{L}}^{t}\right)\sum_{i=1}^{|\boldsymbol{\omega}|}\left(\sigma_{\mathrm{DP}}^{2}+\left(E_{\mathrm{L}}^{\mathrm{max}}\right)^{2}\right) }\\
        \label{noisy_quan6}
        &\leq|\boldsymbol{\omega}|\left(1-\rho_{n,\mathrm{L}}^{t}\right)\big(\sigma_{\mathrm{DP}}^{2}+(E_{\mathrm{L}}^{\mathrm{max}})^{2}\big) \,,
    \end{align}
\end{subequations}}%
{where (\ref{noisy_quan6}) is obtained by substituting $\mathbb{E}\left[(1-s_{n,i}^{t})^2\right]=1-\rho_{n,\mathrm{L}}^{t}$ and $|{E}_{n,i,\mathrm{L}}^{t}| \leq E_{\mathrm{L}}^{\mathrm{max}}$ into (\ref{noisy_quan4}). }
\begin{small}
    \begin{align} 
        \label{L_s}
        \mathbb{E}&\left\langle \mathbf{s}_{n}^{t}\circ\boldsymbol{\zeta}_{n,\mathrm{L}}^{t},\left(1-\mathbf{s}_{n}^{t}\right)\circ\left(\mathbf{z}_{n}^{t}+\mathbf{E}_{n,\mathrm{L}}^{t}\right)\right\rangle\nonumber  \\
        &=\mathbb{E}\Big[\!{\sum}_{i=1}^{|\boldsymbol{\omega}|}\left(s_{n,i,\mathrm{L}}^{t}\zeta_{n,i,\mathrm{L}}^{t}\left(\!1\!\!-\!\!s_{n,i,\mathrm{L}}^{t}\!\right)\!\left(z_{n,i}^{t}\!\!+\!\!E_{n,i,\mathrm{L}}^{t}\right)\!\right)\!\Big] \!=\!0 \,,
    \end{align}%
\end{small}
% \begin{spacing}{1}
{which is due to the fact that $s_{n,i,\mathrm{L}}^{t}=0$ or $s_{n,i,\mathrm{L}}^{t}=1$.}
% \end{spacing}
% \begin{equation}
% \label{L_smooth}
%     \mathbb{E}\left[\parallel-\eta_{\mathrm{F},n}^{t}\nabla F(\hat{\boldsymbol{\omega}}_{n,\mathrm{G}}^{t})\parallel^{2}\right]\leq L^{2}\left(\eta_{\mathrm{F},n}^{t}\right)^{2}\mathbb{E}\left[\parallel\hat{\boldsymbol{\omega}}_{n,\mathrm{G}}^{t}-\boldsymbol{\omega}^{\ast}\parallel^{2}\right] \,,
% \end{equation}
% which is obtained under the $L$-smoothness of $F(\cdot)$.

{
By plugging (\ref{varOmega6}), (\ref{noisy_quan6}), and (\ref{L_s}) into (\ref{Omega0_1}), the upper bound of $\frac{1}{|\mathcal{N}_{t}|}{\sum}_{n\in\mathcal{N}_{t}}\mathbb{E}\left[\parallel\Lambda_0 \parallel^{2}\right]$ can be rewritten as

\vspace{-\baselineskip}
{
\begin{subequations}\footnotesize
  \begin{align}
    \label{Upper_Omega_0_1}
    \frac{1}{|\mathcal{N}_{t}|}&\!{\sum}_{n\in\mathcal{N}_{t}}\!\mathbb{E}\left[\parallel\!\Lambda_0 \!\parallel^{2}\right]\! \leq \!\frac{1}{|\mathcal{N}_{t}|}{\sum}_{n\in\mathcal{N}_{t}}\Big(2\rho_{n,\mathrm{L}}^{t}C^{2}\!\!\nonumber \\
    &+\!\!2|\boldsymbol{\omega}|\rho_{n,\mathrm{L}}^{t}(C\!\!+\!\!3\sigma_{\mathrm{DP}})^{2} \!\!+\!\!|\boldsymbol{\omega}|(1\!\!-\!\!\rho_{n,\mathrm{L}}^{t})\big(\sigma_{\mathrm{DP}}^{2}\!\!+\!\!(E_{\mathrm{L}}^{\mathrm{max}})^{2}\big)\Big) \\
    \label{Upper_Omega_0_2}
    &=\Theta_{\mathrm{L}}^{t}+|\boldsymbol{\omega}|\big(\sigma_{\mathrm{DP}}^{2}+\left(E_{\mathrm{L}}^{\mathrm{max}}\right)^{2}\big)
\end{align}  
\end{subequations}}

By plugging (\ref{Upper_Omega_0_2}) into (\ref{receive_glb_difference8}),  \textbf{Lemma \ref{Lemma0_0}} follows.}

% {
% By plugging (\ref{varOmega6}), (\ref{noisy_quan6}), and (\ref{L_s}) into (\ref{Omega0_1}), 
% % the upper bound of $\frac{1}{|\mathcal{N}_{t}|}\underset{n\in\mathcal{N}_{t}}{\sum}\mathbb{E}\left[\parallel\Lambda_0 \parallel^{2}\right]$ can be rewritten as
% % {\footnotesize
% % \begin{subequations}
% %   \begin{align}
% %     \label{Upper_Omega_0_1}
% %     \frac{1}{|\mathcal{N}_{t}|}\underset{n\in\mathcal{N}_{t}}{\sum}\mathbb{E}&\left[\parallel\Lambda_0 \parallel^{2}\right] \leq \frac{1}{|\mathcal{N}_{t}|}\underset{n\in\mathcal{N}_{t}}{\sum}\Big(2\rho_{n,\mathrm{L}}^{t}C^{2}+2|\boldsymbol{\omega}|\rho_{n,\mathrm{L}}^{t}(C+3\sigma_{\mathrm{DP}})^{2} \nonumber \\
% %     &+|\boldsymbol{\omega}|\left(1-\rho_{n,\mathrm{L}}^{t}\right)\left(\sigma_{\mathrm{DP}}^{2}+\left(E_{\mathrm{L}}^{\mathrm{max}}\right)^{2}\right)\Big) \\
% %     \label{Upper_Omega_0_2}
% %     &=\Theta_{\mathrm{L}}^{t}+|\boldsymbol{\omega}|\left(\sigma_{\mathrm{DP}}^{2}+\left(E_{\mathrm{L}}^{\mathrm{max}}\right)^{2}\right)
% % \end{align}  
% % \end{subequations}}
% then plugging the result of (\ref{Omega0_1}) into (\ref{receive_glb_difference8}),  \textbf{Lemma \ref{Lemma0_0}} follows.}

\subsection{Proof of \textbf{Theorem \ref{theorem1}}}
\label{theorem1_proof}
% By plugging (\ref{noisy_glb}) and (\ref{error_glb_model}) into (\ref{error_glb}), we have
By plugging (\ref{noisy_glb}) into (\ref{error_glb}), we have
\begin{subequations}\footnotesize
    \begin{align}
        \label{d_glm1}
        % \begin{split}
            \mathbb{E}&\big[\parallel\hat{\boldsymbol{\omega}}_{n,\mathrm{G}}^{t+1}-\boldsymbol{\omega}^{\ast}\parallel^{2}\big]=
        %     {\mathbb{E}\left[\parallel\mathbf{s}_{n,\mathrm{G}}^{t+1}\circ\left(\tilde{\boldsymbol{\omega}}_{\mathrm{L}}^{t}+\boldsymbol{\zeta}_{n,\mathrm{G}}^{t+1}\right) \right.} \\
        %     &{\left.+(\mathbf{1}_{|\boldsymbol{\omega}|}-\mathbf{s}_{n,\mathrm{G}}^{t+1})\circ\left(\tilde{\boldsymbol{\omega}}_{\mathrm{L}}^{t}+\mathbf{E}_{n,\mathrm{G}}^{t}\right)-\boldsymbol{\omega}^{\ast}\parallel^{2}\right] }
        % \end{split}
        % \\
        % \label{d_glm2}
        %\begin{split}
% &=
\mathbb{E}\left[\parallel\Lambda_1\!+\!\tilde{\boldsymbol{\omega}}_{\mathrm{L}}^{t}\!\!-\!\boldsymbol{\omega}^{\ast}\!\parallel^{2}\right] \\
        %\end{split} \\
        % \label{d_glm3}
        % % \begin{split}
        %     &=\mathbb{E}\left[\parallel\Lambda_1\parallel^{2}\right]+\mathbb{E}\left[\parallel\tilde{\boldsymbol{\omega}}_{\mathrm{L}}^{t}-\boldsymbol{\omega}^{\ast}\parallel^{2}\right] +2\mathbb{E}\left\langle \Lambda_1,\tilde{\boldsymbol{\omega}}_{\mathrm{L}}^{t}-\boldsymbol{\omega}^{\ast}\right\rangle 
        % \end{split}
        % \\
        \label{d_glm4}
        % \begin{split}
            &\leq\big(1\!\!+\!\!\frac{1}{\varphi_{1}}\big)\mathbb{E}\left[\parallel\Lambda_1\parallel^{2}\right] \!\!+\!\!\left(1\!\!+\!\!\varphi_{1}\right)\mathbb{E}\left[\parallel\tilde{\boldsymbol{\omega}}_{\mathrm{L}}^{t}\!\!-\!\!\boldsymbol{\omega}^{\ast}\parallel^{2}\right]
        % \end{split}
        % \\
        % \label{d_glm5}
        % &\triangleq             \left(1+\frac{1}{\varphi_{1}}\right)\Lambda_1+\left(1+\varphi_{1}\right)\mathbb{E}\left[\parallel\tilde{\boldsymbol{\omega}}_{\mathrm{L}}^{t}-\boldsymbol{\omega}^{\ast}\parallel^{2}\right]
    \end{align}
\end{subequations}%
where $\Lambda_1=\mathbf{s}_{n,\mathrm{G}}^{t+1}\circ\boldsymbol{\zeta}_{n,\mathrm{G}}^{t+1}+(\mathbf{1}_{|\boldsymbol{\omega}|}-\mathbf{s}_{n,\mathrm{G}}^{t+1})\circ\mathbf{E}_{n,\mathrm{G}}^{t}$ is defined for brevity; (\ref{d_glm4}) is obtained by exploiting $(a+b)^2=a^2+b^2+2\left\langle a,b\right\rangle\leq a^2+b^2+v^2a^2+\frac{1}{v^2}b^2$ with $a=\Lambda_1$, $b=\tilde{\boldsymbol{\omega}}_{\mathrm{L}}^{t}-\boldsymbol{\omega}^{\ast}$, and $v=\varphi_{1}\neq 0$.

Further, we establish the upper bound of $\mathbb{E}[\parallel\Lambda_1\parallel^{2}] $ as
\begin{subequations}\footnotesize
    \begin{align}
        \label{Omega1}
        \mathbb{E}&\left[\parallel\!\!\Lambda_1\!\!\parallel^{2}\right] \!\!=\!\!\mathbb{E}\Big[\!{\sum}_{k=1}^{|\boldsymbol{\omega}|}\big(\!s_{n,k,\mathrm{G}}^{t+1}\zeta_{n,k,\mathrm{G}}^{t+1}\!\!+\!\!(1\!\!-\!\!s_{n,k,\mathrm{G}}^{t+1})E_{n,k,\mathrm{G}}^{t}\!\big)^{2}\!\Big] \\
        \label{Omega2}
        &\leq\mathbb{E}\Big[{\sum}_{k=1}^{|\boldsymbol{\omega}|}\left(s_{n,k,\mathrm{G}}^{t+1}\left(\left|\tilde{\omega}_{k,\mathrm{L}}^{t}\right|\!\!+\!\!C\right)\!\!+\!\!(1\!\!-\!\!s_{n,k,\mathrm{G}}^{t+1})E_{\mathrm{G}}^{\mathrm{max}}\right)^{2}\Big] \\
        % \label{Omega3}
        % \begin{split}
        %     &\leq2\mathbb{E}\left[\sum_{k=1}^{|\boldsymbol{\omega}|}\left(\left(s_{n,k,\mathrm{G}}^{t+1}\left|\tilde{\omega}_{k,\mathrm{L}}^{t}\right|\right)^{2} \right.\right.\\
        %     &\left.\left.+\left(\left(\beta_{\mathrm{G}}+(1-\beta_{\mathrm{G}})s_{n,k,\mathrm{G}}^{t+1}\right)C\right)^{2}\right)\right]
        % \end{split}
        % \\
        % \label{Omega4}
        % \begin{split}
        %     &=2\mathbb{E}\left[\sum_{k=1}^{|\boldsymbol{\omega}|}\left(s_{n,k,\mathrm{G}}^{t+1}\left|\tilde{\omega}_{k,\mathrm{L}}^{t}\right|\right)^{2}\right]\\
        %     &+2\mathbb{E}\left[\sum_{k=1}^{|\boldsymbol{\omega}|}\left(\left(\beta_{\mathrm{G}}+(1-\beta_{\mathrm{G}})s_{n,k,\mathrm{G}}^{t+1}\right)C\right)^{2}\right]
        % \end{split}
        % \\
        \label{Omega5}
        &\leq 2\rho_{n,\mathrm{G}}^{t+1}\mathbb{E}\left[\parallel\tilde{\boldsymbol{\omega}}_{\mathrm{L}}^{t}\parallel^{2}\right]\!\!+\!\!2|\boldsymbol{\omega}|\left(\beta_{\mathrm{G}}^{2}\!\!+\!\!(1\!\!-\!\!\beta_{\mathrm{G}}^{2})\rho_{n,\mathrm{G}}^{t+1}\right){C}^{2} ,
    \end{align}
\end{subequations}%
{where (\ref{Omega2}) is based on triangle inequality, and (\ref{Omega5}) is due to the Cauchy-Schwarz inequality.} 

The upper bound of $\mathbb{E}\left[\|\tilde{\boldsymbol{\omega}}_{\mathrm{L}}^{t}\|^{2}\right]$ is established as 

\vspace{-\baselineskip}
\begin{subequations}\footnotesize
    \begin{align}
        \label{up_glb2_1}
        % \begin{split}
\mathbb{E}&\left[\parallel\tilde{\boldsymbol{\omega}}_{\mathrm{L}}^{t}\parallel^{2}\right]
% {\mathbb{E}\left[\parallel\frac{1}{|\mathcal{N}_{t}|}\underset{n\in\mathcal{N}_{t}}{\sum}\left(\mathbf{s}_{n}^{t}\circ\left(\boldsymbol{u}_{n}^{t}+\boldsymbol{\zeta}_{n,\mathrm{L}}^{t}\right)\right.\right.}\\
%         &{\left.\left.+\left(1-\mathbf{s}_{n}^{t}\right)\circ\left(\boldsymbol{u}_{n}^{t}+\mathbf{z}_{n}^{t}+\mathbf{E}_{n,\mathrm{L}}^{t}\right)\right)\parallel^{2}\right] }
        % \end{split}
        % \\
        % \label{up_glb2_2}
        % \begin{split}
            % &
            =\mathbb{E}\Big[\parallel\frac{1}{|\mathcal{N}_{t}|}{\sum}_{n\in\mathcal{N}_{t}}\left(\boldsymbol{u}_{n}^{t}+\Lambda_0\right)\parallel^{2}\Big]
        % \end{split}
        \\
        \label{up_glb2_3}
            % \begin{split}
&\leq\frac{1}{|\mathcal{N}_{t}|}{\sum}_{n\in\mathcal{N}_{t}}\!\!\Big(\big(\!1\!\!+\!\!\frac{1}{\varphi_{2}}\!\big)\mathbb{E}\left[\parallel\boldsymbol{u}_{n}^{t}\parallel^{2}\right]\!\!+\!\!\left(1\!\!+\!\!\varphi_{2}\right) \mathbb{E}\left[\parallel\Lambda_0\parallel^{2}\right]\!\!\Big)
            \\
        % \label{up_glb2_4}
        % \begin{split}
        % &{\leq \left(1+\frac{1}{\varphi_{2}}\right)\frac{1}{|\mathcal{N}_{t}|}\underset{n\in\mathcal{N}_{t}}{\sum}\mathbb{E}\left[\parallel\boldsymbol{u}_{n}^{t}\parallel^{2}\right] }\\
        % &{+ \left(1+\varphi_{2}\right) \left(\Theta_{\mathrm{L}}^{t}+|\boldsymbol{\omega}|\left(\sigma_{\mathrm{DP}}^{2}+\left(E_{\mathrm{L}}^{\mathrm{max}}\right)^{2}\right)\right) }
        % \end{split} 
        % \\
        \label{up_glb2_5}
        &\leq \!\Big(\!1\!\!+\!\!\frac{1}{\varphi_{2}}\!\Big)\!C^2\!\!+\!\! \left(1\!\!+\!\!\varphi_{2}\right)\!\left(\Theta_{\mathrm{L}}^{t}\!\!+\!\!|\boldsymbol{\omega}|\left(\!\sigma_{\mathrm{DP}}^{2}\!\!+\!\!\left(E_{\mathrm{L}}^{\mathrm{max}}\!\right)^{2}\right)\!\right)\! \,,
    \end{align}
\end{subequations}%
where (\ref{up_glb2_3}) is due to the Cauchy–Schwarz inequality and $(a+b)^2\leq a^2+b^2+v^2a^2+\frac{1}{v^2}b^2$ with $a=\Lambda_0$, $b=\boldsymbol{u}_{n}^{t}$, and $v=\varphi_2 \neq 0$; {(\ref{up_glb2_5}) is obtained by substituting (\ref{Upper_Omega_0_2}) into (\ref{up_glb2_3}) and exploiting $\parallel\boldsymbol{u}_{n}^{t}\parallel \leq C$.}
 % \begin{spacing}{1}
Substituting (\ref{up_glb2_5}) into (\ref{Omega5}) yields
% \end{spacing}

\vspace{-\baselineskip}
\begin{footnotesize}
 \begin{align}
     \label{Omega_fi}
     \mathbb{E}\left[\parallel\!\!\Lambda_1\!\!\parallel^{2}\right] \!\! &\leq \!2\rho_{n,\mathrm{G}}^{t+1}\!\left(\!\!\left(\!1\!\!+\!\!\frac{1}{\varphi_{2}}\right)\!\!C^2\!\!\!+ \!\!\left(\!1\!\!+\!\!\varphi_{2}\!\right)\!\left(\!\Theta_{\mathrm{L}}^{t}\!\!+\!|\boldsymbol{\omega}|\left(\sigma_{\mathrm{DP}}^{2}\!\!+\!\!\left(E_{\mathrm{L}}^{\mathrm{max}}\!\right)^{2}\!\right)\!\right)\!\right)\!\!\nonumber\\        &+\!\!2|\boldsymbol{\omega}|\left(\beta_{\mathrm{G}}^{2}\!\!+\!\!(1\!\!-\!\!\beta_{\mathrm{G}}^{2})\rho_{n,\mathrm{G}}^{t\!+\!1}\right){C}^{2} \,.
 \end{align} 
 \end{footnotesize}%
 % \begin{spacing}{1}
    Based on \textbf{Lemma \ref{Lemma0_0}}, we finally obtain (\ref{down_glb_di}) by substituting (\ref{Glb_Up_OneStep}) and (\ref{Omega_fi}) into (\ref{d_glm4}).
 According to (\ref{down_glb_di}), we further have
% \end{spacing}

\vspace{-\baselineskip}
{\begin{subequations}\footnotesize
      \begin{align}
     \mathbb{E}&\left[\parallel\!\!\hat{\boldsymbol{\omega}}_{\mathrm{n,G}}^{t+1}\!\!-\!\!\boldsymbol{\omega}^{\ast}\!\!\parallel^{2}\right]\!\! \leq \varepsilon_{\mathrm{F}}^{\max}\mathbb{E}\left[\parallel\!\!\hat{\boldsymbol{\omega}}_{\mathrm{n,G}}^{t}\!\!-\!\!\boldsymbol{\omega}^{\ast}\!\!\parallel^{2}\right]\!\!+\!\Gamma^{\max}\\
     &\leq \!\!\left(\varepsilon_{\mathrm{F}}^{\max}\right)^{t+1}\mathbb{E}\!\left[\!\parallel\hat{\boldsymbol{\omega}}_{\mathrm{n,G}}^{0}\!\!-\!\!\boldsymbol{\omega}_{n}^{\ast}\parallel^{2}\!\right]\!\!+\!\!\Gamma^{\max}{\sum}_{i=0}^{t}\left(\varepsilon_{\mathrm{F}}^{\max}\right)^{i}.
 \end{align}
 \end{subequations}}%
 By utilizing the geometric series,
%$\sum_{i=0}^{t}a^i=\frac{a^{t+1}-1}{a-1}$ with $a=\varepsilon_{\mathrm{F}}^{\max}\neq 1$,
(\ref{down_glb_overall}) is obtained. With $\varepsilon_{\mathrm{F}}^{\max}\in (0,1)$, the global FL converges.

\subsection{Proof of \textbf{Theorem \ref{theorem_PL_con}}}
\label{theorem_PL_con_proof}
Let $g_{n}(\boldsymbol{\varpi}_{n}^{t};\hat{\boldsymbol{\omega}}_{n,\mathrm{G}}^{t+1})$ denote the stochastic gradient of $f_{n}(\boldsymbol{\varpi}_{n}^{t};\hat{\boldsymbol{\omega}}_{n,\mathrm{G}}^{t+1})$. Then,

\vspace{-\baselineskip}
\begin{equation} \footnotesize
\label{gradient}
    g_{n}(\boldsymbol{\varpi}_{n}^{t};\hat{\boldsymbol{\omega}}_{n,\mathrm{G}}^{t+1})\!\!=\!\!\Big(\!1\!\!-\!\!\frac{\lambda_{n}^{t+1}}{2}\!\Big)\nabla \!F_{n}(\boldsymbol{\varpi}_{n}^{t})\!+\!\lambda_{n}^{t+1}(\boldsymbol{\varpi}_{n}^{t}\!\!-\!\hat{\boldsymbol{\omega}}_{n,\mathrm{G}}^{t+1}) .
\end{equation} 
% \begin{spacing}{1}
As per the PL model of client $n$ at the $(t+1)$-th model update, we have~\cite[Eq.~(96)]{li2021ditto}

\vspace{-\baselineskip}
% \end{spacing}
% \begin{subequations} 
\begin{footnotesize}
	\begin{align}
        %\label{PerOneStep0}
        \label{PerOneStep00_2}
&\mathbb{E}\!\left[\parallel\!\!\tilde{\boldsymbol{\varpi}}_{n}^{t+1}\!\!\!\!-\!\boldsymbol{\varpi}_{n}^{\ast}\!\!\parallel^{2}\right]
% \!=\!\mathbb{E}\left[\parallel\!\!\tilde{\boldsymbol{\varpi}}_{n}^{t}\!\!-\!\eta_{\mathrm{P},n}^{t+1}g_{n}(\tilde{\boldsymbol{\varpi}}_{n}^{t};\hat{\boldsymbol{\omega}}_{n,\mathrm{G}}^{t})\!-\!\boldsymbol{\varpi}_{n}^{\ast}\!\!\parallel^{2}\right] \label{PerOneStep00_1}\\
=\mathbb{E}\left[\parallel\tilde{\boldsymbol{\varpi}}_{n}^{t}-\boldsymbol{\varpi}_{n}^{\ast}\parallel^{2}\right]+\left(\eta_{\mathrm{P},n}^{t+1}\right)^{2}\times \nonumber\\
&\quad\mathbb{E}\!\left[\parallel g_{n}(\tilde{\boldsymbol{\varpi}}_{n}^{t};\hat{\boldsymbol{\omega}}_{n,\mathrm{G}}^{t+1})\parallel^{2}\right]\!\!+\!2\eta_{\mathrm{P},n}^{t+1}\mathbb{E}\!\left\langle g_{n}(\tilde{\boldsymbol{\varpi}}_{n}^{t};\hat{\boldsymbol{\omega}}_{n,\mathrm{G}}^{t+1}),\boldsymbol{\varpi}_{n}^{\ast}\!\!-\!\!\tilde{\boldsymbol{\varpi}}_{n}^{t}\right\rangle.
	\end{align} 
 \end{footnotesize}
	% \end{subequations} 
The third term on the RHS of (\ref{PerOneStep00_2}) can be rewritten as 
{
        \begin{subequations} \footnotesize
	\begin{align}
        \begin{split}
&2\eta_{\mathrm{P},n}^{t+1}\mathbb{E}\left\langle g_{n}(\tilde{\boldsymbol{\varpi}}_{n}^{t};\hat{\boldsymbol{\omega}}_{n,\mathrm{G}}^{t+1}),\boldsymbol{\varpi}_{n}^{\ast}-\tilde{\boldsymbol{\varpi}}_{n}^{t}\right\rangle 
\\
=&2\eta_{\mathrm{P},n}^{t+1}\!\mathbb{E}\Big\langle\!\! \big(1\!\!-\!\!\frac{\lambda_{n}^{t+1}}{2}\!\big)\!\nabla F_{n}\!(\!\tilde{\boldsymbol{\varpi}}_{n}^{t}\!)\!\!+\!\!\lambda_{n}^{t+1}(\tilde{\boldsymbol{\varpi}}_{n}^{t}\!\!-\!\!\hat{\boldsymbol{\omega}}_{n,\mathrm{G}}^{t\!+\!1}),\boldsymbol{\varpi}_{n}^{\ast}\!\!-\!\!\tilde{\boldsymbol{\varpi}}_{n}^{t}\Big\rangle \!\!
        \end{split}
            \label{divengence1}
\\
% \begin{split}
% =&2\eta_{\mathrm{P},n}^{t+1}\mathbb{E}\left\langle \left(1-\frac{\lambda_{n}^{t+1}}{2}\right)\nabla F_{n}(\tilde{\boldsymbol{\varpi}}_{n}^{t}),\boldsymbol{\varpi}_{n}^{\ast}-\tilde{\boldsymbol{\varpi}}_{n}^{t}\right\rangle \\
% &+2\eta_{\mathrm{P},n}^{t+1}\mathbb{E}\left\langle \lambda_{n}^{t+1}(\tilde{\boldsymbol{\varpi}}_{n}^{t}-\hat{\boldsymbol{\omega}}_{n,\mathrm{G}}^{t}),\boldsymbol{\varpi}_{n}^{\ast}-\tilde{\boldsymbol{\varpi}}_{n}^{t}\right\rangle 
% \end{split}
% \label{divengence2}
% \\
\begin{split}
\leq& 2\eta_{\mathrm{P},n}^{t+1}\Big(1-\frac{\lambda_{n}^{t+1}}{2}\Big)\mathbb{E}\left[F_{n}\left(\boldsymbol{\varpi}_{n}^{\ast};\hat{\boldsymbol{\omega}}_{n,\mathrm{G}}^{t+1}\right)-F_{n}\left(\tilde{\boldsymbol{\varpi}}_{n}^{t};\hat{\boldsymbol{\omega}}_{n,\mathrm{G}}^{t+1}\right)\right] \\
&-\eta_{\mathrm{P},n}^{t+1}\Big(1-\frac{\lambda_{n}^{t+1}}{2}\Big)\mu E\left[\parallel\boldsymbol{\varpi}_{n}^{\ast}-\tilde{\boldsymbol{\varpi}}_{n}^{t}\parallel^{2}\right]\\
&+2\eta_{\mathrm{P},n}^{t+1}\mathbb{E}\left\langle \lambda_{n}^{t+1}(\tilde{\boldsymbol{\varpi}}_{n}^{t}-\hat{\boldsymbol{\omega}}_{n,\mathrm{G}}^{t+1}),\boldsymbol{\varpi}_{n}^{\ast}-\tilde{\boldsymbol{\varpi}}_{n}^{t}\right\rangle 
\end{split}
\label{divengence3}
\\
\begin{split}
=&2\eta_{\mathrm{P},n}^{t+1}\mathbb{E}\left[f_{n}\left(\boldsymbol{\varpi}_{n}^{\ast};\hat{\boldsymbol{\omega}}_{n,\mathrm{G}}^{t+1}\right)-f_{n}\left(\tilde{\boldsymbol{\varpi}}_{n}^{t};\hat{\boldsymbol{\omega}}_{n,\mathrm{G}}^{t+1}\right)\right]\\
&-\eta_{\mathrm{P},n}^{t+1}\Big(\big(1\!\!-\!\!\frac{\lambda_{n}^{t+1}}{2}\big)\mu\!\!+\!\!\lambda_{n}^{t+1}\Big)\mathbb{E}\left[\parallel\boldsymbol{\varpi}_{n}^{\ast}\!\!-\!\!\tilde{\boldsymbol{\varpi}}_{n}^{t}\parallel^{2}\right]        ,
\end{split}
\label{divengence4}
	\end{align}
	\end{subequations} }%
where (\ref{divengence1}) is based on (\ref{gradient}), (\ref{divengence3}) is obtained by first considering the $\mu$-strong convexity of $F_n(\cdot)$, followed by substituting (\ref{fn}) into (\ref{divengence3}). 
By substituting (\ref{divengence4}) into (\ref{PerOneStep00_2}), we obtain the upper bound of $\mathbb{E}\left[\parallel\tilde{\boldsymbol{\varpi}}_{n}^{t+1}-\boldsymbol{\varpi}_{n}^{\ast}\parallel^{2}\right]$ as 
% \vspace{-\baselineskip}
{
    \begin{subequations} \footnotesize
	\begin{align}
        \label{PerOneStep1_1}
        \begin{split}
\!\!\!\mathbb{E}&\left[\parallel\!\!\tilde{\boldsymbol{\varpi}}_{n}^{t+1}\!\!-\!\!\boldsymbol{\varpi}_{n}^{\ast}\!\!\parallel^{2}\!\right]\!\leq\!\varepsilon_{\mathrm{P},n}^{t+1}\mathbb{E}\left[\parallel\!\!\tilde{\boldsymbol{\varpi}}_{n}^{t}\!\!-\!\!\boldsymbol{\varpi}_{n}^{\ast}\!\!\parallel^{2}\!\right]\!\!+\!\!\big(\!\eta_{\mathrm{P},n}^{t+1}\!\big)^{2}\mathbb{E}\!\left[\!\parallel \!\!g(\tilde{\boldsymbol{\varpi}}_{n}^{t};\hat{\boldsymbol{\omega}}_{n,\mathrm{G}}^{t+1})\!\!\parallel^{2}\!\right] \\
&+2\eta_{\mathrm{P},n}^{t+1}\mathbb{E}\Big[f_{n}\big(\boldsymbol{\varpi}_{n}^{\ast};\hat{\boldsymbol{\omega}}_{n,\mathrm{G}}^{t+1}\big)-f_{n}\big(\tilde{\boldsymbol{\varpi}}_{n}^{t};\hat{\boldsymbol{\omega}}_{n,\mathrm{G}}^{t+1}\big)\Big] 
\end{split}
\\
% \label{PerOneStep1_2}
% \begin{split}
% =&\varepsilon_{\mathrm{P},n}^{t+1}\!\mathbb{E}\!\left[\parallel\tilde{\boldsymbol{\varpi}}_{n}^{t}\!\!-\!\!\boldsymbol{\varpi}_{n}^{\ast}\parallel^{2}\right]\!\!+\!\!2\eta_{\mathrm{P},n}^{t+1}\mathbb{E}\left[f_{n}\left(\boldsymbol{\varpi}_{n}^{\ast};\hat{\boldsymbol{\omega}}_{n,\mathrm{G}}^{t+1}\right)\!\!-\!\!f_{n}\left(\tilde{\boldsymbol{\varpi}}_{n}^{t};\hat{\boldsymbol{\omega}}_{n,\mathrm{G}}^{t+1}\right)\right] \\
% &+\!\left(\!\eta_{\mathrm{P},n}^{t\!+\!1}\!\right)^{2}\!\mathbb{E}\!\left[\parallel\left(1\!-\!\frac{\lambda_{n}^{t\!+\!1}}{2}\right)\nabla F_{n}(\tilde{\boldsymbol{\varpi}}_{n}^{t})\!\!+\!\!\lambda_{n}^{t\!+\!1}(\tilde{\boldsymbol{\varpi}}_{n}^{t}\!-\!\hat{\boldsymbol{\omega}}_{n,\mathrm{G}}^{t\!+\!1})\parallel^{2}\right]
% % \\
% % &
% \end{split}
% \\
\label{PerOneStep1_3}
\begin{split}
\leq&\varepsilon_{\mathrm{P},n}^{t+1}\mathbb{E}\left[\parallel\tilde{\boldsymbol{\varpi}}_{n}^{t}-\boldsymbol{\varpi}_{n}^{\ast}\parallel^{2}\right]+\big(\eta_{\mathrm{P},n}^{t+1}\big)^{2}\mathbb{E}\left[\parallel g(\tilde{\boldsymbol{\varpi}}_{n}^{t};\boldsymbol{\omega}^{\ast})\parallel^{2}\right]\\
&+\big(\eta_{\mathrm{P},n}^{t+1}\lambda_{n}^{t+1}\big)^{2}\mathbb{E}\big[\parallel\hat{\boldsymbol{\omega}}_{n,\mathrm{G}}^{t+1}-\boldsymbol{\omega}^{\ast}\parallel^{2}\big] \\
&+\!2\big(\eta_{\mathrm{P},n}^{t\!+\!1}\big)^{2}\lambda_{n}^{t\!+\!1}\sqrt{\mathbb{E}\left[\parallel g(\tilde{\boldsymbol{\varpi}}_{n}^{t};\boldsymbol{\omega}^{\ast})\parallel^{2}\right]}\sqrt{\mathbb{E}\big[\parallel\hat{\boldsymbol{\omega}}_{n,\mathrm{G}}^{t\!+\!1}\!-\!\boldsymbol{\omega}^{\ast}\parallel^{2}\big]} \\
&+2\eta_{\mathrm{P},n}^{t+1}\lambda_{n}^{t+1}\sqrt{\mathbb{E}\left[\parallel\tilde{\boldsymbol{\varpi}}_{n}^{t}-\boldsymbol{\varpi}_{n}^{\ast}\parallel^{2}\right]\mathbb{E}\big[\parallel\hat{\boldsymbol{\omega}}_{n,\mathrm{G}}^{t+1}-\boldsymbol{\omega}^{\ast}\parallel^{2}\big]} 
\end{split} 
\\
\label{PerOneStep1_4}
\begin{split}
\leq&\varepsilon_{\mathrm{P},n}^{t+1}\mathbb{E}\left[\parallel\tilde{\boldsymbol{\varpi}}_{n}^{t}-\boldsymbol{\varpi}_{n}^{\ast}\parallel^{2}\right]+\Psi_{n}^{t+1}\mathbb{E}\left[\parallel\hat{\boldsymbol{\omega}}_{n,\mathrm{G}}^{t+1}-\boldsymbol{\omega}^{\ast}\parallel^{2}\right]\\
&+\Big(1+(\lambda_{n}^{t+1})^{3}\Big)\big(\eta_{\mathrm{P},n}^{t+1}\big)^{2}\mathbb{E}\left[\parallel g(\tilde{\boldsymbol{\varpi}}_{n}^{t};\boldsymbol{\omega}^{\ast})\parallel^{2}\right],
\end{split}
	\end{align}
	\end{subequations} }%
 %where (\ref{PerOneStep1_3}) is due to $\mathbb{E}\left[ab\right]\leq \sqrt{\mathbb{E}\left[a^2\right]\mathbb{E}\left[b^2\right]}$, (\ref{PerOneStep1_4}) is due to $2ab \leq a^2+b^2$, 
 where (\ref{PerOneStep1_3}) and (\ref{PerOneStep1_4}) exploit Cauchy–Schwarz inequality.

We further establish the upper bounds for the squared distances between the PL model and the FL local model, and between the PL model and the optimal FL global model, and for the squared norm of the gradient of the PL model:

 \vspace{-\baselineskip}
 \begin{footnotesize}
       	\begin{align} 
        % \footnotesize
        \label{E1}
\mathbb{E}&\left[\parallel\tilde{\boldsymbol{\varpi}}_{n}^{t}-\boldsymbol{u}_{n}^{\ast}\parallel^{2}\right]\leq\frac{1}{\mu^{2}}\mathbb{E}\left[\parallel\nabla F_{n}(\tilde{\boldsymbol{\varpi}}_{n}^{t})\parallel^{2}\right]\leq\frac{G_{0}^{2}}{\mu^{2}} ;\\
	% \end{equation}
% \begin{align} 
        % \begin{aligned}
        \label{E2}
\mathbb{E}&\left[\parallel\tilde{\boldsymbol{\varpi}}_{n}^{t}-\boldsymbol{\omega}^{\ast}\parallel^{2}\right]=\mathbb{E}\left[\parallel\tilde{\boldsymbol{\varpi}}_{n}^{t}-\boldsymbol{u}_{n}^{\ast}+\boldsymbol{u}_{n}^{\ast}-\boldsymbol{\omega}^{\ast}\parallel^{2}\right]\nonumber\\
&\leq\mathbb{E}\left[\parallel\!\tilde{\boldsymbol{\varpi}}_{n}^{t}\!\!-\!\!\boldsymbol{u}_{n}^{\ast}\parallel^{2}\right]\!\!+\!\!\mathbb{E}\left[\parallel\! \boldsymbol{u}_{n}^{\ast}\!\!-\!\!\boldsymbol{\omega}^{\ast}\!\parallel^{2}\right]\!\!+\!\!2\mathbb{E}\left[\parallel\!\tilde{\boldsymbol{\varpi}}_{n}^{t}\!\!-\!\!\boldsymbol{u}_{n}^{\ast}\parallel\!\times \!\parallel \!\boldsymbol{u}_{n}^{\ast}\!\!-\!\!\boldsymbol{\omega}^{\ast}\!\parallel\right]\nonumber\\
&\leq\frac{G_{0}^{2}}{\mu^{2}}+M^{2}+\frac{2MG_{0}}{\mu} ;\\
% \end{aligned}
	% \end{align} 
 % \end{footnotesize}
%  \begin{align}
%         \label{E4}
%         {
% \mathbb{E}\left[\parallel\hat{\boldsymbol{\omega}}_{n,\mathrm{G}}^{t}-\boldsymbol{\omega}^{\ast}\parallel^{2}\right]&=\mathbb{E}\left[\parallel\hat{\boldsymbol{\omega}}_{n,\mathrm{G}}^{t}-\boldsymbol{u}_{n}^{\ast}+\boldsymbol{u}_{n}^{\ast}-\boldsymbol{\omega}^{\ast}\parallel^{2}\right]\nonumber\\
% \leq&\mathbb{E}\left[\parallel\tilde{\boldsymbol{\varpi}}_{n}^{t}-\boldsymbol{u}_{n}^{\ast}\parallel^{2}\right]+\mathbb{E}\left[\parallel \boldsymbol{u}_{n}^{\ast}-\boldsymbol{\omega}^{\ast}\parallel^{2}\right]\nonumber\\
% &+2\mathbb{E}\left[\parallel\tilde{\boldsymbol{\varpi}}_{n}^{t}-\boldsymbol{u}_{n}^{\ast}\parallel\times \parallel \boldsymbol{u}_{n}^{\ast}-\boldsymbol{\omega}^{\ast}\parallel\right]\nonumber\\
% \leq&\frac{G_{0}^{2}}{\mu^{2}}+M^{2}+\frac{2MG_{0}}{\mu} ;
% }
% 	\end{align} 
% \begin{footnotesize}
         % \begin{align}
         % \begin{aligned}
\mathbb{E}&\!\left[\!\parallel \!\!g(\tilde{\boldsymbol{\varpi}}_{n}^{t};\boldsymbol{\omega}^{\ast})\!\!\parallel^{2}\!\right]\!=\!\mathbb{E}\!\Big[\!\parallel\!\!\big(1\!-\!\frac{\lambda_{n}^{t+1}}{2}\!\big)\nabla F_{n}(\tilde{\boldsymbol{\varpi}}_{n}^{t})\!\!+\!\!\lambda_{n}^{t+1}(\!\tilde{\boldsymbol{\varpi}}_{n}^{t}\!\!-\!\!\boldsymbol{\omega}^{\ast}\!)\!\!\parallel^{2}\!\Big]\nonumber \\
        &\leq\!\!\big(\!1\!\!-\!\!\frac{\lambda_{n}^{t\!+\!1}}{2}\!\big)^{2}G_{0}^{2}\!\!+\!\!\big(\!\lambda_{n}^{t\!+\!1}\!\big)^{2}(\frac{G_{0}}{\mu}\!\!+\!\!M)^{2}
    % \nonumber\\
    %     &
        \!\!+\!\!2\big(\!1\!\!-\!\!\frac{\lambda_{n}^{t+1}}{2}\!\big)G_{0}\lambda_{n}^{t\!+\!1}(\frac{G_{0}}{\mu}\!\!+\!\!M)
            \nonumber\\
        &
    \triangleq G_n^{t+1} ,
        % \end{aligned}
        \label{E3} 
	\end{align}  
 \end{footnotesize}%
where (\ref{E1}) is due to the convexity of $F_n(\cdot)$ and the assumption that $\mathbb{E}\left[\parallel\nabla F_{n}({\boldsymbol{\omega}}^{t})\parallel^{2}\right]\leq G_{0}^{2}$.
 (\ref{E2}) is based on the Cauchy-Schwarz inequality and (\ref{E1}). Likewise, (\ref{E3}) is based on (\ref{E2}). {Similarly, we have $\mathbb{E}\left[\parallel\hat{\boldsymbol{\omega}}_{n,\mathrm{G}}^{t}-\boldsymbol{\omega}^{\ast}\parallel^{2}\right]\leq (\frac{G_{0}^{2}}{\mu}+M)^{2}$.}

By plugging (\ref{E1})--(\ref{E3}) into (\ref{PerOneStep1_4}), it readily follows that 
	\begin{equation} \footnotesize
	\begin{aligned}
        \label{PerOneStep2}
        \mathbb{E}\!&\left[\parallel\!\!\tilde{\boldsymbol{\varpi}}_{n}^{t+1}\!\!\!\!-\!\boldsymbol{\varpi}_{n}^{\ast}\!\!\parallel^{2}\right]
        \!\leq \varepsilon_{\mathrm{P},n}^{t+1}\mathbb{E}\left[\parallel\!\!\tilde{\boldsymbol{\varpi}}_{n}^{t}-\boldsymbol{\varpi}_{n}^{\ast}\!\!\parallel^{2}\right]\!\\
&+\!\big(1\!\!+\!\!(\lambda_{n}^{t+1})^{3}\big)\left(\eta_{\mathrm{P},n}^{t+1}\right)^{2}G_n^{t+1}\!\!+\!\!\Psi_{n}^{t+1}\mathbb{E}\left[\parallel\hat{\boldsymbol{\omega}}_{n,\mathrm{G}}^{t+1}\!\!-\!\!\boldsymbol{\omega}^{\ast}\parallel^{2}\right],
\end{aligned}
	\end{equation} 
	
% where $\varepsilon_{\mathrm{P}}=1-\eta_{\mathrm{P},n}^{t+1}\left(\left(1-\frac{\lambda_{n}^{t+1}}{2}\right)\mu+\lambda_{n}^{t+1}\right)+\eta_{\mathrm{P},n}^{t+1}$.
    Based on \textbf{Theorem \ref{theorem1}} and (\ref{PerOneStep2}), it follows that
    {
\begin{subequations} \footnotesize
\label{per_con}
\begin{align}
    \label{di_per_t2_1}
\begin{split}
\mathbb{E}&\left[\parallel\tilde{\boldsymbol{\varpi}}_{n}^{t+1}-\boldsymbol{\varpi}_{n}^{\ast}\parallel^{2}\right]\leq \varepsilon_{\mathrm{P},n}^{t+1}\mathbb{E}\left[\parallel\tilde{\boldsymbol{\varpi}}_{n}^{t}-\boldsymbol{\varpi}_{n}^{\ast}\parallel^{2}\right]\\
&\!+\!\!\big(\!1\!\!+\!\!(\lambda_{n}^{t+1})^{3}\!\big)\!\!\left(\!\eta_{\mathrm{P},n}^{t+1}\!\right)^{2}\!\!G_n^{t+1}\!+\!\!\Psi_{n}^{t+1}\big( \!h_1(\rho_{n,\mathrm{G}}^{t+1})\Theta_{\mathrm{L}}^{t}\\
&+\Gamma_{0}\rho_{n,\mathrm{G}}^{t+1}+\Gamma_{1}\!\!+\!\!\frac{1}{|\mathcal{N}_{t}|}\!{\sum}_{n\in\mathcal{N}_{t}} \varepsilon_{\mathrm{F},n}^t\mathbb{E}\left[\!\parallel\!\hat{\boldsymbol{\omega}}_{n,\mathrm{G}}^{t}\!\!-\!\!\boldsymbol{\omega}^{\ast}\!\parallel^{2}\right]\big) ,
\end{split}
\\
\label{di_per_t2_2}
% \begin{split}
 &=\varepsilon_{\mathrm{P},n}^{t+1}\mathbb{E}\left[\parallel\tilde{\boldsymbol{\varpi}}_{n}^{t}-\boldsymbol{\varpi}_{n}^{\ast}\parallel^{2}\right]\!+ \Phi_n^{t+1}.
% \end{split}
\end{align}
\end{subequations}
}
Then, \textbf{Theorem \ref{theorem_PL_con}} follows.

\subsection{Proof of \textbf{Theorem \ref{Convergence_t2}}}
\label{Convergence_t2_proof}
For the brevity of notation, we define
% \begin{subequations}
    \begin{equation} \footnotesize
         \label{ACDE_t}
B_{t+1}\triangleq\big(1+(\lambda_{n}^{t+1})^{3}\big)\big(\eta_{\mathrm{P},n}^{t+1}\big)^{2}G_{n}^{t+1}.
         % \\
% D_{t+1}&\triangleq\left(\left(\eta_{\mathrm{P},n}^{t+1}\right)^{2}+1\right)\left(\lambda_{n}^{t+1}\right)^{2}+\frac{\left(\eta_{\mathrm{P},n}^{t+1}\right)^{3}}{\lambda_{n}^{t+1}},\\
        % \Gamma_{t+1}&\triangleq h_{1}(\rho_{n,\mathrm{G}}^{t+1})\Theta_{\mathrm{L}}^{t}+\Gamma_{0}\rho_{n,\mathrm{G}}^{t+1}+\Gamma_{1}.
    \end{equation}%
% \end{subequations}
Let $B^{\max}$ and $\Psi^{\max}$ denote the maxima of $B_{t+1}$ and $\Psi_{n}^{t+1}$, respectively. 
Based on (\ref{lambda}) and (\ref{element_error_pr}), $B^{\max}$, $\Psi^{\max}$, and $\Gamma^{\max}$ exist and are unique since 
%$\eta_{\mathrm{P},n}^{t+1} \in \Omega_0$ and 
$e_{n,k,\mathrm{L}}^{t}<1$, $\forall n,t$ in (\ref{SER}).
% for any round $t$ and client~$n$. 
The maximum of $\Phi_{n}^{t+1}$, denoted as $\Phi^{\max}$, exists and is unique; i.e., $\Phi^{\max}=B^{\max}+\Psi^{\max}(\Gamma^{\max}+(\frac{G_0^2}{\mu}+M)^2 \varepsilon_{\mathrm{F}}^{\max})$.
% due to $\Phi_{n}^{t+1}=B_{t+1}+\Psi_{n}^{t+1}(\Gamma_{t+1}+\frac{(G_{0}^{2}+M\mu)^{2}}{|\mathcal{N}_{t}|\mu^{2}}\!\!\underset{n\in\mathcal{N}_{t}}{\sum} \!\!\varepsilon_{\mathrm{F},n}^t)$ and $\frac{(G_{0}^{2}+M\mu)^{2}}{|\mathcal{N}_{t}|\mu^{2}}\!\!\underset{n\in\mathcal{N}_{t}}{\sum} \!\!\varepsilon_{\mathrm{F},n}^t\leq (\frac{G_0^2}{\mu}+M)^2 \varepsilon_{\mathrm{F}}^{\max}$.

By substituting (\ref{ACDE_t}) into (\ref{PerOneStep2}) and (\ref{down_glb_di}), it follows that

\vspace{-\baselineskip}
\begin{subequations} \footnotesize
    \begin{align}
        \label{con_t2_1}
        \mathbb{E}&\left[\parallel\tilde{\boldsymbol{\varpi}}_{n}^{t+1}\!\!-\!\!\boldsymbol{\varpi}_{n}^{\ast}\parallel^{2}\right] \leq \!\varepsilon_{\mathrm{P}}^{\max}\mathbb{E}\left[\parallel\tilde{\boldsymbol{\varpi}}_{n}^{t}\!\!-\!\!\boldsymbol{\varpi}_{n}^{\ast}\parallel^{2}\right] \!\!+\!\!\Phi^{\max}\\
        % &\leq \!\varepsilon_{\mathrm{P}}^{\max}\mathbb{E}\!\left[\parallel\!\!\boldsymbol{\varpi}_{n}^{t}\!\!-\!\!\boldsymbol{\varpi}_{n}^{\ast}\!\!\parallel^{2}\right]\!\!+\!\!h_{0}^{\max}\mathbb{E}\!\left[\parallel\!\!\hat{\boldsymbol{\omega}}_{\mathrm{G}}^{t}\!\!-\!\!\boldsymbol{\omega}^{\ast}\!\!\parallel^{2}\right]\!\!+\!\!\Phi^{\max}
        % \\
        &\leq \left(\varepsilon_{\mathrm{P}}^{\max}\right)^{t+1}\mathbb{E}\left[\parallel{\boldsymbol{\varpi}}_{n}^{0}\!\!-\!\!\boldsymbol{\varpi}_{n}^{\ast}\parallel^{2}\right]\!\!+\!\!\Phi^{\max}{\sum}_{i=0}^{t}\left(\varepsilon_{\mathrm{P}}^{\max}\right)^{i}
        \label{con_t2_2}
        \\
        \label{con_t2_3}
        &=\!\left(\varepsilon_{\mathrm{P}}^{\max}\right)^{t+1}\mathbb{E}\left[\parallel{\boldsymbol{\varpi}}_{n}^{0}\!\!-\!\!\boldsymbol{\varpi}_{n}^{\ast}\parallel^{2}\right]\!\!+\!\!\frac{(\varepsilon_{\mathrm{P}}^{\max})^{t+1}\!\!-\!\!1}{\varepsilon_{\mathrm{P}}^{\max}-1}\Phi^{\max}.
    \end{align}
\end{subequations}%
According to (\ref{con_t2_3}), with $\varepsilon_{\mathrm{P}}^{\max}<1$, WPFL under imperfect channels converges as $t$ increases. After $T$ aggregations, the convergence upper bound of the PL model is (\ref{T_convergence}).

\subsection{Proof of \textbf{Theorem \ref{theorem4}}}
\label{theorem4_proof}
% \begin{spacing}{1}
    For conciseness, $\eta_{\mathrm{P},n}^{t+1}$ and  $\lambda_{n}^{t+1}$ are written as $\eta$ and $\lambda$, respectively. Based on (\ref{Phi_n}) and (\ref{lambda}), the second derivative of $\Phi_{n}^{t+1}$ with respect to $\eta$ is given by

    \vspace{-\baselineskip}
    % \end{spacing}
    \begin{footnotesize}
    \begin{align}
        &\frac{\partial^{2}\Phi_{n}^{t\!+\!1}}{\partial \eta^{2}}\!=\!\frac{6{a_0}{\lambda}^{2}}{{\eta}^{2}}\Big(MH_1
        \!\!+\!\!\big((2{\eta}^{2}\!\!-\!\!\frac{3}{4}{b_0})^{2}\!\!+\!\!\frac{7}{16}b_{0}^{2}\big)(\frac{2\!-\!\mu}{2\mu}G_{0})\Big)\nonumber\\
        &\!+\!{2(1\!+\!{\lambda}^{3})}H_2\!+\!\frac{2{\eta}^{6}H_3\!+\!2a_{0}^{3}({b_0}+{\eta}(-{\mu}+{\eta}))^{3}H_4}{{a_0}{\eta}^{4}({b_0}-{\mu}{\eta}+{\eta}^{2})^{3}}  (\Gamma_{2}\rho_{n,\mathrm{G}}^{t+1}+\Gamma_{3}) , \nonumber
    \end{align}
    \end{footnotesize}%
    where $a=(1-\frac{\mu}{2})^{-1}$, $b=1-\varepsilon_{\mathrm{P}}^{t+1}$, and $Q=(\frac{1}{\mu}-\frac{1}{2})G_{0}+M$, and $H_i$, $i=1,2,3,4$, are given by
    {
    \begin{subequations} \footnotesize
        \begin{align}
            \label{H1}
            H_1&=4{\eta}^{4}-2\mu {\eta}^{3}-3{b_0}{\eta}^{2}+{b_0}\mu {\eta}+{b_0}^{2}; \\
            \label{H2}
            \begin{split}
                H_2&=6({a_0}Q)^{2}{\eta}^{2}+\left(6G_{0}({a_0}Q)-6\mu ({a_0}Q)^{2}\right){\eta} 
                \\
        &+\left(\left(2{b_0}+\mu^{2}\right)({a_0}Q)^{2}-2({a_0}Q)\mu G_{0}+G_{0}^{2}\right);
            \end{split}
            \\
            \label{H3}
            H_3&=6{b_0}^{2}+{b_0}{\eta}(-8{\mu}+3{\eta})+{\eta}^{2}(3{\mu}^{2}-3{\mu}{\eta}+{\eta}^{2});\\
            \label{H4}
            H_4&=3{b_0}^{2} \!\!+\!\!{\eta}^{4}(1\!\!+\!\!{\mu}^{2}\!\!-\!\!6{\mu}{\eta}\!\!+\!\!6{\eta}^{2})\!\!+\!\!{b_0}(-2{\mu}{\eta}\!\!+\!\!2{\eta}^{4}). 
        \end{align}
    \end{subequations}}%
    By analyzing the monotonicity of $H_i$, $i=1,2,3,4$, and comparing their minima with respect to $\eta \in \Omega_0^{t+1}\cup\Omega_1^{t+1}$, it can be found that $H_1$-$H_4$ are positive with $\eta \in \Omega_0^{t+1}\cup\Omega_1^{t+1}$. Therefore, the second derivative of $\Phi_n^t$ is positive in $\eta \in \Omega_0^{t+1}\cup\Omega_1^{t+1}$. This concludes this proof.

% \bibliography{DittoDP}

\end{document}